\apptocmd{\sloppy}{\hbadness 10000\relax}{}{} 
\newtheorem{lemma}{Lemma}
\newtheorem{remark}{Remark}
\newtheorem{definition}{Definition}
\pgfplotsset{compat=newest, compat/show suggested version=false} 
\tikzstyle{gbox}=[box, draw=black, shape=rectangle, fill={zx_green}]
\tikzstyle{hbox}=[box, draw=black, shape=rectangle, fill=yellow, minimum size=.55em]
\tikzstyle{box}=[draw=black, shape=rectangle, fill=white, minimum size=.95em, inner sep=0.15em, scale=0.85, font={\footnotesize}]
\tikzstyle{gn}=[draw=black, shape=circle, fill={zx_green}, draw=black, inner sep=0.7mm, minimum width=0pt, minimum height=0pt]
\tikzstyle{rn}=[gn, fill={zx_pink}, draw=black, tikzit fill={rgb,255: red,232; green,165; blue,165}]
\tikzstyle{gn_phase}=[shape=rectangle, fill={zx_green}, draw=black, minimum size=1.2em, rounded corners=0.5em, inner sep=0.2em, outer sep=-0.2em, scale=0.8, font={\footnotesize\boldmath}, tikzit shape=circle, tikzit fill={rgb,255: red,181; green,215; blue,181}]
\tikzstyle{rn_phase}=[{gn_phase}, fill={zx_pink}, draw=black, tikzit fill={rgb,255: red,215; green,96; blue,96}]
\tikzstyle{rtriang}=[shape=isosceles triangle, fill=yellow, draw=black, isosceles triangle stretches=true, inner sep=0.8pt, minimum width=0.25cm, minimum height=2mm]
\tikzstyle{ltriang}=[rtriang, shape=isosceles triangle, fill=yellow, draw=black, shape border rotate=180]
\tikzstyle{utriang}=[rtriang, shape=isosceles triangle, fill=yellow, draw=black, shape border rotate=90]
\tikzstyle{dtriang}=[rtriang, shape=isosceles triangle, fill=yellow, draw=black, shape border rotate=-90]
\tikzstyle{lmat}=[shape=signal, signal to=west, signal from=east, fill={zx_grey}, draw=black, minimum height=6pt, inner sep=1pt, font={\scriptsize  \boldmath}, tikzit fill=gray, tikzit category=GLA]
\tikzstyle{rmat}=[shape=signal, signal to=east, signal from=west, fill={zx_grey}, draw=black, minimum height=6pt, inner sep=1pt, font={\scriptsize  \boldmath}, tikzit fill=gray, tikzit category=GLA]
\tikzstyle{dmat}=[shape=signal, signal to=west, signal from=east, fill={zx_grey}, draw=black, minimum height=6pt, inner sep=1pt, font={\scriptsize  \boldmath}, tikzit fill=gray, tikzit category=GLA, rotate=90]
\tikzstyle{umat}=[shape=signal, signal to=east, signal from=west, fill={zx_grey}, draw=black, minimum height=6pt, inner sep=1pt, font={\scriptsize  \boldmath}, tikzit fill=gray, tikzit category=GLA, rotate=90]
\tikzstyle{uw}=[shape=isosceles triangle, isosceles triangle stretches=true, fill=black, draw=black, minimum width=0.4cm, minimum height=3mm, inner sep=1pt, shape border rotate=90]
\tikzstyle{dw}=[shape=isosceles triangle, isosceles triangle stretches=true, fill=black, draw=black, minimum width=0.4cm, minimum height=3mm, inner sep=1pt, shape border rotate=-90]
\tikzstyle{d_split}=[shape=trapezium, fill=white, draw=black, minimum width=11pt, inner sep=1pt]
\tikzstyle{d_merge}=[shape=trapezium, fill=white, draw=black, minimum width=11pt, inner sep=1pt, rotate=180]
\tikzstyle{braceedge}=[decorate, decoration={brace, amplitude=2mm, raise=-1mm}]
\tikzstyle{wire label}=[font={\scriptsize}, auto]
\definecolor{zx_grey}{RGB}{211,211,211}
\definecolor{zx_pink}{RGB}{255, 130, 160}
\definecolor{zx_green}{RGB}{216,248,216}
\newcommand{\Z}{\mathbb{Z}}
\renewcommand{\C}{\mathbb{C}}}
\newcommand{\C}{\mathbb{C}}}
\newcommand{\minu}{\texttt{-}}
\newcommand{\plus}{\texttt{+}}
\newcommand{\interp}[1]{\left\llbracket#1\right\rrbracket}
\newcommand{\bR}{\begin{color}{red}}
\newcommand{\bB}{\begin{color}{blue}}
\newcommand{\bM}{\begin{color}{magenta}}
\newcommand{\bC}{\begin{color}{cyan}}
\newcommand{\bW}{\begin{color}{white}}
\newcommand{\bBl}{\begin{color}{black}}
\newcommand{\bG}{\begin{color}{green}}
\newcommand{\bY}{\begin{color}{yellow}}
\newcommand{\e}{\end{color}}
\newcommand{\tikzrefsize}[1]{\scriptsize{#1}}
\newcommand{\axref}[1]{\tikzrefsize{\eqref{rule:#1}}}
\newcommand{\lemref}[1]{\tikzrefsize{\eqref{#1}}}
\newcommand{\Mod}[1]{\ (\mathrm{mod}\ #1)}
\title{Completeness for arbitrary finite dimensions of ZXW-calculus, a unifying calculus}
\date{}
\author{
  Boldizsár Poór$\null^{1}$ \\[5pt]
  Lia Yeh$\null^{1,2}$ \and
  Quanlong Wang$\null^{1}$ \\[5pt]
  Richie Yeung$\null^{1,2}$ \and
  Razin A. Shaikh$\null^{1,2}$ \\[5pt]
  Bob Coecke$\null^{1}$ \and
  \institute{$\null^{1}$Quantinuum, 17 Beaumont Street, Oxford, OX1 2NA, United Kingdom}
  \institute{$\null^{2}$University of Oxford, Oxford, United Kingdom}
}
\begin{document}

\maketitle

\begin{abstract}
  
The ZX-calculus is a universal graphical language for qubit quantum computation, meaning that every linear map between qubits can be expressed in the ZX-calculus.
Furthermore, it is a complete graphical rewrite system: any equation involving linear maps that is derivable in the Hilbert space formalism for quantum theory can also be derived in the calculus by rewriting.
It has widespread usage within quantum industry and academia for a variety of tasks such as quantum circuit optimisation, error-correction, and education.

The ZW-calculus is an alternative universal graphical language that is also complete for qubit quantum computing.
In fact, its completeness was used to prove that the ZX-calculus is universally complete.
This calculus has advanced how quantum circuits are compiled into photonic hardware architectures in the industry.

Recently, by combining these two calculi, a new calculus has emerged for qubit quantum computation, the \emph{ZXW-calculus}.
Using this calculus, graphical\mbox{-}differentiation, \mbox{-}integration, and \mbox{-}exponentiation were made possible, thus enabling the development of novel techniques in the domains of quantum machine learning and quantum chemistry.

Here, we generalise the ZXW-calculus to arbitrary finite dimensions, that is, to \emph{qudits}.
Moreover, we prove that this graphical rewrite system is complete for any finite dimension.
This is the first completeness result for any universal graphical language beyond qubits.

\end{abstract}

\section{Introduction}\label{sec:introduction}

Graphical languages~\cite{selingerSurveyGraphicalLanguages2011} have become an influential tool in computer science~\cite{bonchiStringDiagramRewrite2022a} and in quantum computing~\cite{abramskyCategoricalQuantumMechanics2008, heunenCategoriesQuantumTheory2019}.
The ZX-calculus~\cite{coeckeInteractingQuantumObservables2008, coeckeInteractingQuantumObservables2011} is one such graphical language for reasoning about quantum computation.
It has become an immensely useful tool in the industry and is used in major quantum computing companies such as Quantinuum, IBM, Google, PsiQuantum, Quandela, and many more.
Since its introduction, the ZX-calculus has been successful in many areas such as quantum circuit optimisation~\cite{debeaudrapFastEffectiveTechniques2020, debeaudrapTechniquesReducePi2020, kissingerReducingTcountZXcalculus2020}, quantum error correction~\cite{debeaudrapZXCalculusLanguage2020, kissingerPhasefreeZXDiagrams2022, khesinGraphicalQuantumCliffordencoder2023}, measurement-based quantum computation~\cite{coeckeInteractingQuantumObservables2007, duncanGraphStatesNecessity2009, kissingerUniversalMBQCGeneralised2019}, quantum natural language processing~\cite{coeckeFoundationsNearTermQuantum2020, meichanetzidisQuantumNaturalLanguage2021}, quantum simulation~\cite{kissingerClassicalSimulationQuantum2022}, quantum foundations~\cite{coeckePhaseGroupsOrigin2011, backensCompleteGraphicalCalculus2016}, cognition~\cite{signorelliCompositionalModelConsciousness2021}, and quantum education~\cite{coeckeQuantumPicturesExperiment2022, coeckeQuantumPictures2022}.
More detailed accounts on the ZX-calculus can be found in~\cite{coeckePicturingQuantumProcesses2017, vandeweteringZXcalculusWorkingQuantum2020} and a more extensive overview of its applications in~\cite{coeckeKindergardenQuantumMechanics2022}.

On the other hand, another graphical calculus, the ZW-calculus~\cite{coeckeThreeQubitEntanglement2011, hadzihasanovicDiagrammaticAxiomatisationQubit2015} is useful for studying multi-partite entanglement~\cite{coeckeThreeQubitEntanglement2011}, for describing fermionic quantum computing~\cite{ngDiagrammaticCalculusFermionic2019}, interactions in quantum field theory~\cite{shaikhCategoricalSemanticsFeynman2022}, and linear optical quantum computing~\cite{defeliceQuantumLinearOptics2022}.
These calculi have different applications because the ZW-calculus naturally expresses the summation of diagrams~\cite{coeckeGHZWcalculusContains2011}, while the ZX-calculus is more natural for gate-based and measurement-based quantum computing~\cite{coeckeInteractingQuantumObservables2007}.

In the \hbox{\emph{ZXW-calculus}}, we unify the ZX and ZW calculi in a single framework (\cref{fig:zxw}), by combining the generators and rules of both, and we also introduce new rules involving the interaction of the X- and W- generators.
This unified calculus allows us to utilise techniques native to both of these calculi, and get the best of both worlds.
This was not attempted until now as it was expected that the X- and W- generators would not play well together, but that turns out not to be the case.
\begin{figure}
  \begingroup

\center
\tikzstyle{hadamard edge}=[-, dashed, dash pattern=on 2pt off 0.5pt, thick, draw={rgb,255: red,68; green,136; blue,255}]

\makeatletter
\newsavebox\saved@arstrutbox
\newcommand*{\setarstrut}[1]{%
  \noalign{%
    \begingroup
    \global\setbox\saved@arstrutbox\copy\@arstrutbox
    #1%
    \global\setbox\@arstrutbox\hbox{%
      \vrule \@height\arraystretch\ht\strutbox
      \@depth\arraystretch \dp\strutbox
      \@width\z@
    }%
    \endgroup
  }%
}
\newcommand*{\restorearstrut}{%
  \noalign{%
    \global\setbox\@arstrutbox\copy\saved@arstrutbox
  }%
}
\makeatother

\tabulinestyle {.75pt}
\tabulinesep=2mm
\setlength{\tabcolsep}{2pt}
\newcommand{\lastlinenewstrech}{.35}

\begin{tabu}
  to \textwidth {|c X[c] c|p{1cm}|c X[c] c|}
  \cline{1-3} \cline{5-7}
  & \Large{\textbf{ZX-calculus} is good at:}                                             & & & & \Large{\textbf{ZW-calculus} is good at:}                 & \\
  \cline{2-2} \cline{6-6} \tabuphantomline
  & \multicolumn{1}{|c|}{\footnotesize{Quantum Circuit Optimisation}}                    & & & & \multicolumn{1}{|c|}{\footnotesize{Sums}}                                      & \\
  & \multicolumn{1}{|c|}{\scalebox{1.15}{\tikzfig{introduction/optimisation}}}           & & & & \multicolumn{1}{|c|}{\scalebox{1.1}{\tikzfig{introduction/summation}}}         & \\
  \cline{2-2} \cline{6-6}
  & \multicolumn{1}{|c|}{\footnotesize{Measurement-Based Quantum Computing}}             & & & & \multicolumn{1}{|c|}{\footnotesize{Quantum Optics}}          & \\
  & \multicolumn{1}{|c|}{\scalebox{.67}{\tikzfig{introduction/graph-like-to-MBQC-form}}}             & & & & \multicolumn{1}{|c|}{\scalebox{.665}{\tikzfig{introduction/path-normal-form}}} & \\
  \cline{2-2} \cline{6-6}
  \setarstrut{\renewcommand*{\arraystretch}{\lastlinenewstrech}}
  &                                                                                      & & & &                                                                                & \\
  \cline{1-3} \cline{5-7}
  \restorearstrut
\end{tabu}
\usetikzlibrary{arrows.meta}
\begin{tikzpicture}[my triangle/.style={-{Triangle[width=\the\dimexpr2\pgflinewidth,length=\the\dimexpr3\pgflinewidth]}}]
  \draw[line width=3pt,my triangle](0,0) -- (-1, 1.5);
  \node [style=none] at (-2, 0.6) {\large Eliminate W};
  \draw[line width=3pt,my triangle](6,0) -- (7, 1.5);
  \node [style=none] at (8, 0.6) {\large Eliminate X};
\end{tikzpicture} \\
\begin{tabu}
  to \textwidth {|c c c|}
  \tabucline-
  & \Large{\textbf{ZXW-calculus} is good at the above plus:}        & \\
  \cline{2-2}
  & \multicolumn{1}{|c|}{\footnotesize{Hamiltonian Exponentiation}} & \\
  & \multicolumn{1}{|c|}{\scalebox{.9}{
    \tikzfig{introduction/ExtractionHamiltonianExponential}
  }} & \\
  \cline{2-2}
  & \multicolumn{1}{|c|}{\footnotesize{Differentiation (QML)}}      & \\
  & \multicolumn{1}{|c|}{\scalebox{.82}{
    $\dfrac{\partial}{\partial\theta}
    \left[
      \tikzfig{introduction/graphstate}
      \right]
    \quad = \quad
    \tikzfig{introduction/derivativegraphstate}$
  }} & \\
  \cline{2-2}
  \setarstrut{\renewcommand*{\arraystretch}{\lastlinenewstrech}}
  &                                                                 & \\
  \tabucline-
  \restorearstrut
\end{tabu}

\endgroup
  \caption{Example applications of the ZXW-calculus and its relation to the ZX and ZW calculi.}
  \label{fig:zxw}
\end{figure}

A graphical language for quantum computing, or any graphical language for that matter, typically consists of two parts, generators and rewrite rules.
There are three essential properties of a robust graphical language for quantum computing:
\begin{enumerate}
  \item \emph{universality}: every linear map of type $2^n \to 2^m$ can be expressed in the language,
  \item \emph{soundness}: the rules of the calculus respect multilinear algebra,\footnote{That is, linear algebra with tensor products.} and
  \item \emph{completeness}: any equation involving linear maps that is derivable in multilinear algebra, can also be derived in the graphical language by rewriting.
\end{enumerate}
Of these, completeness is the most difficult to prove.

\subsection{Related work}

\subsubsection{History of completeness results}

Although the qubit ZX-calculus was first formulated in 2007~\cite{coeckeInteractingQuantumObservables2007}, the crucial feature of completeness for all linear maps on qubits was not proved until 2017~\cite{ngUniversalCompletionZXcalculus2017}.
This moreover went in several stages of increasing the fragment for which completeness had been obtained~\cite{backensZXcalculusCompleteStabilizer2014, backensZXcalculusCompleteSinglequbit2014, jeandelCompleteAxiomatisationZXCalculus2018, hadzihasanovicTwoCompleteAxiomatisations2018}.
It required adding a number of rules following several observations of incompleteness~\cite{coeckeThreeQubitEntanglement2011,  duncanGraphStatesNecessity2009, dewittZXcalculusIncompleteQuantum2014}.
These were subsequently simplified to only one additional rule~\cite{coeckeZXRules2QubitClifford2018, vilmartNearMinimalAxiomatisationZXCalculus2019}.
It was a highly non-trivial process but now each rule is concise and intuitive.

In fact, the first completeness results for the whole of the qubit ZX-calculus~\cite{ngUniversalCompletionZXcalculus2017, jeandelDiagrammaticReasoningClifford2018} was achieved through transferring the completeness proof of the qubit ZW-calculus~\cite{hadzihasanovicDiagrammaticAxiomatisationQubit2015}, which was obtained earlier in 2015.
While this completeness result for the ZW-calculus was an incredible achievement, what made it feasible to obtain it in one go is the fact that the W-generator enables one to easily express sums, and hence easily express the matrices of linear algebra.

\subsubsection{Higher-dimensional quantum graphical calculi}

In contrast to the conventional 2-level qubits, \emph{qudits} -- which are multi-level units of quantum information -- have gained attention as a topic of study and commercial interest.
It provides a larger state space for processing information, and thus has simplified quantum algorithms~\cite{wangQuditsHighDimensionalQuantum2020}, improved quantum error correction~\cite{campbellEnhancedFaulttolerantQuantum2014}, and enables native quantum simulation of certain physical systems~\cite{banulsSimulatingLatticeGauge2020}.
In recent years, qudit-based quantum computation has been realised with ion traps~\cite{ringbauerUniversalQuditQuantum2022, hrmoNativeQuditEntanglement2022}, photonic devices~\cite{chiProgrammableQuditbasedQuantum2022}, and superconducting devices~\cite{blokQuantumInformationScrambling2021, yeCircuitQEDSinglestep2018, yurtalanImplementationWalshHadamardGate2020, hillRealizationArbitraryDoublycontrolled2021}.

In addition to qubit graphical calculi, there is a growing body of literature on the creation and development of higher-dimensional quantum graphical calculi.
The first qutrit (3-level qudit) ZX-calculus~\cite{wangQutritDichromaticCalculus2014} and the first qudit ZX-calculus~\cite{ranchinDepictingQuditQuantum2014} were introduced in 2014.
Universality of qudit ZX-calculus for all qudit linear maps was shown in 2014~\cite{wangQutritDichromaticCalculus2014}.
Through a translation between qudit ZX-calculus and (non-anyonic) qudit ZW-calculus, qudit ZW-calculus was also shown to be universal~\cite{wangNonanyonicQuditZWcalculus2021}.
Meanwhile, completeness of the qutrit ZX-calculus for the stabilizer fragment was proved in 2018~\cite{wangQutritZXcalculusComplete2018}, and for the stabilizer fragment for all prime dimension qudits in 2022~\cite{boothCompleteZXcalculiStabiliser2022}.

Recently, these qudit graphical calculi have been applied to achieve new results in qudit circuit synthesis.
By leveraging the qutrit ZX-calculus, novel qutrit constructions have been elucidated from their qubit ZX-calculus counterparts, for instance, discovering a qutrit unitary that implements the qubit CCZ gate using only three single-qudit non-Clifford gates~\cite{vandeweteringPhaseGadgetCompilation2022}.
Another recent example is the application of qutrit ZX-calculus to synthesizing two-qutrit controlled phase gates on superconducting qutrits~\cite{caoQutritZXCalculusSuperconducting2022}.

\subsubsection{Application of ZXW-calculus}

The ZXW-calculus has been used to realise the differentiation and integration of ZX-diagrams which has been used to analyse barren plateaus in context of quantum machine learning~\cite{wangDifferentiatingIntegratingZX2022}.
It is also possible to express Hamiltonian summation and exponentiation in the ZXW-calculus that sets up the framework to solve problems in quantum chemistry and condensed matter physics~\cite{shaikhHowSumExponentiate2022}.

\subsection{Our contributions}\label{subsec:our-contributions}

As a first step towards achieving our main objective, we introduce the ZXW-calculus for arbitrary finite dimensions in \cref{sec:zxw_calculus}.
Furthermore, we present its unique normal form and give a set of rewrite rules for the calculus.
In \cref{sec:completeness}, we prove that our set of rules is complete for all finite dimensions.
Lastly, we discuss our results and its applications at present and in the foreseeable future in \cref{sec:future}.

With the ZXW-calculus, here we achieve the first completeness result for universal quantum computation in arbitrary finite dimensions.
We give a procedure to rewrite one diagram to another whenever they evaluate to the same linear map.
Furthermore, we prove that any diagram can be transformed into its unique normal form diagram, originally given in~\cite{wangQufiniteZXcalculusUnified2022}.
This is accomplished using only a finite set of diagrammatic rules, which are presented in \cref{subsec:rule-set}.
This makes the ZXW-calculus a powerful tool for reasoning about qudit quantum computing.

\section{ZXW-Calculus}\label{sec:zxw_calculus}
In this section, we give an introduction to the general notion of a graphical calculus, and specifically the ZXW-calculus for qudits, its generators, their standard interpretation, and a complete set of rules.

\subsection{A general graphical calculus}\label{subsec:graphical-calculus}

A graphical calculus is a mathematical language for expressing and reasoning about processes using graphical representations (i.e.\@ diagrams) whilst maintaining the rigour of having an explicit mathematical interpretation for each.
By working with a graphical representation, we obtain higher expressivity, enabling a cleaner presentation of the underlying structure than if we were to rely solely on conventionally used symbolic notation.
The diagrams of a graphical calculus are composed of boxes and wires, which can be manipulated according to a set of equalities called graphical rewrite rules.
In this section, we explain the basic idea of a general graphical calculus.
For more details, a thorough discussion on the topic is presented in~\cite{coeckeGeneralisedCompositionalTheories2016}.

The basic building blocks of a graphical calculus are called \emph{generators}:
\[
  \tikzfig{graphical-calculi/generators}
\]
We can construct more complex diagrams from these generators by composing them sequentially or in parallel.
\emph{Sequential composition} (denoted by $\circ$) of two diagrams means connecting the outputs of one diagram to the inputs of the other:
\[
  \tikzfig{graphical-calculi/generalboxf}\ \circ \ \tikzfig{graphical-calculi/generalboxg}
  \quad \coloneqq \quad
  \tikzfig{graphical-calculi/sequelcomposition}
\]
\emph{Parallel composition} (denoted by $\otimes$) of two diagrams is represented by placing the two diagrams next to each other:
\[
  \tikzfig{graphical-calculi/generalboxf} \ \otimes \ \tikzfig{graphical-calculi/generalboxg}
  \quad \coloneqq \quad
  \tikzfig{graphical-calculi/parallelcomposition}
\]

In addition to constructing diagrams, we can also reason about them by transforming one into another.
This can be done by replacing any sub-diagram within a complex diagram according to a set of graphical rewrite rules, that is,
\[
  \tikzfig{graphical-calculi/rewritedm1LHS}
  \quad = \quad
  \tikzfig{graphical-calculi/rewritedm1RHS}
\]
given that
\[
  \tikzfig{graphical-calculi/rewritedm2LHS}
  \quad = \quad
  \tikzfig{graphical-calculi/rewritedm2RHS}\,.
\]
In particular, a lot of well-known graphical calculi, including the graphical calculus used in this paper, obey the following structural rules that allow us to move diagrams around freely:
\begin{gather*}
  \tikzfig{graphical-calculi/compactstructure_1}
  \qquad \qquad
  \tikzfig{graphical-calculi/compactstructure_2}
  \qquad \qquad
  \tikzfig{graphical-calculi/compactstructure_3}
\end{gather*}

Lastly, to relate a graphical calculus to a system of interest, one needs to give an interpretation to its diagrams such that the sequential and parallel compositions are preserved.
In the case of quantum computation, the interpretation of diagrams is matrices, the sequential composition is matrix multiplication, and parallel composition is the tensor product.

\subsection{The ZXW-calculus}\label{subsec:the-zxw-calculus}

In this section, we explore a specific graphical calculus called the ZXW-calculus.
Throughout the paper, $d$ is the dimension of the calculus, an arbitrary integer greater than one.
We define the interpretation functor $\interp{\cdot}: \textbf{ZXW}_d \to \mathbf{Vect}_d$, where $\textbf{ZXW}_d$ is the category of qudit ZXW diagrams and $\mathbf{Vect}_d$ is the category of qudits.
The functor $\interp{\cdot}$ is defined inductively in usual way for both compositions, that is, sequential composition is matrix multiplication and parallel composition is the tensor product.
The interpretation of the generators are given below.

\subsubsection{Generators of ZXW-calculus and their interpretation}\label{subsec:generators_n_interpretation}

The generators of the ZXW-calculus together with their standard interpretation $\interp{\cdot}$ are:
\begin{itemize}
  \item The \emph{Z box},
  \[
    \tikzfig{qudit-generators/generalgreenspiderqdit2}
    \quad \overset{\interp{\cdot}}{\longmapsto} \quad
    \sum_{j=0}^{d-1}a_j\ket{j}^{\otimes m}\bra{j}^{\otimes n},
  \]
  where $a_0 = 1$ and $\overrightarrow{a}=(a_1, \cdots, a_{d-1})$ is an arbitrary complex vector, and we take the indices modulo $d$, that is, $a_j = a_{j\ \mathrm{mod}\ d}$ for $j \in \Z$.
  Having a $d - 1$ dimensional vector is the direct generalisation of the qubit case where the parameter of a spider is a single number.

  \item The \emph{Hadamard box},
  \[
    \tikzfig{qudit-generators/HadaDecomSingleslt}
    \quad \overset{\interp{\cdot}}{\longmapsto} \quad
    \frac{1}{\sqrt{d}}\sum_{k, j=0}^{d-1}\omega^{jk}\ket{j}\bra{k},
  \]
  where $\omega = e^{i\frac{2\pi}{d}}$ is the $d$-th root of unity.
  In the qudit case the Hadamard box is not self-adjoint anymore as in the qubit case, so two Hadamard boxes do not equal the identity.

  \item The \emph{W node},
  \[
    \tikzfig{qudit-generators/w1to2}
    \quad \overset{\interp{\cdot}}{\longmapsto} \quad
    \ket{00}\bra{0}+\sum_{i=1}^{d-1}(\ket{0i}+\ket{i0})\bra{i}.
  \]
  \item The \emph{swap},
  \[
    \tikzfig{qudit-generators/swap}
    \quad \overset{\interp{\cdot}}{\longmapsto} \quad
    \sum_{i, j=0}^{d-1}\ket{ji}\bra{ij}.
  \]
  \item Lastly, the \emph{identity},
  \[
    \tikzfig{qudit-generators/Id}
    \quad \overset{\interp{\cdot}}{\longmapsto} \quad
    I_d=\sum_{j=0}^{d-1}\ket{j}\bra{j}.
  \]
\end{itemize}

\subsubsection*{Notations}

For convenience, we introduce the following notation which will be used throughout the paper:
\begin{itemize}
  \item The original green circle spider~\cite{coeckeInteractingQuantumObservables2011, ranchinDepictingQuditQuantum2014} can be defined using the Z box:
  \begin{gather*}
    \tikzfig{definitions/circlegspiders_1}
    \qquad \qquad
    \tikzfig{definitions/circlegspiders_2}
  \end{gather*}
  where
  $\overrightarrow{1} = \overbrace{(1,\cdots,1)}^{d-1}$,\,
  $\overrightarrow{\alpha} = (\alpha_1, \cdots, \alpha_{d-1})$,\,
  $e^{i\overrightarrow{\alpha}}=(e^{i\alpha_1}, \cdots, e^{i\alpha_{d-1}})$,\, and
  $\alpha_i \in [0, 2\pi)$.
  Thus, the interpretation of the green circle spider is as follows:
  \[
    \tikzfig{definitions/circlegspider}
    \quad \overset{\interp{\cdot}}{\longmapsto} \quad
    \sum_{j=0}^{d-1} e^{i \alpha_j} \ket{j}^{\otimes m}\bra{j}^{\otimes n},
    \qquad \text{where} \quad
    \alpha_0 \coloneqq 0.
  \]

  \item When the first $d-2$ components of the parameter vector in a Z box are all zeros, we simply label the box with its last component:
  \[
    \tikzfig{definitions/lastditgbox2}\ ,
  \]
  where $x \in \mathbb C$.

  \item The qudit version of the Bell state is $\ket{00} + \ket{11} + \ket{22} + \cdots + \ket{(d-1) (d-1)}$.
  In the ZXW-calculus, the Bell state and its transpose can be defined as, respectively:
  \begin{gather}
    \tikzfig{definitions/compactstructures_1}
    \qquad \qquad
    \tikzfig{definitions/compactstructures_2}
    \tag{S3}\label{rule:S3}\refstepcounter{equation}
  \end{gather}
  We refer to these diagrams as caps and cups, and structures equipped with them are called compact structures~\cite{coeckeInteractingQuantumObservables2011}.
  Unlike in the qubit case, the green and red caps and cups do not coincide:
  \[
    \tikzfig{definitions/xbellneqzbell}
  \]

  \item The Hadamard box is the generalised Hadamard gate for qudits.
  We define its inverse, the yellow $H^\dagger$ box as follows:
  \begin{gather}
    \tikzfig{definitions/h-dagger}
    \tag{H$\null^\dagger$}\label{rule:HDagger}\refstepcounter{equation}
  \end{gather}
  It is proved in the appendix (\cref{hilm}) that the above is well defined.

  \item The yellow $H$ and $H^\dagger$ boxes can be used to define the X-spider~\cite{ranchinDepictingQuditQuantum2014};
  however, we only define the normalised version of the X-spider called the pink spider, for some particular phase vectors.
  These two versions of spiders are the same up to some variational scalars depending on the number of inputs and outputs:\\
  \begin{minipage}{.6\linewidth}
    \begin{gather}
      \tikzfig{definitions/pinkspiders}
      \tag{HZ}\label{rule:HZ}\refstepcounter{equation}
    \end{gather}
  \end{minipage}
  \begin{minipage}{.4\linewidth}
    \[
      \tikzfig{definitions/pinkspiders-phasefree}
    \]
  \end{minipage}
  where, for $0 \leq j \leq d-1$, $K_j$ is the $j$-th column (or row) of the quantum Fourier transform matrix (or the Hadamard matrix), that is,
  \[
    K_j=\left(j\frac{2\pi}{d}, 2j\frac{2\pi}{d}, \cdots, (d-1)j\frac{2\pi}{d}\right),
  \]
  and $j$ can be taken modulo $d$;
  furthermore, $u_{m,n} = d^{\frac{m+n-2}{2}}-1$, hence the green box \tikzfig{definitions/umngbox} represents the scalar $d^{\frac{m+n-2}{2}}$.
  The $K_j$ vectors phases are similar to the qubit $0$ and $\pi$ phases in a sense that these correspond to the phases of Pauli operators.
  In the qudit case, there are exactly $d$ such phases.

  It is useful to note that the interpretation of the pink spider for each $K_j$ phase is given as:
  \[
    \tikzfig{qudit-generators/quditrspiderclassicnm}
    \quad \overset{\interp{\cdot}}{\longmapsto} \hspace{-.5cm}
    \sum_{\substack{
      0 \leq i_1, \cdots, i_m,  j_1, \cdots, j_n \leq d-1 \\
      i_1+\cdots+ i_m+j \equiv j_1+\cdots +j_n \Mod{d}}
    } \hspace{-1.75cm} \ket{i_1, \cdots, i_m}\bra{j_1, \cdots, j_n}.
  \]

  \item We use a yellow $D$ box to denote the \emph{dualiser} as defined in~\cite{coeckeInteractingQuantumObservables2011}:
  \begin{gather}
    \tikzfig{definitions/dualiser}
    \tag{Du}\label{rule:Du}\refstepcounter{equation}
  \end{gather}
  with interpretation
  \[
    \tikzfig{definitions/dualiser-d}
    \quad \overset{\interp{\cdot}}{\longmapsto} \quad
    \sum_{i = 0}^{d \minu 1} \ket{i} \bra{-i}.
  \]
  where $\bra{-i}$ is defined modulo $d$.
  In the qubit case, the dualiser is just the identity, but generally it is an involution (see \cref{dboxsquarelm}).

  \item The general $W$ node \scalebox{.75}{\tikzfig{definitions/w1ton}} and its transpose \scalebox{.75}{\tikzfig{definitions/wnto1}} are defined as:
  \begin{gather}
    \tikzfig{definitions/mlegsblackspider}
    \qquad\qquad
    \tikzfig{definitions/mlegsdownblackspider}
    \tag{WN}\label{rule:WN}\refstepcounter{equation}
  \end{gather}
  with interpretation
  \begin{align*}
    \tikzfig{definitions/w1ton}
    \quad \overset{\interp{\cdot}}{\longmapsto} \quad
    &\ket{0\cdots0}\bra{0}\\
    + &\sum_{i=1}^{d - 1}(\ket{i0\cdots 00}+\cdots +\ket{00\cdots 0i})\bra{i}.
  \end{align*}
  Note that a W node with a single leg is the identity.
  \begin{remark}
    The W node is named after the \emph{N-qudit standard W state}~\cite{wuNquditSLOCCEquivalent2020}:
    \[
      \ket{W_{N}^d}=\frac{1}{\sqrt{N(d-1)}}\sum_{i=1}^{d-1}(\ket{i0\cdots 00}+\cdots +\ket{00\cdots 0i})
    \]
    which, up to a scalar, can be represented in the ZXW-calculus as:
    \begin{equation}
      \tikzfig{definitions/wstate}
      \label{eq:wstate}
    \end{equation}
  \end{remark}

  \item It is useful to define the yellow triangle and its inverse in terms of the $W$ node and green nodes:
  \begin{gather}
    \tikzfig{definitions/trianlge-def}
    \qquad \qquad
    \tikzfig{definitions/trianlge-inv-def}
    \tag{YT}\label{rule:YT}\refstepcounter{equation}
  \end{gather}
  where $\overrightarrow{\minu 1} = \overbrace{(\minu 1,\,\cdots,\,\minu 1)}^{d - 1}$.
  Furthermore, their interpretation is given as:
  \begin{gather*}
    \tikzfig{definitions/trianlge}
    \quad \overset{\interp{\cdot}}{\longmapsto} \quad
    I_d + \sum_{i=1}^{d  \minu 1} \ket{0} \bra{i}
    \qquad \qquad
    \tikzfig{definitions/trianlge-inv}
    \quad \overset{\interp{\cdot}}{\longmapsto} \quad
    I_d - \sum_{i=1}^{d  \minu 1} \ket{0} \bra{i}
  \end{gather*}

  \item A multiplier~\cite{bonchiInteractingHopfAlgebras2017, caretteSZXCalculusScalableGraphical2019, boothCompleteZXcalculiStabiliser2022} labelled by $m$ indicates the number of connections between green and pink nodes.
  Unlike in the qubit case, a green and a red spider can be connected with more than one wire.
  In fact, the Hopf law generalises to $d$ connections (see \cref{hopfditlm}) for a red and green spider to disconnect, so $m$ can be labeled modulo $d$:
  \begin{gather}
    \tikzfig{definitions/multiplier}
    \qquad \qquad
    \tikzfig{definitions/multiplier-t}
    \qquad \qquad
    \tikzfig{lemmas/multipliers/multiplier-mod}
    \tag{Mu}\label{rule:Mu}\refstepcounter{equation}
  \end{gather}

  \item The multipliers interacting together with the Z, X and W nodes play an important role in the ZXW-calculus.
  For brevity, we use the notations $V$ and $M$ to mean:\\
  \begin{minipage}{.5\linewidth}
    \begin{gather}
      \tikzfig{definitions/v-def}
      \tag{VB}\label{rule:VB}\refstepcounter{equation}
    \end{gather}
  \end{minipage}
  \begin{minipage}{.5\linewidth}
    \begin{gather}
      \tikzfig{definitions/m-def}
      \tag{MB}\label{rule:MB}\refstepcounter{equation}
    \end{gather}
  \end{minipage}
  When $d=2$, the $V$ box is the identity because the single-legged W node is also the identity according to the definition of multi-legged W node~\eqref{rule:WN}.
  The interpretation of the $V$ box is given as:
  \[
    \tikzfig{definitions/v-box}
    \quad \overset{\interp{\cdot}}{\longmapsto} \quad
    \ket{0}\bra{0} + \sum_{i=1}^{d \minu 1} \ket{i}\bra{-1}
  \]
  Furthermore, the $M$ box can be expressed in terms of the $V$ box as follows:
  \[
    \tikzfig{definitions/m-box}
  \]
\end{itemize}
\begin{remark}
  Our the language gives up on the property called \enquote{only topology matter}~\cite{caretteWhenOnlyTopology2021}, that is, our diagrams cannot be seen as open graphs.
  This because neither the W node nor the pink spider are flexsymmetric.
  We can define a flexsymmetric version of the pink spider by replacing the $H^\dagger$ boxes with $H$ boxes in \cref{rule:HZ}.
  However, such a definition results in a more complicated a fusion rule between pink spiders.
  Furthermore, defining a well-behaving, flexsymmetric alternative to the W node is a complicated task that we leave for future work.
\end{remark}
\begin{remark}
  When defining the language, we have the freedom of switching between some of the generators and notations (e.g.\@ the X-spider and the Hadamard box).
  However, we specifically chose this set of generators because it results in an easy and clean proof of completeness.
\end{remark}

\subsubsection{Complete rule set}\label{subsec:rule-set}

\begingroup 

\allowdisplaybreaks

In this section, we give a set of rewrite rules which we show to be complete for any finite dimensions.
In the ZXW-calculus, we not only get all the ZX and ZW rules, but also nice rules which involve the interaction of all three (Z, X, and W) algebras.
These additional rules and the flexibility to convert to some extent between X and W spiders capture intricacies that would otherwise be unwieldy to express if limited to solely ZX or solely ZW calculus\@.
Throughout the paper, we use the arbitrary complex vectors $\overrightarrow{a} = (a_1,\dotsc, a_{d-1})$ and $\overrightarrow{b} = (b_1,\dotsc, b_{d-1})$.

\subsubsection*{Qudit ZX-part of the rules}

\begin{gather}
  \tikzfig{axioms/gengspiderfusedit}
  \tag{S1}\label{rule:S1}\refstepcounter{equation}
\end{gather}
\begin{multicols}{2}
  \noindent
  \begin{gather}
    \tikzfig{axioms/s2qudit}
    \tag{S2}\label{rule:S2}\refstepcounter{equation} \\
    \tikzfig{axioms/dcomwtha0}
    \tag{D1}\label{rule:D1}\refstepcounter{equation} \\
    \tikzfig{axioms/rdotaemptydit0}
    \tag{Ept}\label{rule:Ept}\refstepcounter{equation} \\
    \tikzfig{axioms/b2qudit}
    \tag{B2}\label{rule:B2}\refstepcounter{equation} \\
    \tikzfig{axioms/k1copy}
    \tag{K0}\label{rule:K0}\refstepcounter{equation} \\ \columnbreak
    \tikzfig{axioms/pimultiplecpdit}
    \tag{K1}\label{rule:K1}\refstepcounter{equation} \\
    \tikzfig{axioms/k2adit}
    \tag{K2}\label{rule:K2}\refstepcounter{equation} \\
    \tikzfig{axioms/zerotoreddit0}
    \tag{Zer}\label{rule:Zer}\refstepcounter{equation} \\
    \tikzfig{axioms/p1sdit2}
    \tag{P1}\label{rule:P1}\refstepcounter{equation} \\
    \tikzfig{axioms/hhdaggerchangedit}
    \tag{H1}\label{rule:H1}\refstepcounter{equation}
  \end{gather}
\end{multicols}
\vspace{-0.5cm}
\begin{equation*}
  \text{where}\quad
  \protect\overleftarrow{a}=(a_{d-1}, \dotsc, a_1),\quad
  \protect\overrightarrow{ab}=(a_1 b_1,\dotsc, a_{d-1} b_{d-1}),
  \quad\text{and}\quad
  \protect{k_j(\overrightarrow{a})}=\left(\frac{a_{1-j}}{a_{d-j}}, \dotsc, \frac{a_{d-1-j}}{a_{d-j}}\right)
\end{equation*}

\subsubsection*{Discussion of the qudit ZX-part of the rules}
This part of the ruleset is the direct generalisation of the rules of the qubit ZX-calculus.
One of the most differentiating features of our calculus is that it is not flexsymmetric~\cite{caretteWhenOnlyTopology2021}, which means that we cannot see the ZXW diagrams as open graphs.
Another interesting difference is that there are two ways to change the colour of spiders based on where the $H$ and $H^\dagger$ boxes are located.
Further to the axioms above, we have some additional notations or lemmas that are familiar equalities in the qudit ZX-calculus: \eqref{rule:S3}, \eqref{rule:S4}, \eqref{rule:HDagger}, \eqref{rule:HZ}, \eqref{rule:HX}, \eqref{rule:Du}, \eqref{rule:Mu}, \eqref{rule:Hopf}

\subsubsection*{Qudit ZW-part of the rules}

\begin{multicols}{2}
  \noindent
  \begin{gather}
    \tikzfig{axioms/phasecopydit}
    \tag{Pcy}\label{rule:Pcy}\refstepcounter{equation} \\
    \tikzfig{axioms/wsymetrydit}
    \tag{Sym}\label{rule:Sym}\refstepcounter{equation} \\
    \tikzfig{axioms/w-bialgebra}
    \tag{BZW}\label{rule:BZW}\refstepcounter{equation} \\ \columnbreak
    \tikzfig{axioms/additiondit}
    \tag{AD}\label{rule:AD}\refstepcounter{equation} \\
    \tikzfig{axioms/associatedit}
    \tag{Aso}\label{rule:Aso}\refstepcounter{equation} \\
    \tikzfig{axioms/w-w-algebra}
    \tag{WW}\label{rule:WW}\refstepcounter{equation}
  \end{gather}
\end{multicols}

\subsubsection*{Discussion of the qudit ZW-part of the rules}
The above part of the ruleset is also a direct generalisation of the rules of the qubit ZW-calculus~\cite{hadzihasanovicDiagrammaticAxiomatisationQubit2015} except for the rule~\eqref{rule:WW}.
In the qubit ZW-calculus, there is a bialgebra rule for the W node and its transpose that involves a fermion swap, but this rule does not hold for the higher dimensional W node.
Nevertheless, we find the \eqref{rule:WW} rule that allows for a bialgebra between the W node and its transpose without a fermionic swap, given we restrict inputs and outputs via a projection with a W node.

\subsubsection*{Qudit ZXW-part of the rules}

\begin{multicols}{2}
  \noindent
  \begin{gather}
    \tikzfig{axioms/triangleocopydit}
    \tag{Bs0}\label{rule:Bs0}\refstepcounter{equation} \\
    \tikzfig{axioms/trianglepicopydit2}
    \tag{Bsj}\label{rule:Bsj}\refstepcounter{equation} \\
    \tikzfig{axioms/trialgebra}
    \tag{TA}\label{rule:TA}\refstepcounter{equation} \\
    \tikzfig{axioms/hadamard-decomposition2}
    \tag{HD}\label{rule:HD}\refstepcounter{equation} \\\columnbreak
    \tikzfig{axioms/z-push-v}
    \tag{ZV}\label{rule:ZV}\refstepcounter{equation} \\
    \tikzfig{axioms/zbox-v-decomposition-2}
    \tag{VA}\label{rule:VA}\refstepcounter{equation} \\
    \tikzfig{axioms/v-push-w}
    \tag{VW}\label{rule:VW}\refstepcounter{equation} \\
    \tikzfig{axioms/k1zstate}
    \tag{KZ}\label{rule:KZ}\refstepcounter{equation}
  \end{gather}
\end{multicols}
\vspace{-0.5cm}
\begin{align*}
  \text{where}&\quad
  T_j=\overbrace{(\underbrace{0,\dotsc, 1}_{d-j}, \dotsc, 0)}^{d-1},\quad
  e_1, \dotsc, e_n \in \{ 1, \dotsc, d-1\},\quad
  \overrightarrow{a_{d-1}} = \left(a_{d-1}, a_{d-1}, \dotsc, a_{d-1}\right)
\end{align*}

\subsubsection*{Discussion of the qudit ZXW-part of the rules}
This part of the ruleset describes the newly discovered interaction of the Z, X and W algebras that have not been presented before.
These rules capture complex relations that would otherwise be cumbersome to express if limited to solely ZX- or solely ZW-calculus.
Firstly, the \eqref{rule:Bs0} and the \eqref{rule:Bsj} rules show how the W node interacts with computational basis states.

One of the most important rules is the trialgebra rule \eqref{rule:TA} that allows us to switch between the W node and the X spider.
We can extend this axiom to the general case with any number of W nodes and X spiders as long as each Z spider is also connected to a common W node (see \cref{gerneraltrialgebralm}).

To explain the rule \eqref{rule:KZ}, consider the action of the right hand side on the basis states $\ket{b_i}$ where the wires are numbered $0$ to $n$.
Then the projection of the W node enforces that exactly one of the $b_i$ is $\ket{-1}$, and the rest are $\ket 0$.
Meanwhile, the other projection of the red node enforces that:
\[
  \sum_{i = 1}^{n} e_i b_i \equiv 0 \Mod{d}
\]
Therefore, the one $b_i$ that is $\ket{-1}$ must be the leftmost wire.

The rule \eqref{rule:HD} is the decomposition of the Hadamard box.
In this decomposition, each Z-spider with phase $K_r$ corresponds to the $r$-th row of the qudit Hadamard gate.
The rest of the diagram combines these rows.

We have three rules directly related to the $V$ box.
Firstly, the \eqref{rule:VA} rule enables us to combine $d - 1$ Z boxes with single non-zero phases $(0, \cdots, 0, a_i)$ into one Z box with a phase vector constitutes those $d - 1$ phases.
The \eqref{rule:VW} and the \eqref{rule:ZV} rules describe how the $V$ box interacts with the $W$ node and Z box states.

\endgroup 

\subsubsection{Normal form}\label{subsec:normal-form}

Given an arbitrary complex vector $\overrightarrow{a} = \left( a_0, a_1, \cdots, a_{d^{m}-1} \right)^T$ of dimension $d^m$,
there is a normal form, such that
\[
  \tikzfig{definitions/normalformdit}
  \quad \overset{\interp{\cdot}}{\longmapsto} \quad
  \begin{pmatrix}
    a_0       \\
    \vdots    \\
    a_i       \\
    \vdots    \\
    a_{d^m-1} \\
  \end{pmatrix}
\]
where the weights of the multipliers from the $i$-th Z box $a_i$ is the $d$-ary expansion of the given index, that is,
\[
  \begin{pmatrix}
    a_0       \\
    \vdots    \\
    a_i       \\
    \vdots    \\
    a_{d^m-1} \\
  \end{pmatrix}
  \quad = \quad
  \sum_{j=0}^{d^m-1} a_j \ket{e_{m \minu 1, j} \,\cdots\, e_{s, j} \,\cdots\, e_{0, j}}\ ,
\]
where $0 \leq e_{k, j} \leq d-1$.
This normal form was originally designed for the algebraic qudit ZX-calculus~\cite{wangQufiniteZXcalculusUnified2022}.
It is also structurally similar to the normal form for qudit ZW-calculus~\cite{hadzihasanovicAlgebraEntanglementGeometry2017}.

\begin{remark}
  The qudit normal form is unique if we fix the order of green boxes based on the weights of the multipliers it is connected to.
\end{remark}

\begin{definition}
  A normal form is in its \emph{unique form} if the Z boxes are ordered according to the lexicographical ordering of the multiplier connections.
\end{definition}

\begin{restatable}{lemma}{nftounf}
  \label{lem:nf2unf}
  Any normal form can be transformed into its unique form.
\end{restatable}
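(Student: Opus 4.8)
The plan is to treat this as a sorting problem. The normal form consists of $d^m$ Z boxes, each attached through multipliers to a common collection of $m$ W-node structures (one per output wire), and a normal form and its unique form differ only in the linear order in which these Z boxes hang off the W nodes. Since the unique form is defined by the lexicographical ordering of the multiplier-weight tuples $(e_{m-1,j},\dotsc,e_{0,j})$, it suffices to exhibit a diagrammatic rewrite that interchanges any two adjacent Z boxes: adjacent transpositions generate the full symmetric group, so any sorting permutation can be realised (e.g.\@ by bubble sort).

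First I would establish that the multi-legged W node of \eqref{rule:WN} is invariant under an arbitrary permutation of its output legs. The elementary two-leg case is precisely the symmetry axiom \eqref{rule:Sym}, and the general statement follows by induction on the number of legs: using \eqref{rule:Aso} one reassociates the W-tree so that the two legs to be exchanged meet at a single elementary W node, where \eqref{rule:Sym} performs the swap, after which \eqref{rule:Aso} restores the original shape. This reduces an adjacent transposition of two Z boxes to applying the same leg-swap simultaneously at each of the $m$ W-trees.

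Next I would carry out the interchange of a pair of adjacent Z boxes. Each Z box meets the $k$-th W-tree through a multiplier \eqref{rule:Mu} of weight $e_{k,j}$; permuting the two legs via the symmetry just established carries each such multiplier along with its leg, so the rewrite merely reorders the two boxes while leaving every multiplier weight attached to its original Z box. Iterating these adjacent swaps according to any sorting routine then rearranges the full list of Z boxes into lexicographical order, producing the unique form.

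The step I expect to be the main obstacle is the inductive symmetry argument for the multi-legged W node: the tree built by \eqref{rule:WN} is not manifestly symmetric, so propagating the elementary swap \eqref{rule:Sym} to an arbitrary adjacent pair of legs requires careful bookkeeping of the reassociations \eqref{rule:Aso}, together with a verification that the multipliers slide through these moves without changing their weights. Once full symmetry of the W node is in hand, the remaining sorting argument is routine.
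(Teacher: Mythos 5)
Your proposal is correct and takes essentially the same approach as the paper: the paper's proof likewise establishes a diagrammatic swap of two neighbouring Z boxes and then sorts all boxes into lexicographical order by bubble sort. One harmless structural slip: in the normal form the multipliers attach the Z boxes to pink X spiders (one per output), with a single W node on top providing the summation, so the \eqref{rule:Sym}/\eqref{rule:Aso} symmetry argument you correctly identify as the crux is needed only at that top W node, the pink spiders being leg-symmetric by construction.
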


\begin{remark}
  If two outputs of a unique form are swapped, it remains a normal form, but not the unique one.
  Then, this normal form can be rewritten into its unique form using \cref{lem:nf2unf}.
\end{remark}

\begin{remark}
  The normal form of a scalar diagram is of the form \tikzfig{definitions/s-scalar}.
\end{remark}

\section{Completeness}\label{sec:completeness}
\subsection{Strategy of the completeness proof}

\newcommand{\proofstrategyscale}{0.8}

In this section, we describe the idea of the completeness proof of qudit ZXW-calculus.
The technique we use, i.e.\@ leveraging the matrix itself as the unique normal form, has been used before in the context of proving the completeness of qubit algebraic ZX-calculus~\cite{wangAlgebraicCompleteAxiomatisation2022} and qubit ZW-calculus~\cite{hadzihasanovicDiagrammaticAxiomatisationQubit2015}.

By completeness of qudit ZXW-calculus, we mean that for any two diagrams  $D_1$ and $D_2$, if $\interp{D_1} = \interp{D_2}$, then we can derive that $D_1 = D_2$ from the rules of qudit ZXW-calculus.
Due to the following map-state duality:
\begin{equation*}
  \scalebox{\proofstrategyscale}{
    \vspace{-2cm}
    \tikzfig{proof-idea/mapstatedual3}
  }
  \label{eq:maptostate}
\end{equation*}
we only need to consider state diagrams (diagrams without any input).

Now assume that we have two qudit ZXW state diagrams $D_1$ and $D_2$ such that $\interp{D_1} = \interp{D_2}$.
We need to show that $D_1= D_2$.
Since $\interp{D_1} = \interp{D_2}$,  $D_1$ and $D_2$ must have the same normal form $D$.
Therefore, if we can rewrite both $D_1$ and $D_2$ into $D$, then by inverting the rewriting (which is a series of equalities) from $D_2$ to $D$ we obtain that $D_1 = D_2$.
Thus, the proof strategy for completeness is to show that any state diagram can be rewritten into a normal form.

To rewrite an arbitrary state diagram into a normal form, we need to analyse its structure.
We first note that each state diagram $D_s$ has the following form:
\begin{equation*}
  \scalebox{\proofstrategyscale}{
    \tikzfig{proof-idea/generalstatedm}
  }
  \label{eq:generalstatediam}
\end{equation*}
where $A_1,\, A_2,\, \cdots,\, A_n$ are diagrams that are the parallel compositions of generators given in Section~\ref{subsec:generators_n_interpretation}.
\begin{remark}
  We treat scalar diagrams that are composed of multiple generators as special state diagrams.
  The only generator that can be a scalar on its own is the \tikzfig{proof-idea/scalar-generator} diagram.
  To make the following proof steps are applicable, we rewrite such diagrams into \tikzfig{proof-idea/scalar-generator-transform} using the rule~\eqref{rule:S1}.
\end{remark}
As a consequence, we can rewrite $D_s$ into its normal form according to the following steps:
\begin{enumerate}
  \item We rewrite $A_1$, which is the tensor product of generators without inputs, into its normal form $N_1$:
  \[
    \scalebox{\proofstrategyscale}{
      \tikzfig{proof-idea/a1_to_n1}
    }
  \]
  \item We bend the top of the diagram so that the generators in $A_2$ become top-bended state diagrams:
  \[
    \scalebox{\proofstrategyscale}{
      \tikzfig{proof-idea/sequentialbend}
    }
  \]
  \item We convert the top-bent $A_2$ diagram into its corresponding normal form $N_2$:
  \[
    \scalebox{\proofstrategyscale}{
      \vspace{-2cm}
      \tikzfig{proof-idea/a2_to_n2}
    }
  \]
  \item We rewrite the tensor product of the two normal forms into a single normal form denoted with $N_{1,2}$:
  \[
    \scalebox{\proofstrategyscale}{
      \tikzfig{proof-idea/n1_otimes_n2}
    }
  \]
  \item We reduce the partial traces of the normal form (i.e.\@ connections of two outputs of a normal form with a cup) into another normal form $N_{1,2}^\prime$:
  \[
    \scalebox{\proofstrategyscale}{
      \tikzfig{proof-idea/n12_partial_trace}
    }
  \]
  \item We repeat from step 2 for the rest of the diagram, that is from $A_3$ to $A_n$.
\end{enumerate}
If we follow the above steps, we acquire the normal form of $D_s$.

To summarise the requirements, as marked with the dashed lines in the above equations, the following statements are necessary to prove the completeness of the ZXW-calculus:
\begin{itemize}
  \item All generators bent in state diagrams or already being state diagrams can be rewritten into their normal forms.
  \item The tensor product of any two normal forms can be rewritten into a single normal form.
  \item A partial-traced normal form can be rewritten into a normal form.
\end{itemize}

\subsection{The completeness proof}\label{subsec:completeness-no-proof}

In this section, we show the completeness of the ZXW-calculus.
\begin{restatable}[Completeness]{theorem}{zxwcompleteness}
  ZXW-calculus is complete for arbitrary finite dimensions.
\end{restatable}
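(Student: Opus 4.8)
The plan is to follow the proof strategy laid out above and reduce completeness to the single claim that every state diagram rewrites to a normal form. Suppose $\interp{D_1} = \interp{D_2}$. By the map-state duality displayed above I may bend all inputs to outputs, so without loss of generality both $D_1$ and $D_2$ are state diagrams. Granting the claim, $D_1$ rewrites to a normal form $N_1$ and $D_2$ to a normal form $N_2$; since the interpretation of a normal form is exactly its defining coefficient vector and $\interp{D_1} = \interp{D_2}$, the diagrams $N_1$ and $N_2$ encode the same vector. By \cref{lem:nf2unf} both can be brought to the common unique form $D$. Concatenating the rewriting $D_1 \rightsquigarrow D$ with the reversal of the rewriting $D_2 \rightsquigarrow D$ (each step being an equality) yields $D_1 = D_2$, which is completeness.

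Everything therefore reduces to showing that an arbitrary state diagram reduces to a normal form. Following the six-step procedure above, I would decompose a state diagram $D_s$ into layers $A_1,\dotsc,A_n$ of tensored generators and process them inductively, which requires three auxiliary results that I would establish as separate lemmas beforehand: (i) every generator, whether already a state or bent into one by caps, rewrites to a normal form — the base case, handled by finitely many explicit derivations for the Z box, the Hadamard box (via its decomposition \eqref{rule:HD}), the W node, the swap, and the identity, together with their bent variants; (ii) the tensor product $N_1 \otimes N_2$ of two normal forms rewrites to a single normal form $N_{1,2}$, realising diagrammatically the Kronecker product of the two coefficient vectors; and (iii) a partial trace of a normal form, formed by capping two of its outputs, rewrites to a normal form $N_{1,2}'$, realising the corresponding index contraction. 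Feeding each layer $A_k$ through steps (i)--(iii) in turn drives $D_s$ to its normal form.

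The hard part will be steps (ii) and (iii), precisely where the Z, X, and W algebras must cooperate. Merging two normal forms into one requires pushing W-node and multiplier structure past one another and re-indexing the green boxes; the crucial engine here is the trialgebra rule \eqref{rule:TA} and its $n$-fold generalisation (\cref{gerneraltrialgebralm}) for interchanging W nodes with X spiders, together with the bialgebra rules \eqref{rule:BZW} and \eqref{rule:WW}. The partial-trace step is more delicate, since capping two outputs forces a summation over a shared index that must be re-expressed purely as multiplier weights on a fresh family of Z boxes, with no residual W structure left behind; here the phase-copy rule \eqref{rule:Pcy}, the Hopf law (\cref{hopfditlm}), and the $V$- and $M$-box manipulations \eqref{rule:VA}, \eqref{rule:ZV}, \eqref{rule:VW} handle the bookkeeping, with the non-self-inverse Hadamard box requiring care at the colour changes. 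Once these combinatorial reductions are in place, the remaining work is routine: the base-case derivations in (i) are finite, and \cref{lem:nf2unf} closes the uniqueness direction.
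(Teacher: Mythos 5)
Your proposal follows essentially the same route as the paper: map-state duality to reduce to state diagrams, rewriting both diagrams to the (unique) normal form, and reducing the claim to exactly the same auxiliary lemmas — normal forms for the generators, tensor product of normal forms, and partial trace of a normal form — which the paper proves in its appendix (\cref{lem:zboxnf,lem:wnodenf,lem:hadamardnf,lem:nfpartialtrace,lem:nftensorproduct}). The only cosmetic difference is that you include the swap and identity as explicit base cases, whereas the paper observes these need no separate treatment since they follow from the other generators and the layer-processing steps.
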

\begin{proof}
  To prove the above theorem, we first need to show that the Z box, W node, and Hadamard box can be rewritten into a normal form.
  We do not need to show this for the other generators, the swap and the identity, as their normal form can be derived from the above generators by applying the steps in the previous section.
  Then, we need to show that any partial trace of a normal form can be transformed into another normal form.
  Finally, we need to rewrite the tensor product of two normal forms into a normal form.
  The above statements are formalised in \cref{lem:zboxnf,lem:wnodenf,lem:hadamardnf,lem:nfpartialtrace,lem:nftensorproduct};
  furthermore, we present the proofs for these lemmas in the appendix.
\end{proof}
\begin{restatable}[Z box]{lemma}{zboxnf}
  \label{lem:zboxnf}
  \[
    \tikzfig{qudit-nf-generators/zboxshort}
  \]
\end{restatable}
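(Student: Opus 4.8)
The plan is to prove the statement by induction on the number of output legs $m$ of the Z box. By the map--state duality invoked in \cref{subsec:completeness-no-proof} it suffices to treat the Z box as a state, so the target is to rewrite $\sum_{j=0}^{d-1} a_j \ket{j}^{\otimes m}$ into the normal form of \cref{subsec:normal-form}, which for this particular vector is supported only on the diagonal indices $j(1+d+\dots+d^{m-1})$ with coefficient $a_j$. For the base case $m=1$, a single-leg W node is the identity by \eqref{rule:WN}, so a one-output Z box already coincides with the normal form for the $d$-dimensional vector $(1, a_1, \dots, a_{d-1})^{T}$ after inserting this trivial W node; here I would only need to check that the multiplier connections reduce to bare wires.

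For the inductive step I would first isolate the phase data from the leg-distribution skeleton. The key tool is the \eqref{rule:VA} rule: applied in reverse it decomposes the full phase vector $\overrightarrow{a}$ into $d-1$ Z boxes each carrying a single non-zero phase $(0,\dots,a_i,\dots,0)$, reducing the problem to handling one diagonal coefficient at a time. For each single-phase box I would use the Z-spider fusion rule \eqref{rule:S1} to factor the $m$-leg copying structure, then push the single-phase boxes through the $V$-box and W-node backbone of the normal form using \eqref{rule:ZV} and \eqref{rule:VW}; the multiplier rule \eqref{rule:Mu} then emits the multiplier weights, which I would track against the prescribed $d$-ary expansion $e_{s,j}$. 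Reassembling via \eqref{rule:VA} recombines the single-phase contributions into the single normal-form diagram.

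I expect the main obstacle to be the combinatorial bookkeeping rather than any single hard rewrite: showing that the multiplier weights produced by the repeated use of \eqref{rule:VA}, \eqref{rule:ZV}, and \eqref{rule:VW} line up exactly with the $d$-ary digits fixed by the normal form, and that every coefficient off the diagonal $\ket{j\cdots j}$ genuinely vanishes rather than merely being suppressed by scalars. To keep the induction honest I would first compute the intended identity semantically through $\interp{\cdot}$ to pin down which diagrammatic equality is being claimed at each leg, and then discharge it purely by rewriting. The induction should close using spider fusion \eqref{rule:S1} together with the modular reduction of multiplier weights in \eqref{rule:Mu}, which absorbs any weight congruent to $0 \pmod d$ and thereby matches the normal form's convention of taking the multiplier labels modulo $d$.
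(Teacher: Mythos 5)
Your toolkit is the right one --- the paper's proof likewise rests on splitting the phase vector into single-phase Z boxes (this is what \eqref{rule:VA}, packaged as \cref{zbox-single-decompositionlm}, is for) and tracking the multiplier weights --- but your argument has two genuine gaps. First, the base case is wrong as stated. The normal form of a one-output Z box is \emph{not} \enquote{the Z box with a trivial W node inserted}: it is a diagram containing $d$ separate single-phase Z boxes carrying the scalars $1, a_1, \ldots, a_{d-1}$, joined by a $d$-legged W node, with the box carrying $a_j$ feeding a multiplier of weight $j$ into the single pink-spider output. The single-legged W node of \eqref{rule:WN} never enters; the W node that appears has one leg per coefficient. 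Rewriting the one-output Z box into that shape --- i.e.\@ unbundling the phase vector into one box per coefficient --- is exactly the non-trivial content of the lemma at $m=1$, and it is where \eqref{rule:VA}, \eqref{rule:ZV} and \eqref{rule:VW} actually do their work. The step you declare trivial is the heart of the proof.

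Second, your induction never closes, because the inductive step never invokes the inductive hypothesis: what you describe (reverse \eqref{rule:VA}, fusion \eqref{rule:S1}, pushing phases through the V/W backbone) is a direct decomposition of the $m$-legged box, not a reduction from $m$ to $m-1$. A genuine induction on $m$ would unfuse one leg by \eqref{rule:S1}, apply the hypothesis to the $(m-1)$-legged box, and then prove a separate lemma that composing a one-to-two phase-free Z spider onto a pink-spider output of a normal form yields again a normal form; that lemma, which needs the bialgebra-type interaction \eqref{rule:B2} between the copy spider and the pink spiders and multipliers, appears nowhere in your sketch. The paper avoids this entirely: its proof of \cref{lem:zboxnf} is a short direct rewrite with no induction on $m$, in which the diagonal multiplier weights $(j, \ldots, j)$ fall out of the decomposition in a single pass. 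Either drop the induction wrapper and flesh out the direct argument, or supply the missing copy-spider-into-normal-form lemma.
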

\begin{restatable}[W node]{lemma}{wnodenf}
  \label{lem:wnodenf}
  \[
    \tikzfig{qudit-nf-generators/wshort}
  \]
\end{restatable}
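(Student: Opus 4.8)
The plan is to exploit the map--state duality and compact structure set up in the proof strategy to reduce the claim to a statement about states. Bending the single input leg of the W node downward turns it into a three-legged state, so it suffices to rewrite this state into a normal form of dimension $d^3$. First I would compute the interpretation of the bent W node to pin down the target vector $\overrightarrow{a}$: since the W node only produces the all-zero term $\ket{000}$ together with the single-excitation terms $\ket{0ii}$ and $\ket{i0i}$ for $1 \leq i \leq d-1$, the target vector is sparse with every nonzero coordinate equal to $1$. Consequently the target normal form has Z boxes with phase $0$ on all but a handful of the $d^3$ indices, and the surviving Z boxes all carry the same trivial phase, which should keep the bookkeeping light.

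Next I would build the normal-form skeleton directly out of the W node rather than starting from the generic $d^m$ normal form, since the W node already supplies the branching structure that the normal form encodes through multipliers. The bulk of the derivation should therefore use the ZW-part rules --- \eqref{rule:WN} to unfold the general W node, together with \eqref{rule:Sym}, \eqref{rule:Aso}, \eqref{rule:AD}, \eqref{rule:Pcy}, and \eqref{rule:WW} to reassociate and fuse the W structure --- and the ZXW interaction rules \eqref{rule:Bs0} and \eqref{rule:Bsj} for the behaviour of the W node on computational basis states. I would then insert the Z boxes and their multiplier connections: the \eqref{rule:VA} rule lets me amalgamate single-phase Z boxes into one Z box carrying the full phase vector, while the multiplier rules \eqref{rule:Mu} and the Hopf law (\cref{hopfditlm}) let me route the connections and, wherever a coordinate of the target vector is $0$, disconnect the corresponding Z box so that it can be discarded.

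The main obstacle will be making the W node's excitation-spreading behaviour line up precisely with the normal form's per-output encoding, in which output wire $s$ is attached to each phase-injecting Z box through a multiplier of weight $e_{s,j}$. Matching the condition \enquote{exactly one leg carries the excitation $i$ and the rest carry $0$} against \enquote{the $d$-ary digits $e_{s,j}$ of the index $j$} is where the real combinatorial work lives, and it is exactly the kind of W--Z commutation that the trialgebra rule \eqref{rule:TA} and the $V$-box rules \eqref{rule:ZV} and \eqref{rule:VW} are designed to handle; I expect to convert between the W node and the pink spider at the key step and push the $V$ boxes through. Finally I would sweep up the scalar factors introduced by the \eqref{rule:HZ} normalisation of the pink spiders, and, if the resulting diagram is a normal form but not the unique one, appeal to \cref{lem:nf2unf} to reorder the Z boxes into the lexicographic order demanded by the unique form.
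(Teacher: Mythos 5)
Your setup is correct as far as it goes: bending the input of the W node gives a three-output state whose interpretation is $\ket{000}+\sum_{i=1}^{d-1}\bigl(\ket{0ii}+\ket{i0i}\bigr)$, so the target is a $d^{3}$-dimensional vector with $2d-1$ entries equal to $1$ and the rest $0$, and the tools you list (the ZW rules, \eqref{rule:TA}, the $V$-box rules, \cref{lem:nf2unf}) are the right neighbourhood. The problem is that what you have written is a plan, not a proof. The entire content of this lemma is the step you defer: you yourself identify matching \enquote{exactly one leg carries the excitation $i$} against the normal form's digit encoding $e_{s,j}$ as \enquote{where the real combinatorial work lives}, and then say only that you \emph{expect} \eqref{rule:TA}, \eqref{rule:ZV} and \eqref{rule:VW} to handle it. Since the lemma is precisely the claim that the rule set can carry out this conversion, asserting that suitably designed rules ought to do it is not an argument; no rewrite sequence is exhibited, so nothing is proved. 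The paper's own proof is a short, fully explicit chain of diagram equalities taking the bent W node to the normal form; your outline gestures at the same kind of machinery but never executes any of it, and executing it is the lemma.

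There is also a concrete step in your plan that heads away from the target. The normal form of \cref{subsec:normal-form} contains one Z box for \emph{every} index $j\in\{0,\dotsc,d^{3}-1\}$, connected through multipliers whose weights are the $d$-ary digits of $j$; vanishing entries of the target vector correspond to boxes labelled $0$, not to absent boxes (compare \cref{lem:nfpartialtrace,lem:nftensorproduct}, where possibly-zero coefficients $\Sigma_k$ and $a_ib_j$ are retained). Your proposal to use \eqref{rule:Mu} and \cref{hopfditlm} to disconnect and discard every Z box whose coordinate is $0$ therefore terminates at a sparse diagram that is not the normal form. Completeness requires both sides of an equation to land on literally the same diagram, so either your derivation must end with all $d^{3}$ boxes present, or you must separately prove that zero-labelled boxes can be freely introduced and eliminated and then fix one convention consistently with the unique form of \cref{lem:nf2unf}; your proposal does neither.
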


\begin{restatable}[Hadamard box]{lemma}{hadamardnf}
  \label{lem:hadamardnf}
  \[
    \tikzfig{qudit-nf-generators/hadamardshort}
  \]
\end{restatable}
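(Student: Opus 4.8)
The plan is to reduce the statement to pieces that are already handled elsewhere, using the Hadamard decomposition rule \cref{rule:HD} as the main engine. First I would pass to the state picture: by the map-state duality used throughout the completeness strategy, bending the single input wire turns the Hadamard box into a bipartite ($m=2$) state whose target normal form is the $d^2$-dimensional vector with entries $\tfrac{1}{\sqrt d}\omega^{jk}$, indexed in $d$-ary as in \cref{subsec:normal-form}. It therefore suffices to rewrite this bent diagram into \emph{a} normal form; by soundness and uniqueness of the normal form it is then automatically the correct one, so I never have to check coefficients by hand.

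Next I would apply \cref{rule:HD}. As explained in the discussion of that rule, each Z-spider appearing in the decomposition carries a phase $K_r$ and represents the $r$-th row of the qudit Hadamard gate, while the remaining W nodes and multipliers combine these rows. The crucial observation is that \cref{rule:HD} expresses the Hadamard box entirely in the generators out of which the normal form itself is built, namely Z boxes, W nodes, and multipliers. In particular every green circle spider of phase $K_r$ is, by definition, a Z box with phase vector $a_j=\omega^{jr}$ (and $a_0=1$), and the accompanying $\tfrac{1}{\sqrt d}$ and $u_{m,n}$ factors are absorbed into the green-box scalar notation, hence are themselves Z boxes.

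I would then normalise piecewise and fuse. By \cref{lem:zboxnf} every Z box occurring in the decomposition — the $K_r$-phase spiders together with the scalar boxes — rewrites into a normal form. Using \cref{lem:nftensorproduct} I merge the parallel normal forms into a single one, and using \cref{lem:nfpartialtrace} I eliminate the cups and partial traces introduced both by the initial bending of the input and by the connecting W-node structure produced by \cref{rule:HD}. Iterating these two combination steps collapses the entire decomposition into one normal form, which completes the argument.

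The main obstacle I anticipate is the combinatorial bookkeeping around the combining structure of \cref{rule:HD}: one must track how the W nodes and multipliers glue the $K_r$ rows so that, after applying \cref{lem:nftensorproduct,lem:nfpartialtrace}, the surviving multiplier weights reproduce the product index $(j,k)\mapsto jk$ dictated by $\omega^{jk}$, and one must keep careful account of the $\tfrac{1}{\sqrt d}$ and $u_{m,n}$ scalars so that the overall normalisation comes out exactly $\tfrac{1}{\sqrt d}$. The conceptual content, by contrast, is light: \cref{rule:HD} is tailored precisely so that the Hadamard box dissolves into Z boxes plus the tensor-product and partial-trace operations that the other lemmas already dispatch.
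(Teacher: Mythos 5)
Your opening move coincides with the paper's: its proof of this lemma also fires the Hadamard decomposition \eqref{rule:HD} and then assembles the resulting $K_r$-row spiders into the normal form, and your soundness-plus-uniqueness observation (rewrite into \emph{a} normal form, then coefficients come for free) is legitimate. The gap lies in how you propose to process the combining structure of \eqref{rule:HD}. Your key claim --- that \eqref{rule:HD} expresses the Hadamard box \enquote{entirely in the generators out of which the normal form itself is built, namely Z boxes, W nodes, and multipliers} --- fails at the generator level: multipliers are not generators but notation, per \eqref{rule:Mu}, for bundles of wires between green and \emph{pink} nodes; the pink spider is itself derived, defined in \eqref{rule:HZ} by dressing a Z spider with $H^\dagger$ boxes; and $H^\dagger$ is defined from $H$ via \eqref{rule:HDagger}. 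The modular machinery you invoke (\cref{lem:zboxnf,lem:nftensorproduct,lem:nfpartialtrace}) operates on diagrams built from the actual generators, so to feed the decomposition into it you must unfold these abbreviations --- at which point Hadamard boxes reappear inside the combining structure, and normalising a bent $H$ (or $H^\dagger$) box is exactly the statement under proof. The argument is circular as written.

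There is a second, related type error: \cref{lem:nfpartialtrace} removes a cup joining two outputs of a single normal form; it cannot \enquote{eliminate the connecting W-node structure}, since a W node is not a cup. Those W nodes would have to be bent, normalised via \cref{lem:wnodenf} (which your proposal never invokes), and tensored in via \cref{lem:nftensorproduct} before any cup-elimination could apply --- and even then the pink-node/multiplier connections remain outside the reach of your three lemmas. The paper avoids all of this by never unfolding the derived generators: after applying \eqref{rule:HD} it proves a purpose-built combining lemma, \cref{vector-combine-with-multiplierslm} (stated immediately before this one), and rewrites the decomposition directly into the normal form, manipulating pink spiders and multipliers through their previously derived rules (e.g.\@ \cref{multiplierpushlm}, \eqref{rule:Hopf}) rather than through their definitions. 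To repair your proof you would need either to follow that direct route, or to first establish standalone normal-form lemmas for the bent pink spider and the bent multiplier that do not pass through \eqref{rule:HZ} and \eqref{rule:HDagger}.
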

\begin{restatable}[Partial trace]{lemma}{nfpartialtrace}
  \label{lem:nfpartialtrace}
  \[
    \tikzfig{qudit-nf-trace/pfshort}
  \]
  where $\Sigma_k$ corresponds to the elements of the partial trace over the $s$ and $t$ indices.
  That is, $\Sigma_k$ is the sum of the boxes $a_{k_0}, \cdots ,a_{k_{d \minu 1}}$ with such multiplier connections that satisfy $e_{x,k_0} = \cdots = e_{x,k_{d \minu 1}}$ for all $x \in \{0,\, \ldots,\, d - 1\} \setminus \{s, t\}$ and $e_{s,k_{y}} = e_{t,k_y} = y$ for $0 \leq y \leq d - 1$.
\end{restatable}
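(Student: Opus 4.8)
The semantic content of the lemma is transparent: contracting the outputs $s$ and $t$ of a normal form with the Bell effect $\sum_i \bra{ii}$ retains exactly those branches whose $s$-th and $t$-th $d$-ary digits agree, deletes those two tensor factors, and adds together all surviving weights that share the same remaining digits. Thus the coefficient of each reduced index $k$ becomes $\Sigma_k = \sum_{y=0}^{d-1} a_{k_y}$, which is precisely the asserted partial trace. The plan is to reproduce this selection-and-summation purely diagrammatically.

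First I would expand the normal form into its defining shape: a tree of W nodes (built from \eqref{rule:WN}, with associativity and symmetry handled by \eqref{rule:Aso} and \eqref{rule:Sym}) summing the branches, each branch a Z box of weight $a_j$ whose output legs pass through multipliers encoding the digits $e_{k,j}$. Since the cap is a state and the W-node tree realises the additive structure, I would push the cap through the tree so that it acts branch by branch on the two legs $s$ and $t$; this distribution is justified by the addition rule \eqref{rule:AD} together with \eqref{rule:BZW}, using also that a single-legged W node is the identity.

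The technical heart is the local computation on one branch: a single Z box whose legs $s$ and $t$ are routed through multipliers of weights $e_{s,j}$ and $e_{t,j}$ and then closed off by the cap. I would show, using the multiplier rules \eqref{rule:Mu}, the copy rules \eqref{rule:K0}, \eqref{rule:K1} and \eqref{rule:K2}, the Hopf law (\cref{hopfditlm}), and the cap definition \eqref{rule:S3}, that such a branch reduces to the same Z box on the remaining legs when $e_{s,j} = e_{t,j}$ and to the zero branch otherwise (via \eqref{rule:Zer}). The dualiser \eqref{rule:Du} is needed here to reconcile the $\bra{-i}$ convention of the qudit cap with the diagonal $\bra{ii}$ demanded by the trace.

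Finally I would regroup: branches that now carry the same remaining digits $k$ but originally had $e_{s,j} = e_{t,j} = y$ for the various $y$ are fused by repeated application of \eqref{rule:AD} (with \eqref{rule:Aso} and \eqref{rule:Sym} for bookkeeping), producing a single Z box of weight $\Sigma_k = \sum_y a_{k_y}$; collecting these over all $k$ and rebuilding the W-node tree yields the stated normal form. I expect the main obstacle to be the local cap-on-multipliers step: because the multiplier encodes a digit as a wire-count modulo $d$ while the trace uses the dualised Bell effect, making the clean diagonal condition $e_s = e_t = y$ emerge — rather than a shifted condition such as $e_s + e_t \equiv 0 \Mod{d}$ — requires careful \eqref{rule:Mu}, Hopf, and \eqref{rule:Du} bookkeeping.
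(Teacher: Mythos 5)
Your semantic picture and your endgame --- splitting the Z boxes into a diagonal class and an off-diagonal class, killing the latter, and summing each surviving $d$-tuple into a single box $\Sigma_k$ via the additive W structure --- match the paper's proof, which performs exactly this class division (its classes 1 and 2) and grouping. The gap is in your middle step, and it is structural rather than a matter of bookkeeping. In the normal form the branches do not have their own legs $s$ and $t$: each output wire carries a single pink node that is shared by \emph{all} $d^m$ Z boxes (box $i$ reaches pink node $x$ through a multiplier of weight $e_{x,i}$), so there is nothing for the cap to distribute over. The rules you cite for the distribution, \eqref{rule:AD} and \eqref{rule:BZW}, govern interactions of Z boxes and green spiders with W nodes; neither involves the pink nodes or the cap, so they cannot licence acting \enquote{branch by branch}. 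Worse, the distribution itself is false on the shared structure: a pink node adds its incoming digits modulo $d$, and an equality test performed after this addition (which is what the cap does) is not the product of per-branch equality tests, i.e. $[\,\textstyle\sum_i x_i \equiv \sum_i y_i\,] \neq \prod_i [\,x_i \equiv y_i\,]$. The only reason the trace acts branch-wise \emph{semantically} is the \enquote{one-hot} behaviour enforced by the W node at the top, and turning that semantic fact into rewrites is precisely the nontrivial content of the lemma; it cannot be invoked as a distribution principle.

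The paper's proof runs in the opposite direction, and this is the idea your proposal is missing: instead of distributing the cap, it \emph{fuses} the two pink output nodes through the cup into a single merged node (\cref{2pinkfusionlm}), after which each box $i$ has one multiplier into that node of weight $e_{s,i} - e_{t,i} \Mod{d}$ --- the dualiser hidden in the cup supplies the minus sign, which is also where your worry about getting $e_s + e_t \equiv 0$ instead of $e_s = e_t$ is resolved, by multiplier addition (\cref{multiplieraddlm}) rather than Hopf gymnastics. Boxes with $e_{s,i} \equiv e_{t,i}$ then have weight $0$ and disconnect from the merged node; they are grouped into $d$-tuples with identical remaining connections and combined into $\Sigma_k$, essentially as in your final step. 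Boxes with $e_{s,i} \not\equiv e_{t,i}$ remain connected to the merged node and are eliminated in a separate diagrammatic argument exploiting the W node at the top --- not by a branch-wise application of \eqref{rule:Zer}, which is unavailable since the branches never become separate diagrams. If you replace your distribution step by this fusion step, the rest of your outline goes through.
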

\begin{restatable}[Tensor product]{lemma}{nftensorproduct}
  \label{lem:nftensorproduct}
  \[
    \tikzfig{qudit-tensor/tensorshort}
  \]
  where $M=d^m-1,\, N=d^n-1$.
\end{restatable}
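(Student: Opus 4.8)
The plan is to show that the two normal forms, placed side by side, rewrite to the normal form of the Kronecker product $\overrightarrow{a}\otimes\overrightarrow{b}$. Recall that the normal form of a $d^m$-dimensional vector is a sum --- implemented by the W-node fan-outs on its $m$ outputs --- of $d^m$ terms, the $i$-th being a coefficient box of value $a_i$ wired to the outputs by multipliers whose weights spell out the $d$-ary expansion of $i$. The target diagram must therefore carry $d^{m+n}$ coefficient boxes, one per pair $(i,j)$, of value $a_i b_j$, with the first $m$ multiplier weights encoding the expansion of $i$ and the last $n$ encoding that of $j$, so that the combined index is $k=i\,d^{\,n}+j$. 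Since the lemma only asks for \emph{some} normal form, I would build the boxes in whatever order is convenient and invoke \cref{lem:nf2unf} at the end to reach the unique form.

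At the level of the interpretation the claim is just distributivity, $\left(\sum_i a_i\ket{i}\right)\otimes\left(\sum_j b_j\ket{j}\right)=\sum_{i,j} a_i b_j\ket{i}\ket{j}$ (with $\ket{i}$ abbreviating the relevant $d$-ary basis state), so the entire task is to realise this distribution by rewriting. The engine is the Z--W bialgebra \eqref{rule:BZW}: pushing the W-fans of the first normal form through those of the second copies each left summand $a_i$ across all $d^n$ right summands and each right summand $b_j$ across all $d^m$ left ones, turning the product of two independent W-sums into a single joint sum over pairs $(i,j)$. When a copy of the box $a_i$ meets a copy of $b_j$ on the same term, spider fusion \eqref{rule:S1} multiplies their values into the single box $a_i b_j$; the resulting legs are then merged into the required $d$-ary multiplier weights via \eqref{rule:Mu}, with zero-weight legs disconnecting by the Hopf law (\cref{hopfditlm}), and the W-copy interactions \eqref{rule:Bs0} and \eqref{rule:Bsj} govern how the fanned-out basis data passes through the W-nodes.

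To keep the combinatorics manageable I would organise the argument as an induction on $n$, whose atomic step is tensoring a normal form with a single-output normal form. In the base case $n=0$ the second diagram is a bare scalar $b_0$, and tensoring rescales every coefficient, absorbed into the coefficient boxes by \eqref{rule:S1} (up to the overall-scalar conventions of the normal form) to give the normal form of $b_0\,\overrightarrow{a}$. In the single-output step, appending one output multiplies the box count by $d$: each existing term splits into $d$ terms according to the dit of the new output, the W-fan on that output distributing across them by \eqref{rule:BZW} and \eqref{rule:AD}. Iterating and reindexing produces a normal form for $\overrightarrow{a}\otimes\overrightarrow{b}$.

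The principal obstacle I anticipate is bookkeeping rather than any isolated hard rewrite. One must verify that the bialgebra distribution yields \emph{exactly} the $d^{m+n}$ boxes $a_i b_j$, with no spurious or missing terms; that after fusion the multiplier weights genuinely encode $k=i\,d^{\,n}+j$ (first $m$ dits from $i$, last $n$ from $j$); and that the scalar factors introduced by fusing and by the W-node rules are tracked correctly, since the normal form pins down $a_0$ and the global scale. Ordering of the emergent boxes is the one part that is guaranteed to be cheap, being exactly what \cref{lem:nf2unf} absorbs; the delicate work is threading the multiplier weights and scalars through the \eqref{rule:BZW}/\eqref{rule:Mu} manipulations so that the final diagram is literally a normal form in the sense of \cref{subsec:normal-form}, and not merely interpretation-equal to one.
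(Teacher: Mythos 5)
Your first two paragraphs describe essentially the mechanism the paper itself uses: the paper's proof of this lemma is one direct chain of diagrammatic rewrites (no induction) that merges the two summation structures into a single one, lets copies of the coefficient boxes meet and fuse by \eqref{rule:S1} into boxes $a_ib_j$, and arranges the multiplier connections so that the first $m$ weights spell the $d$-ary expansion of $i$ and the last $n$ that of $j$, i.e.\@ the combined index $k = i\,d^n + j$, with \cref{lem:nf2unf} available for the final reordering. One caveat even on this part: in the tensor product the two normal forms start out \emph{disconnected}, so there are no W-fans of one to ``push through'' the other via \eqref{rule:BZW} until the two W structures have first been joined into a single one; that joining step is exactly where the W--W interaction --- rule \eqref{rule:WW} and lemmas in the style of \cref{2wadditionlm} --- comes in, and it is the one ingredient your rule inventory (\eqref{rule:BZW}, \eqref{rule:S1}, \eqref{rule:Mu}, Hopf, \eqref{rule:Bs0}, \eqref{rule:Bsj}, \eqref{rule:AD}) never mentions.

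The genuine gap is your proposed induction on $n$. Its atomic step --- tensoring an arbitrary normal form with a \emph{single-output} normal form --- can only be iterated to produce the normal form of $\overrightarrow{a}\otimes\overrightarrow{v_1}\otimes\cdots\otimes\overrightarrow{v_n}$ for $d$-dimensional vectors $\overrightarrow{v_1},\dotsc,\overrightarrow{v_n}$. A general $\overrightarrow{b}$ of dimension $d^n$ admits no such factorisation (already for $d=n=2$, take $\overrightarrow{b}=(1,0,0,1)$), so the normal form of $\overrightarrow{b}$ does not decompose into single-output normal forms, and iterating your atomic step never reaches the statement of the lemma: it proves it only for product vectors $\overrightarrow{b}$. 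Repairing this would require peeling off the last output of the second normal form by grouping its $d^n$ Z boxes according to their multiplier weight on that output --- a direct-sum decomposition rather than a tensor one --- and then distributing the first normal form over the resulting $d$ sub-normal-forms; carrying that out is essentially the same work as the direct rewrite, so the induction buys nothing. The paper avoids the issue by rewriting the full pair of normal forms at once, handling within a single derivation precisely the bookkeeping you flag (exactly $d^{m+n}$ boxes, weights encoding $k=i\,d^n+j$, and the scalar factors).
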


\section{Conclusion and further work}\label{sec:future}
In this paper, we introduce and prove the completeness of the qudit ZXW-calculus, a significant step in showing that the ZXW-calculus is a useful graphical calculus for quantum computing that unifies the ZX and ZW calculi.
The addition of the W node to the ZX-calculus allows us to do (controlled) summation, differentiation, and integration of ZX diagrams.
On the other hand, the ZXW trialgebra rule allows switching between X and W spiders, which makes rewriting and reasoning easier in many scenarios.

Our completeness proof utilised the matrix itself as a unique normal form for any ZXW diagram.
We not only prove that there always exists a series of rules to rewrite any ZXW diagram to its unique normal form, but moreover present a deterministic procedure to do so.
Immediately following from our result is the guarantee that the present ruleset suffices to prove the equivalence of any two mathematically equal ZXW diagrams in any qudit dimension, i.e.\@ it has no \enquote{missing rules}.

Note that in this work thus far, the question of ruleset minimality was only pursued lightly.
Through challenging the necessity of each rule and attempting to derive it from the other rules, it is anticipated, as has been achieved for the qubit ZX-calculus~\cite{vilmartNearMinimalAxiomatisationZXCalculus2019}, that the complete ruleset may be further streamlined and henceforth capture the key interactions more succinctly and conveniently.

As a generalisation of the ZX and ZW calculi, the ZXW-calculus should be able to tackle the problems accessible by both calculi, including circuit compilation, circuit simulation, reasoning with entanglement, and MBQC\@.
The ZXW-calculus has also branched out in new directions, such as quantum machine learning and quantum chemistry.
As future work, the ZXW-calculus can be applied to the study of quantum optics, where the W node plays an important role.
Another natural direction is to extend the ZXW calculus to handle problems such as speeding up tensor network contraction algorithms and solving counting problems~\cite{debeaudrapTensorNetworkRewriting2021}.
We leave this for future work.
In particular, to make progress on circuit compilation and MBQC, the development of techniques for circuit extraction and new flow conditions is likely necessary.

A flexsymmetric~\cite{caretteWhenOnlyTopology2021} version of the qudit ZXW-calculus can be obtained by defining the X-spider and the W node differently:
\begin{itemize}
  \item the flexsymmetric X-spider can be defined by replacing the $H^\dagger$-boxes with $H$-boxes in~\eqref{rule:HZ},
  \item the flexsymmetric W node can be defined using the diagram that corresponds to the 3-qudit standardised W state~\eqref{eq:wstate}.
\end{itemize}
An interesting path for future work would be to study the rules of this flexsymmetric qudit ZXW-calculus.

Our explicit representation of finite matrices in the ZXW-calculus makes clear the connection between quantum circuits and the matrices they implement, readily becoming a tool for tackling a number of open problems in qudit circuit synthesis.
For future work, it should also be possible to extend this representation to the qufinite case~\cite{wangQufiniteZXcalculusUnified2022}, where tensors can have unequal dimensions on each axis, thus allowing us to represent any finite tensor in the ZXW-calculus.
In the future, we can better reason about qudit quantum circuits by applying our result to establish completeness for the qudit ZX-calculus (without W as a generator), building upon the recent translation of any prime-dimensional qudit W state to a circuit directly representable in ZX~\cite{yehScalingStateCircuits2023}.

It is important to remember that completeness proofs do not give us a constructive algorithm to \emph{efficiently} rewrite one diagram into another, and it is necessary to develop further techniques for ZXW to achieve this.
The efficient rewriting of ZXW diagrams is a topic of ongoing research.

\section*{Acknowledgements}
We would like to thank Giovanni De Felice, Mark Koch, furthermore the anonymous reviewers for their detailed feedback and their numerous suggestions for improvement.
RS is supported by the Clarendon Fund Scholarship.
LY is supported by the Basil Reeve Graduate Scholarship at Oriel College with the Clarendon Fund.
RY thanks Simon Harrison for his generous support for the Wolfson Harrison UK Research Council Quantum Foundation Scholarship that he enjoys.

\bibliographystyle{eptcs}
\bibliography{preamble/references-bibtex}

\begin{thebibliography}{10}
\providecommand{\bibitemdeclare}[2]{}
\providecommand{\surnamestart}{}
\providecommand{\surnameend}{}
\providecommand{\urlprefix}{Available at }
\providecommand{\url}[1]{\texttt{#1}}
\providecommand{\href}[2]{\texttt{#2}}
\providecommand{\urlalt}[2]{\href{#1}{#2}}
\providecommand{\doi}[1]{doi:\urlalt{https://doi.org/#1}{#1}}
\providecommand{\eprint}[1]{arXiv:\urlalt{https://arxiv.org/abs/#1}{#1}}
\providecommand{\bibinfo}[2]{#2}

\bibitemdeclare{incollection}{abramskyCategoricalQuantumMechanics2008}
\bibitem{abramskyCategoricalQuantumMechanics2008}
\bibinfo{author}{Samson \surnamestart Abramsky\surnameend} \&
  \bibinfo{author}{Bob \surnamestart Coecke\surnameend} (\bibinfo{year}{2008}):
  \emph{\bibinfo{title}{Categorical Quantum Mechanics}}.
\newblock In: {\slshape \bibinfo{booktitle}{Handbook of {{Quantum Logic}} and
  {{Quantum Structures}}}}, \bibinfo{volume}{2},
  \bibinfo{publisher}{{Elsevier}}, pp. \bibinfo{pages}{261--325}.
\newblock \eprint{0808.1023}.

\bibitemdeclare{article}{backensZXcalculusCompleteStabilizer2014}
\bibitem{backensZXcalculusCompleteStabilizer2014}
\bibinfo{author}{Miriam \surnamestart Backens\surnameend}
  (\bibinfo{year}{2014}): \emph{\bibinfo{title}{The {{ZX-calculus}} Is Complete
  for Stabilizer Quantum Mechanics}}.
\newblock {\slshape \bibinfo{journal}{New Journal of Physics}}
  \bibinfo{volume}{16}(\bibinfo{number}{9}), p. \bibinfo{pages}{093021},
  \doi{10.1088/1367-2630/16/9/093021}.

\bibitemdeclare{article}{backensZXcalculusCompleteSinglequbit2014}
\bibitem{backensZXcalculusCompleteSinglequbit2014}
\bibinfo{author}{Miriam \surnamestart Backens\surnameend}
  (\bibinfo{year}{2014}): \emph{\bibinfo{title}{The {{ZX-calculus}} Is Complete
  for the Single-Qubit {{Clifford}}+{{T}} Group}}.
\newblock {\slshape \bibinfo{journal}{Electronic Proceedings in Theoretical
  Computer Science}} \bibinfo{volume}{172}, pp. \bibinfo{pages}{293--303},
  \doi{10.4204/EPTCS.172.21}.

\bibitemdeclare{article}{backensCompleteGraphicalCalculus2016}
\bibitem{backensCompleteGraphicalCalculus2016}
\bibinfo{author}{Miriam \surnamestart Backens\surnameend} \&
  \bibinfo{author}{Ali~Nabi \surnamestart Duman\surnameend}
  (\bibinfo{year}{2016}): \emph{\bibinfo{title}{A Complete Graphical Calculus
  for Spekkens’ Toy Bit Theory}}.
\newblock {\slshape \bibinfo{journal}{Foundations of Physics}}
  \bibinfo{volume}{46}(\bibinfo{number}{1}), pp. \bibinfo{pages}{70--103},
  \doi{10.1007/s10701-015-9957-7}.
\newblock \eprint{1411.1618}.

\bibitemdeclare{article}{banulsSimulatingLatticeGauge2020}
\bibitem{banulsSimulatingLatticeGauge2020}
\bibinfo{author}{Mari~Carmen \surnamestart Ba{\~n}uls\surnameend},
  \bibinfo{author}{Rainer \surnamestart Blatt\surnameend},
  \bibinfo{author}{Jacopo \surnamestart Catani\surnameend},
  \bibinfo{author}{Alessio \surnamestart Celi\surnameend},
  \bibinfo{author}{Juan~Ignacio \surnamestart Cirac\surnameend},
  \bibinfo{author}{Marcello \surnamestart Dalmonte\surnameend},
  \bibinfo{author}{Leonardo \surnamestart Fallani\surnameend},
  \bibinfo{author}{Karl \surnamestart Jansen\surnameend},
  \bibinfo{author}{Maciej \surnamestart Lewenstein\surnameend},
  \bibinfo{author}{Simone \surnamestart Montangero\surnameend},
  \bibinfo{author}{Christine~A. \surnamestart Muschik\surnameend},
  \bibinfo{author}{Benni \surnamestart Reznik\surnameend},
  \bibinfo{author}{Enrique \surnamestart Rico\surnameend},
  \bibinfo{author}{Luca \surnamestart Tagliacozzo\surnameend},
  \bibinfo{author}{Karel \surnamestart Van~Acoleyen\surnameend},
  \bibinfo{author}{Frank \surnamestart Verstraete\surnameend},
  \bibinfo{author}{Uwe-Jens \surnamestart Wiese\surnameend},
  \bibinfo{author}{Matthew \surnamestart Wingate\surnameend},
  \bibinfo{author}{Jakub \surnamestart Zakrzewski\surnameend} \&
  \bibinfo{author}{Peter \surnamestart Zoller\surnameend}
  (\bibinfo{year}{2020}): \emph{\bibinfo{title}{Simulating Lattice Gauge
  Theories within Quantum Technologies}}.
\newblock {\slshape \bibinfo{journal}{The European Physical Journal D}}
  \bibinfo{volume}{74}(\bibinfo{number}{8}), p. \bibinfo{pages}{165},
  \doi{10.1140/epjd/e2020-100571-8}.

\bibitemdeclare{misc}{debeaudrapFastEffectiveTechniques2020}
\bibitem{debeaudrapFastEffectiveTechniques2020}
\bibinfo{author}{Niel \surnamestart de~Beaudrap\surnameend},
  \bibinfo{author}{Xiaoning \surnamestart Bian\surnameend} \&
  \bibinfo{author}{Quanlong \surnamestart Wang\surnameend}
  (\bibinfo{year}{2020}): \emph{\bibinfo{title}{Fast and effective techniques
  for T-count reduction via spider nest identities}}.
\newblock \eprint{2004.05164}.

\bibitemdeclare{article}{debeaudrapTechniquesReducePi2020}
\bibitem{debeaudrapTechniquesReducePi2020}
\bibinfo{author}{Niel \surnamestart de~Beaudrap\surnameend},
  \bibinfo{author}{Xiaoning \surnamestart Bian\surnameend} \&
  \bibinfo{author}{Quanlong \surnamestart Wang\surnameend}
  (\bibinfo{year}{2020}): \emph{\bibinfo{title}{Techniques to Reduce
  $\pi/4$-Parity-Phase Circuits, Motivated by the ZX Calculus}}.
\newblock {\slshape \bibinfo{journal}{Electronic Proceedings in Theoretical
  Computer Science}} \bibinfo{volume}{318}, pp. \bibinfo{pages}{131--149},
  \doi{10.4204/EPTCS.318.9}.

\bibitemdeclare{article}{blokQuantumInformationScrambling2021}
\bibitem{blokQuantumInformationScrambling2021}
\bibinfo{author}{M.~S. \surnamestart Blok\surnameend}, \bibinfo{author}{V.~V.
  \surnamestart Ramasesh\surnameend}, \bibinfo{author}{T.~\surnamestart
  Schuster\surnameend}, \bibinfo{author}{K.~\surnamestart O'Brien\surnameend},
  \bibinfo{author}{J.~M. \surnamestart Kreikebaum\surnameend},
  \bibinfo{author}{D.~\surnamestart Dahlen\surnameend},
  \bibinfo{author}{A.~\surnamestart Morvan\surnameend},
  \bibinfo{author}{B.~\surnamestart Yoshida\surnameend}, \bibinfo{author}{N.~Y.
  \surnamestart Yao\surnameend} \& \bibinfo{author}{I.~\surnamestart
  Siddiqi\surnameend} (\bibinfo{year}{2021}): \emph{\bibinfo{title}{Quantum
  {{Information Scrambling}} on a {{Superconducting Qutrit Processor}}}}.
\newblock {\slshape \bibinfo{journal}{Physical Review X}}
  \bibinfo{volume}{11}(\bibinfo{number}{2}), p. \bibinfo{pages}{021010},
  \doi{10.1103/PhysRevX.11.021010}.

\bibitemdeclare{article}{bonchiStringDiagramRewrite2022a}
\bibitem{bonchiStringDiagramRewrite2022a}
\bibinfo{author}{Filippo \surnamestart Bonchi\surnameend},
  \bibinfo{author}{Fabio \surnamestart Gadducci\surnameend},
  \bibinfo{author}{Aleks \surnamestart Kissinger\surnameend},
  \bibinfo{author}{Pawel \surnamestart Sobocinski\surnameend} \&
  \bibinfo{author}{Fabio \surnamestart Zanasi\surnameend}
  (\bibinfo{year}{2022}): \emph{\bibinfo{title}{String {{Diagram Rewrite Theory
  I}}: {{Rewriting}} with {{Frobenius Structure}}}}.
\newblock {\slshape \bibinfo{journal}{Journal of the ACM}}
  \bibinfo{volume}{69}(\bibinfo{number}{2}), pp. \bibinfo{pages}{14:1--14:58},
  \doi{10.1145/3502719}.
\newblock \eprint{2012.01847}.

\bibitemdeclare{article}{bonchiInteractingHopfAlgebras2017}
\bibitem{bonchiInteractingHopfAlgebras2017}
\bibinfo{author}{Filippo \surnamestart Bonchi\surnameend},
  \bibinfo{author}{Pawe{\l} \surnamestart Soboci{\'n}ski\surnameend} \&
  \bibinfo{author}{Fabio \surnamestart Zanasi\surnameend}
  (\bibinfo{year}{2017}): \emph{\bibinfo{title}{Interacting {{Hopf}}
  Algebras}}.
\newblock {\slshape \bibinfo{journal}{Journal of Pure and Applied Algebra}}
  \bibinfo{volume}{221}(\bibinfo{number}{1}), pp. \bibinfo{pages}{144--184},
  \doi{10.1016/j.jpaa.2016.06.002}.
\newblock \eprint{1403.7048}.

\bibitemdeclare{misc}{boothCompleteZXcalculiStabiliser2022}
\bibitem{boothCompleteZXcalculiStabiliser2022}
\bibinfo{author}{Robert~I. \surnamestart Booth\surnameend} \&
  \bibinfo{author}{Titouan \surnamestart Carette\surnameend}
  (\bibinfo{year}{2022}): \emph{\bibinfo{title}{Complete {{ZX-calculi}} for the
  Stabiliser Fragment in Odd Prime Dimensions}}.
\newblock \eprint{2204.12531}.

\bibitemdeclare{article}{campbellEnhancedFaulttolerantQuantum2014}
\bibitem{campbellEnhancedFaulttolerantQuantum2014}
\bibinfo{author}{Earl~T. \surnamestart Campbell\surnameend}
  (\bibinfo{year}{2014}): \emph{\bibinfo{title}{Enhanced Fault-Tolerant Quantum
  Computing in d-Level Systems}}.
\newblock {\slshape \bibinfo{journal}{Physical Review Letters}}
  \bibinfo{volume}{113}(\bibinfo{number}{23}), p. \bibinfo{pages}{230501},
  \doi{10.1103/PhysRevLett.113.230501}.
\newblock \eprint{1406.3055}.

\bibitemdeclare{article}{caoQutritZXCalculusSuperconducting2022}
\bibitem{caoQutritZXCalculusSuperconducting2022}
\bibinfo{author}{Shuxiang \surnamestart Cao\surnameend}, \bibinfo{author}{Lia
  \surnamestart Yeh\surnameend}, \bibinfo{author}{Mustafa \surnamestart
  Bakr\surnameend}, \bibinfo{author}{Giulio \surnamestart
  Campanaro\surnameend}, \bibinfo{author}{Simone \surnamestart
  Fasciati\surnameend}, \bibinfo{author}{James \surnamestart Wills\surnameend},
  \bibinfo{author}{Boris \surnamestart Shteynas\surnameend},
  \bibinfo{author}{Vivek \surnamestart Chidambaram\surnameend},
  \bibinfo{author}{John \surnamestart {van de Wetering}\surnameend} \&
  \bibinfo{author}{Peter \surnamestart Leek\surnameend} (\bibinfo{year}{2022}):
  \emph{\bibinfo{title}{Qutrit-{{ZX Calculus}} on Superconducting Transmon
  Qutrits}}.
\newblock {\slshape \bibinfo{journal}{Bulletin of the American Physical
  Society}} \bibinfo{volume}{2022}, p. \bibinfo{pages}{F37.007}.

\bibitemdeclare{misc}{caretteWhenOnlyTopology2021}
\bibitem{caretteWhenOnlyTopology2021}
\bibinfo{author}{Titouan \surnamestart Carette\surnameend}
  (\bibinfo{year}{2021}): \emph{\bibinfo{title}{When {{Only Topology
  Matters}}}}.
\newblock \eprint{2102.03178}.

\bibitemdeclare{inproceedings}{caretteSZXCalculusScalableGraphical2019}
\bibitem{caretteSZXCalculusScalableGraphical2019}
\bibinfo{author}{Titouan \surnamestart Carette\surnameend},
  \bibinfo{author}{Dominic \surnamestart Horsman\surnameend} \&
  \bibinfo{author}{Simon \surnamestart Perdrix\surnameend}
  (\bibinfo{year}{2019}): \emph{\bibinfo{title}{{{SZX-Calculus}}: {{Scalable
  Graphical Quantum Reasoning}}}}.
\newblock In \bibinfo{editor}{Peter \surnamestart Rossmanith\surnameend},
  \bibinfo{editor}{Pinar \surnamestart Heggernes\surnameend} \&
  \bibinfo{editor}{Joost-Pieter \surnamestart Katoen\surnameend}, editors:
  {\slshape \bibinfo{booktitle}{44th {{International Symposium}} on
  {{Mathematical Foundations}} of {{Computer Science}} ({{MFCS}} 2019)}},
  {\slshape \bibinfo{series}{Leibniz {{International Proceedings}} in
  {{Informatics}} ({{LIPIcs}})}} \bibinfo{volume}{138},
  \bibinfo{publisher}{{Schloss Dagstuhl\textendash Leibniz-Zentrum fuer
  Informatik}}, \bibinfo{address}{{Dagstuhl, Germany}}, pp.
  \bibinfo{pages}{55:1--55:15}, \doi{10.4230/LIPIcs.MFCS.2019.55}.

\bibitemdeclare{article}{chiProgrammableQuditbasedQuantum2022}
\bibitem{chiProgrammableQuditbasedQuantum2022}
\bibinfo{author}{Yulin \surnamestart Chi\surnameend}, \bibinfo{author}{Jieshan
  \surnamestart Huang\surnameend}, \bibinfo{author}{Zhanchuan \surnamestart
  Zhang\surnameend}, \bibinfo{author}{Jun \surnamestart Mao\surnameend},
  \bibinfo{author}{Zinan \surnamestart Zhou\surnameend},
  \bibinfo{author}{Xiaojiong \surnamestart Chen\surnameend},
  \bibinfo{author}{Chonghao \surnamestart Zhai\surnameend},
  \bibinfo{author}{Jueming \surnamestart Bao\surnameend},
  \bibinfo{author}{Tianxiang \surnamestart Dai\surnameend},
  \bibinfo{author}{Huihong \surnamestart Yuan\surnameend},
  \bibinfo{author}{Ming \surnamestart Zhang\surnameend},
  \bibinfo{author}{Daoxin \surnamestart Dai\surnameend},
  \bibinfo{author}{Bo~\surnamestart Tang\surnameend}, \bibinfo{author}{Yan
  \surnamestart Yang\surnameend}, \bibinfo{author}{Zhihua \surnamestart
  Li\surnameend}, \bibinfo{author}{Yunhong \surnamestart Ding\surnameend},
  \bibinfo{author}{Leif~K. \surnamestart Oxenl{\o}we\surnameend},
  \bibinfo{author}{Mark~G. \surnamestart Thompson\surnameend},
  \bibinfo{author}{Jeremy~L. \surnamestart O'Brien\surnameend},
  \bibinfo{author}{Yan \surnamestart Li\surnameend}, \bibinfo{author}{Qihuang
  \surnamestart Gong\surnameend} \& \bibinfo{author}{Jianwei \surnamestart
  Wang\surnameend} (\bibinfo{year}{2022}): \emph{\bibinfo{title}{A Programmable
  Qudit-Based Quantum Processor}}.
\newblock {\slshape \bibinfo{journal}{Nature Communications}}
  \bibinfo{volume}{13}(\bibinfo{number}{1}), p. \bibinfo{pages}{1166},
  \doi{10.1038/s41467-022-28767-x}.

\bibitemdeclare{misc}{coeckeInteractingQuantumObservables2007}
\bibitem{coeckeInteractingQuantumObservables2007}
\bibinfo{author}{B.~\surnamestart Coecke\surnameend} \&
  \bibinfo{author}{R.~\surnamestart Duncan\surnameend} (\bibinfo{year}{2007}):
  \emph{\bibinfo{title}{Interacting Quantum Observables}}.
\newblock \urlprefix\url{www.cs.ox.ac.uk/people/bob.coecke/GreenRed.pdf}.

\bibitemdeclare{misc}{coeckeFoundationsNearTermQuantum2020}
\bibitem{coeckeFoundationsNearTermQuantum2020}
\bibinfo{author}{Bob \surnamestart Coecke\surnameend},
  \bibinfo{author}{Giovanni \surnamestart {de Felice}\surnameend},
  \bibinfo{author}{Konstantinos \surnamestart Meichanetzidis\surnameend} \&
  \bibinfo{author}{Alexis \surnamestart Toumi\surnameend}
  (\bibinfo{year}{2020}): \emph{\bibinfo{title}{Foundations for {{Near-Term
  Quantum Natural Language Processing}}}}.
\newblock \eprint{2012.03755}.

\bibitemdeclare{inproceedings}{coeckeInteractingQuantumObservables2008}
\bibitem{coeckeInteractingQuantumObservables2008}
\bibinfo{author}{Bob \surnamestart Coecke\surnameend} \& \bibinfo{author}{Ross
  \surnamestart Duncan\surnameend} (\bibinfo{year}{2008}):
  \emph{\bibinfo{title}{Interacting {{Quantum Observables}}}}.
\newblock In \bibinfo{editor}{Luca \surnamestart Aceto\surnameend},
  \bibinfo{editor}{Ivan \surnamestart Damg{\aa}rd\surnameend},
  \bibinfo{editor}{Leslie~Ann \surnamestart Goldberg\surnameend},
  \bibinfo{editor}{Magn{\'u}s~M. \surnamestart Halld{\'o}rsson\surnameend},
  \bibinfo{editor}{Anna \surnamestart Ing{\'o}lfsd{\'o}ttir\surnameend} \&
  \bibinfo{editor}{Igor \surnamestart Walukiewicz\surnameend}, editors:
  {\slshape \bibinfo{booktitle}{Automata, {{Languages}} and {{Programming}}}},
  \bibinfo{series}{Lecture {{Notes}} in {{Computer Science}}},
  \bibinfo{publisher}{{Springer}}, \bibinfo{address}{{Berlin, Heidelberg}}, pp.
  \bibinfo{pages}{298--310}, \doi{10.1007/978-3-540-70583-3_25}.
\newblock
  \urlprefix\url{http://personal.strath.ac.uk/ross.duncan/papers/iqo-icalp.pdf}.

\bibitemdeclare{article}{coeckeInteractingQuantumObservables2011}
\bibitem{coeckeInteractingQuantumObservables2011}
\bibinfo{author}{Bob \surnamestart Coecke\surnameend} \& \bibinfo{author}{Ross
  \surnamestart Duncan\surnameend} (\bibinfo{year}{2011}):
  \emph{\bibinfo{title}{Interacting Quantum Observables: Categorical Algebra
  and Diagrammatics}}.
\newblock {\slshape \bibinfo{journal}{New Journal of Physics}}
  \bibinfo{volume}{13}(\bibinfo{number}{4}), p. \bibinfo{pages}{043016},
  \doi{10.1088/1367-2630/13/4/043016}.

\bibitemdeclare{incollection}{coeckeGeneralisedCompositionalTheories2016}
\bibitem{coeckeGeneralisedCompositionalTheories2016}
\bibinfo{author}{Bob \surnamestart Coecke\surnameend}, \bibinfo{author}{Ross
  \surnamestart Duncan\surnameend}, \bibinfo{author}{Aleks \surnamestart
  Kissinger\surnameend} \& \bibinfo{author}{Quanlong \surnamestart
  Wang\surnameend} (\bibinfo{year}{2016}): \emph{\bibinfo{title}{Generalised
  {{Compositional Theories}} and {{Diagrammatic Reasoning}}}}.
\newblock In \bibinfo{editor}{Giulio \surnamestart Chiribella\surnameend} \&
  \bibinfo{editor}{Robert~W. \surnamestart Spekkens\surnameend}, editors:
  {\slshape \bibinfo{booktitle}{Quantum {{Theory}}: {{Informational
  Foundations}} and {{Foils}}}}, \bibinfo{series}{Fundamental {{Theories}} of
  {{Physics}}}, \bibinfo{publisher}{{Springer Netherlands}},
  \bibinfo{address}{{Dordrecht}}, pp. \bibinfo{pages}{309--366},
  \doi{10.1007/978-94-017-7303-4_10}.
\newblock \eprint{1506.03632}.

\bibitemdeclare{article}{coeckeThreeQubitEntanglement2011}
\bibitem{coeckeThreeQubitEntanglement2011}
\bibinfo{author}{Bob \surnamestart Coecke\surnameend} \& \bibinfo{author}{Bill
  \surnamestart Edwards\surnameend} (\bibinfo{year}{2011}):
  \emph{\bibinfo{title}{Three Qubit Entanglement within Graphical
  {{Z}}/{{X-calculus}}}}.
\newblock {\slshape \bibinfo{journal}{Electronic Proceedings in Theoretical
  Computer Science}} \bibinfo{volume}{52}, pp. \bibinfo{pages}{22--33},
  \doi{10.4204/EPTCS.52.3}.

\bibitemdeclare{article}{coeckePhaseGroupsOrigin2011}
\bibitem{coeckePhaseGroupsOrigin2011}
\bibinfo{author}{Bob \surnamestart Coecke\surnameend}, \bibinfo{author}{Bill
  \surnamestart Edwards\surnameend} \& \bibinfo{author}{Robert~W. \surnamestart
  Spekkens\surnameend} (\bibinfo{year}{2011}): \emph{\bibinfo{title}{Phase
  {{Groups}} and the {{Origin}} of {{Non-locality}} for {{Qubits}}}}.
\newblock {\slshape \bibinfo{journal}{Electronic Notes in Theoretical Computer
  Science}} \bibinfo{volume}{270}(\bibinfo{number}{2}), pp.
  \bibinfo{pages}{15--36}, \doi{10.1016/j.entcs.2011.01.021}.

\bibitemdeclare{book}{coeckeQuantumPictures2022}
\bibitem{coeckeQuantumPictures2022}
\bibinfo{author}{Bob \surnamestart Coecke\surnameend} \&
  \bibinfo{author}{Stefano \surnamestart Gogioso\surnameend}
  (\bibinfo{year}{2022}): \emph{\bibinfo{title}{Quantum in Pictures}}.
\newblock \bibinfo{publisher}{{Quantinuum}}.

\bibitemdeclare{misc}{coeckeQuantumPicturesExperiment2022}
\bibitem{coeckeQuantumPicturesExperiment2022}
\bibinfo{author}{Bob \surnamestart Coecke\surnameend}, \bibinfo{author}{Stefano
  \surnamestart Gogioso\surnameend}, \bibinfo{author}{Aleks \surnamestart
  Kissinger\surnameend}, \bibinfo{author}{Selma \surnamestart
  {D{\"u}ndar-Coecke}\surnameend}, \bibinfo{author}{Thomas \surnamestart
  Cervoni\surnameend}, \bibinfo{author}{Sieglinde \surnamestart
  Pfaendler\surnameend}, \bibinfo{author}{Ilyas \surnamestart Khan\surnameend}
  \& \bibinfo{author}{Peter \surnamestart Sigrist\surnameend}
  (\bibinfo{year}{2022}): \emph{\bibinfo{title}{Quantum in Pictures: An
  Experiment}}.
\newblock \bibinfo{note}{Talk abstract}.

\bibitemdeclare{article}{coeckeKindergardenQuantumMechanics2022}
\bibitem{coeckeKindergardenQuantumMechanics2022}
\bibinfo{author}{Bob \surnamestart Coecke\surnameend}, \bibinfo{author}{Dominic
  \surnamestart Horsman\surnameend}, \bibinfo{author}{Aleks \surnamestart
  Kissinger\surnameend} \& \bibinfo{author}{Quanlong \surnamestart
  Wang\surnameend} (\bibinfo{year}{2022}): \emph{\bibinfo{title}{Kindergarden
  Quantum Mechanics Graduates ...or How {{I}} Learned to Stop Gluing {{LEGO}}
  Together and Love the {{ZX-calculus}}}}.
\newblock {\slshape \bibinfo{journal}{Theoretical Computer Science}}
  \bibinfo{volume}{897}, pp. \bibinfo{pages}{1--22},
  \doi{10.1016/j.tcs.2021.07.024}.
\newblock \eprint{2102.10984}.

\bibitemdeclare{book}{coeckePicturingQuantumProcesses2017}
\bibitem{coeckePicturingQuantumProcesses2017}
\bibinfo{author}{Bob \surnamestart Coecke\surnameend} \& \bibinfo{author}{Aleks
  \surnamestart Kissinger\surnameend} (\bibinfo{year}{2017}):
  \emph{\bibinfo{title}{Picturing Quantum Processes}}.
\newblock \bibinfo{publisher}{{Cambridge University Press}}.

\bibitemdeclare{article}{coeckeGHZWcalculusContains2011}
\bibitem{coeckeGHZWcalculusContains2011}
\bibinfo{author}{Bob \surnamestart Coecke\surnameend}, \bibinfo{author}{Aleks
  \surnamestart Kissinger\surnameend}, \bibinfo{author}{Alex \surnamestart
  Merry\surnameend} \& \bibinfo{author}{Shibdas \surnamestart Roy\surnameend}
  (\bibinfo{year}{2011}): \emph{\bibinfo{title}{The {{GHZ}}/{{W-calculus}}
  Contains Rational Arithmetic}}.
\newblock {\slshape \bibinfo{journal}{Electronic Proceedings in Theoretical
  Computer Science}} \bibinfo{volume}{52}, pp. \bibinfo{pages}{34--48},
  \doi{10.4204/EPTCS.52.4}.

\bibitemdeclare{inproceedings}{coeckeZXRules2QubitClifford2018}
\bibitem{coeckeZXRules2QubitClifford2018}
\bibinfo{author}{Bob \surnamestart Coecke\surnameend} \&
  \bibinfo{author}{Quanlong \surnamestart Wang\surnameend}
  (\bibinfo{year}{2018}): \emph{\bibinfo{title}{{{ZX-Rules}} for 2-{{Qubit
  Clifford}}+{{T Quantum Circuits}}}}.
\newblock In \bibinfo{editor}{Jarkko \surnamestart Kari\surnameend} \&
  \bibinfo{editor}{Irek \surnamestart Ulidowski\surnameend}, editors: {\slshape
  \bibinfo{booktitle}{Reversible {{Computation}}}}, \bibinfo{series}{Lecture
  {{Notes}} in {{Computer Science}}}, \bibinfo{publisher}{{Springer
  International Publishing}}, \bibinfo{address}{{Cham}}, pp.
  \bibinfo{pages}{144--161}, \doi{10.1007/978-3-319-99498-7_10}.
\newblock \eprint{1804.05356}.

\bibitemdeclare{article}{debeaudrapZXCalculusLanguage2020}
\bibitem{debeaudrapZXCalculusLanguage2020}
\bibinfo{author}{Niel \surnamestart {de Beaudrap}\surnameend} \&
  \bibinfo{author}{Dominic \surnamestart Horsman\surnameend}
  (\bibinfo{year}{2020}): \emph{\bibinfo{title}{The {{ZX}} Calculus Is a
  Language for Surface Code Lattice Surgery}}.
\newblock {\slshape \bibinfo{journal}{Quantum}} \bibinfo{volume}{4}, p.
  \bibinfo{pages}{218}, \doi{10.22331/q-2020-01-09-218}.

\bibitemdeclare{article}{debeaudrapTensorNetworkRewriting2021}
\bibitem{debeaudrapTensorNetworkRewriting2021}
\bibinfo{author}{Niel \surnamestart {de Beaudrap}\surnameend},
  \bibinfo{author}{Aleks \surnamestart Kissinger\surnameend} \&
  \bibinfo{author}{Konstantinos \surnamestart Meichanetzidis\surnameend}
  (\bibinfo{year}{2021}): \emph{\bibinfo{title}{Tensor {{Network Rewriting
  Strategies}} for {{Satisfiability}} and {{Counting}}}}.
\newblock {\slshape \bibinfo{journal}{Electronic Proceedings in Theoretical
  Computer Science}} \bibinfo{volume}{340}, pp. \bibinfo{pages}{46--59},
  \doi{10.4204/EPTCS.340.3}.

\bibitemdeclare{misc}{defeliceQuantumLinearOptics2022}
\bibitem{defeliceQuantumLinearOptics2022}
\bibinfo{author}{Giovanni \surnamestart {de Felice}\surnameend} \&
  \bibinfo{author}{Bob \surnamestart Coecke\surnameend} (\bibinfo{year}{2022}):
  \emph{\bibinfo{title}{Quantum {{Linear Optics}} via {{String Diagrams}}}}.
\newblock \eprint{2204.12985}.

\bibitemdeclare{article}{dewittZXcalculusIncompleteQuantum2014}
\bibitem{dewittZXcalculusIncompleteQuantum2014}
\bibinfo{author}{Christian~Schr{\"o}der \surnamestart {de Witt}\surnameend} \&
  \bibinfo{author}{Vladimir \surnamestart Zamdzhiev\surnameend}
  (\bibinfo{year}{2014}): \emph{\bibinfo{title}{The {{ZX-calculus}} Is
  Incomplete for Quantum Mechanics}}.
\newblock {\slshape \bibinfo{journal}{Electronic Proceedings in Theoretical
  Computer Science}} \bibinfo{volume}{172}, pp. \bibinfo{pages}{285--292},
  \doi{10.4204/EPTCS.172.20}.

\bibitemdeclare{inproceedings}{duncanGraphStatesNecessity2009}
\bibitem{duncanGraphStatesNecessity2009}
\bibinfo{author}{Ross \surnamestart Duncan\surnameend} \&
  \bibinfo{author}{Simon \surnamestart Perdrix\surnameend}
  (\bibinfo{year}{2009}): \emph{\bibinfo{title}{Graph {{States}} and the
  {{Necessity}} of {{Euler Decomposition}}}}.
\newblock In \bibinfo{editor}{Klaus \surnamestart {Ambos-Spies}\surnameend},
  \bibinfo{editor}{Benedikt \surnamestart L{\"o}we\surnameend} \&
  \bibinfo{editor}{Wolfgang \surnamestart Merkle\surnameend}, editors:
  {\slshape \bibinfo{booktitle}{Mathematical {{Theory}} and {{Computational
  Practice}}}}, \bibinfo{series}{Lecture {{Notes}} in {{Computer Science}}},
  \bibinfo{publisher}{{Springer}}, \bibinfo{address}{{Berlin, Heidelberg}}, pp.
  \bibinfo{pages}{167--177}, \doi{10.1007/978-3-642-03073-4_18}.
\newblock \eprint{0902.0500}.

\bibitemdeclare{inproceedings}{hadzihasanovicDiagrammaticAxiomatisationQubit2015}
\bibitem{hadzihasanovicDiagrammaticAxiomatisationQubit2015}
\bibinfo{author}{Amar \surnamestart Hadzihasanovic\surnameend}
  (\bibinfo{year}{2015}): \emph{\bibinfo{title}{A {{Diagrammatic
  Axiomatisation}} for {{Qubit Entanglement}}}}.
\newblock In: {\slshape \bibinfo{booktitle}{Proceedings of the 2015 30th
  {{Annual ACM}}/{{IEEE Symposium}} on {{Logic}} in {{Computer Science}}}},
  \bibinfo{series}{{{LICS}} '15}, \bibinfo{publisher}{{IEEE Computer Society}},
  \bibinfo{address}{{USA}}, pp. \bibinfo{pages}{573--584},
  \doi{10.1109/LICS.2015.59}.
\newblock \eprint{1501.07082}.

\bibitemdeclare{phdthesis}{hadzihasanovicAlgebraEntanglementGeometry2017}
\bibitem{hadzihasanovicAlgebraEntanglementGeometry2017}
\bibinfo{author}{Amar \surnamestart Hadzihasanovic\surnameend}
  (\bibinfo{year}{2017}): \emph{\bibinfo{title}{The Algebra of Entanglement and
  the Geometry of Composition}}.
\newblock Ph.D. thesis, \bibinfo{school}{University of Oxford}.
\newblock \eprint{1709.08086}.

\bibitemdeclare{inproceedings}{hadzihasanovicTwoCompleteAxiomatisations2018}
\bibitem{hadzihasanovicTwoCompleteAxiomatisations2018}
\bibinfo{author}{Amar \surnamestart Hadzihasanovic\surnameend},
  \bibinfo{author}{Kang~Feng \surnamestart Ng\surnameend} \&
  \bibinfo{author}{Quanlong \surnamestart Wang\surnameend}
  (\bibinfo{year}{2018}): \emph{\bibinfo{title}{Two Complete Axiomatisations of
  Pure-State Qubit Quantum Computing}}.
\newblock In: {\slshape \bibinfo{booktitle}{Proceedings of the 33rd {{Annual
  ACM}}/{{IEEE Symposium}} on {{Logic}} in {{Computer Science}}}},
  \bibinfo{series}{{{LICS}} '18}, \bibinfo{publisher}{{Association for
  Computing Machinery}}, \bibinfo{address}{{New York, NY, USA}}, pp.
  \bibinfo{pages}{502--511}, \doi{10.1145/3209108.3209128}.
\newblock
  \urlprefix\url{https://www.cs.ox.ac.uk/files/11713/proceedings_paper_710-Two%20complete%20axiomatisations%20of%20pure-state%20qubit%20quantum%20computing.pdf}.

\bibitemdeclare{book}{heunenCategoriesQuantumTheory2019}
\bibitem{heunenCategoriesQuantumTheory2019}
\bibinfo{author}{Chris \surnamestart Heunen\surnameend} \&
  \bibinfo{author}{Jamie \surnamestart Vicary\surnameend}
  (\bibinfo{year}{2019}): \emph{\bibinfo{title}{Categories for {{Quantum
  Theory}}: {{An Introduction}}}}.
\newblock \bibinfo{publisher}{{Oxford University Press}},
  \doi{10.1093/oso/9780198739623.001.0001}.

\bibitemdeclare{misc}{hillRealizationArbitraryDoublycontrolled2021}
\bibitem{hillRealizationArbitraryDoublycontrolled2021}
\bibinfo{author}{Alexander~D. \surnamestart Hill\surnameend},
  \bibinfo{author}{Mark~J. \surnamestart Hodson\surnameend},
  \bibinfo{author}{Nicolas \surnamestart Didier\surnameend} \&
  \bibinfo{author}{Matthew~J. \surnamestart Reagor\surnameend}
  (\bibinfo{year}{2021}): \emph{\bibinfo{title}{Realization of arbitrary
  doubly-controlled quantum phase gates}}.
\newblock \eprint{2108.01652}.

\bibitemdeclare{misc}{hrmoNativeQuditEntanglement2022}
\bibitem{hrmoNativeQuditEntanglement2022}
\bibinfo{author}{Pavel \surnamestart Hrmo\surnameend},
  \bibinfo{author}{Benjamin \surnamestart Wilhelm\surnameend},
  \bibinfo{author}{Lukas \surnamestart Gerster\surnameend},
  \bibinfo{author}{Martin~W. \surnamestart {van Mourik}\surnameend},
  \bibinfo{author}{Marcus \surnamestart Huber\surnameend},
  \bibinfo{author}{Rainer \surnamestart Blatt\surnameend},
  \bibinfo{author}{Philipp \surnamestart Schindler\surnameend},
  \bibinfo{author}{Thomas \surnamestart Monz\surnameend} \&
  \bibinfo{author}{Martin \surnamestart Ringbauer\surnameend}
  (\bibinfo{year}{2022}): \emph{\bibinfo{title}{Native Qudit Entanglement in a
  Trapped Ion Quantum Processor}}.
\newblock \eprint{2206.04104}.

\bibitemdeclare{inproceedings}{jeandelCompleteAxiomatisationZXCalculus2018}
\bibitem{jeandelCompleteAxiomatisationZXCalculus2018}
\bibinfo{author}{Emmanuel \surnamestart Jeandel\surnameend},
  \bibinfo{author}{Simon \surnamestart Perdrix\surnameend} \&
  \bibinfo{author}{Renaud \surnamestart Vilmart\surnameend}
  (\bibinfo{year}{2018}): \emph{\bibinfo{title}{A {{Complete Axiomatisation}}
  of the {{ZX-Calculus}} for {{Clifford}}+{{T Quantum Mechanics}}}}.
\newblock In: {\slshape \bibinfo{booktitle}{Proceedings of the 33rd {{Annual
  ACM}}/{{IEEE Symposium}} on {{Logic}} in {{Computer Science}}}},
  \bibinfo{series}{{{LICS}} '18}, \bibinfo{publisher}{{Association for
  Computing Machinery}}, \bibinfo{address}{{New York, NY, USA}}, pp.
  \bibinfo{pages}{559--568}, \doi{10.1145/3209108.3209131}.
\newblock \eprint{1705.11151}.

\bibitemdeclare{inproceedings}{jeandelDiagrammaticReasoningClifford2018}
\bibitem{jeandelDiagrammaticReasoningClifford2018}
\bibinfo{author}{Emmanuel \surnamestart Jeandel\surnameend},
  \bibinfo{author}{Simon \surnamestart Perdrix\surnameend} \&
  \bibinfo{author}{Renaud \surnamestart Vilmart\surnameend}
  (\bibinfo{year}{2018}): \emph{\bibinfo{title}{Diagrammatic {{Reasoning}}
  beyond {{Clifford}}+{{T Quantum Mechanics}}}}.
\newblock In: {\slshape \bibinfo{booktitle}{Proceedings of the 33rd {{Annual
  ACM}}/{{IEEE Symposium}} on {{Logic}} in {{Computer Science}}}},
  \bibinfo{series}{{{LICS}} '18}, \bibinfo{publisher}{{Association for
  Computing Machinery}}, \bibinfo{address}{{New York, NY, USA}}, pp.
  \bibinfo{pages}{569--578}, \doi{10.1145/3209108.3209139}.
\newblock \eprint{1801.10142}.

\bibitemdeclare{misc}{khesinGraphicalQuantumCliffordencoder2023}
\bibitem{khesinGraphicalQuantumCliffordencoder2023}
\bibinfo{author}{Andrey~Boris \surnamestart Khesin\surnameend},
  \bibinfo{author}{Jonathan~Z. \surnamestart Lu\surnameend} \&
  \bibinfo{author}{Peter~W. \surnamestart Shor\surnameend}
  (\bibinfo{year}{2023}): \emph{\bibinfo{title}{Graphical Quantum
  {{Clifford-encoder}} Compilers from the {{ZX}} Calculus}}.
\newblock \eprint{2301.02356}.

\bibitemdeclare{misc}{kissingerPhasefreeZXDiagrams2022}
\bibitem{kissingerPhasefreeZXDiagrams2022}
\bibinfo{author}{Aleks \surnamestart Kissinger\surnameend}
  (\bibinfo{year}{2022}): \emph{\bibinfo{title}{Phase-free ZX diagrams are CSS
  codes (...or how to graphically grok the surface code)}}.
\newblock \eprint{2204.14038}.

\bibitemdeclare{article}{kissingerReducingTcountZXcalculus2020}
\bibitem{kissingerReducingTcountZXcalculus2020}
\bibinfo{author}{Aleks \surnamestart Kissinger\surnameend} \&
  \bibinfo{author}{John \surnamestart {van de Wetering}\surnameend}
  (\bibinfo{year}{2020}): \emph{\bibinfo{title}{Reducing {{T-count}} with the
  {{ZX-calculus}}}}.
\newblock {\slshape \bibinfo{journal}{Physical Review A}}
  \bibinfo{volume}{102}(\bibinfo{number}{2}), p. \bibinfo{pages}{022406},
  \doi{10.1103/PhysRevA.102.022406}.
\newblock \eprint{1903.10477}.

\bibitemdeclare{misc}{kissingerClassicalSimulationQuantum2022}
\bibitem{kissingerClassicalSimulationQuantum2022}
\bibinfo{author}{Aleks \surnamestart Kissinger\surnameend},
  \bibinfo{author}{John \surnamestart {van de Wetering}\surnameend} \&
  \bibinfo{author}{Renaud \surnamestart Vilmart\surnameend}
  (\bibinfo{year}{2022}): \emph{\bibinfo{title}{Classical Simulation of Quantum
  Circuits with Partial and Graphical Stabiliser Decompositions}},
  \doi{10.4230/LIPIcs.TQC.2022.5}.

\bibitemdeclare{article}{kissingerUniversalMBQCGeneralised2019}
\bibitem{kissingerUniversalMBQCGeneralised2019}
\bibinfo{author}{Aleks \surnamestart Kissinger\surnameend} \&
  \bibinfo{author}{John \surnamestart van~de Wetering\surnameend}
  (\bibinfo{year}{2019}): \emph{\bibinfo{title}{Universal {{MBQC}} with
  Generalised Parity-Phase Interactions and {{Pauli}} Measurements}}.
\newblock {\slshape \bibinfo{journal}{Quantum}} \bibinfo{volume}{3}, p.
  \bibinfo{pages}{134}, \doi{10.22331/q-2019-04-26-134}.

\bibitemdeclare{inproceedings}{meichanetzidisQuantumNaturalLanguage2021}
\bibitem{meichanetzidisQuantumNaturalLanguage2021}
\bibinfo{author}{Konstantinos \surnamestart Meichanetzidis\surnameend},
  \bibinfo{author}{Stefano \surnamestart Gogioso\surnameend},
  \bibinfo{author}{Giovanni \surnamestart {de Felice}\surnameend},
  \bibinfo{author}{Nicol{\`o} \surnamestart Chiappori\surnameend},
  \bibinfo{author}{Alexis \surnamestart Toumi\surnameend} \&
  \bibinfo{author}{Bob \surnamestart Coecke\surnameend} (\bibinfo{year}{2021}):
  \emph{\bibinfo{title}{Quantum Natural Language Processing on Near-Term
  Quantum Computers}}.
\newblock In \bibinfo{editor}{Beno{\^i}t \surnamestart Valiron\surnameend},
  \bibinfo{editor}{Shane \surnamestart Mansfield\surnameend},
  \bibinfo{editor}{Pablo \surnamestart Arrighi\surnameend} \&
  \bibinfo{editor}{Prakash \surnamestart Panangaden\surnameend}, editors:
  {\slshape \bibinfo{booktitle}{Proceedings 17th International Conference on
  Quantum Physics and Logic}}, {\slshape \bibinfo{series}{Electronic
  Proceedings in Theoretical Computer Science}} \bibinfo{volume}{340},
  \bibinfo{publisher}{{Open Publishing Association}}, pp.
  \bibinfo{pages}{213--229}, \doi{10.4204/EPTCS.340.11}.

\bibitemdeclare{article}{ngDiagrammaticCalculusFermionic2019}
\bibitem{ngDiagrammaticCalculusFermionic2019}
\bibinfo{author}{Kang~Feng \surnamestart Ng\surnameend}, \bibinfo{author}{Amar
  \surnamestart Hadzihasanovic\surnameend} \& \bibinfo{author}{Giovanni
  \surnamestart {de Felice}\surnameend} (\bibinfo{year}{2019}):
  \emph{\bibinfo{title}{A Diagrammatic Calculus of Fermionic Quantum
  Circuits}}.
\newblock {\slshape \bibinfo{journal}{Logical Methods in Computer Science}}
  \bibinfo{volume}{Volume 15, Issue 3}, \doi{10.23638/LMCS-15(3:26)2019}.

\bibitemdeclare{misc}{ngUniversalCompletionZXcalculus2017}
\bibitem{ngUniversalCompletionZXcalculus2017}
\bibinfo{author}{Kang~Feng \surnamestart Ng\surnameend} \&
  \bibinfo{author}{Quanlong \surnamestart Wang\surnameend}
  (\bibinfo{year}{2017}): \emph{\bibinfo{title}{A Universal Completion of the
  {{ZX-calculus}}}}.
\newblock \eprint{1706.09877}.

\bibitemdeclare{article}{ranchinDepictingQuditQuantum2014}
\bibitem{ranchinDepictingQuditQuantum2014}
\bibinfo{author}{Andr{\'e} \surnamestart Ranchin\surnameend}
  (\bibinfo{year}{2014}): \emph{\bibinfo{title}{Depicting Qudit Quantum
  Mechanics and Mutually Unbiased Qudit Theories}}.
\newblock {\slshape \bibinfo{journal}{Electronic Proceedings in Theoretical
  Computer Science}} \bibinfo{volume}{172}, pp. \bibinfo{pages}{68--91},
  \doi{10.4204/EPTCS.172.6}.

\bibitemdeclare{article}{ringbauerUniversalQuditQuantum2022}
\bibitem{ringbauerUniversalQuditQuantum2022}
\bibinfo{author}{Martin \surnamestart Ringbauer\surnameend},
  \bibinfo{author}{Michael \surnamestart Meth\surnameend},
  \bibinfo{author}{Lukas \surnamestart Postler\surnameend},
  \bibinfo{author}{Roman \surnamestart Stricker\surnameend},
  \bibinfo{author}{Rainer \surnamestart Blatt\surnameend},
  \bibinfo{author}{Philipp \surnamestart Schindler\surnameend} \&
  \bibinfo{author}{Thomas \surnamestart Monz\surnameend}
  (\bibinfo{year}{2022}): \emph{\bibinfo{title}{A Universal Qudit Quantum
  Processor with Trapped Ions}}.
\newblock {\slshape \bibinfo{journal}{Nature Physics}}
  \bibinfo{volume}{18}(\bibinfo{number}{9}), pp. \bibinfo{pages}{1053--1057},
  \doi{10.1038/s41567-022-01658-0}.
\newblock \eprint{2109.06903}.

\bibitemdeclare{incollection}{selingerSurveyGraphicalLanguages2011}
\bibitem{selingerSurveyGraphicalLanguages2011}
\bibinfo{author}{P.~\surnamestart Selinger\surnameend} (\bibinfo{year}{2011}):
  \emph{\bibinfo{title}{A {{Survey}} of {{Graphical Languages}} for {{Monoidal
  Categories}}}}.
\newblock In \bibinfo{editor}{Bob \surnamestart Coecke\surnameend}, editor:
  {\slshape \bibinfo{booktitle}{New {{Structures}} for {{Physics}}}},
  \bibinfo{series}{Lecture {{Notes}} in {{Physics}}},
  \bibinfo{publisher}{{Springer}}, \bibinfo{address}{{Berlin, Heidelberg}}, pp.
  \bibinfo{pages}{289--355}, \doi{10.1007/978-3-642-12821-9_4}.
\newblock \eprint{0908.3347}.

\bibitemdeclare{misc}{shaikhCategoricalSemanticsFeynman2022}
\bibitem{shaikhCategoricalSemanticsFeynman2022}
\bibinfo{author}{Razin~A. \surnamestart Shaikh\surnameend} \&
  \bibinfo{author}{Stefano \surnamestart Gogioso\surnameend}
  (\bibinfo{year}{2022}): \emph{\bibinfo{title}{Categorical {{Semantics}} for
  {{Feynman Diagrams}}}}.
\newblock \eprint{2205.00466}.

\bibitemdeclare{misc}{shaikhHowSumExponentiate2022}
\bibitem{shaikhHowSumExponentiate2022}
\bibinfo{author}{Razin~A. \surnamestart Shaikh\surnameend},
  \bibinfo{author}{Quanlong \surnamestart Wang\surnameend} \&
  \bibinfo{author}{Richie \surnamestart Yeung\surnameend}
  (\bibinfo{year}{2022}): \emph{\bibinfo{title}{How to Sum and Exponentiate
  {{Hamiltonians}} in {{ZXW}} Calculus}}.
\newblock \eprint{2212.04462}.

\bibitemdeclare{article}{signorelliCompositionalModelConsciousness2021}
\bibitem{signorelliCompositionalModelConsciousness2021}
\bibinfo{author}{Camilo~Miguel \surnamestart Signorelli\surnameend},
  \bibinfo{author}{Quanlong \surnamestart Wang\surnameend} \&
  \bibinfo{author}{Ilyas \surnamestart Khan\surnameend} (\bibinfo{year}{2021}):
  \emph{\bibinfo{title}{A {{Compositional Model}} of {{Consciousness Based}} on
  {{Consciousness-Only}}}}.
\newblock {\slshape \bibinfo{journal}{Entropy}}
  \bibinfo{volume}{23}(\bibinfo{number}{3}), p. \bibinfo{pages}{308},
  \doi{10.3390/e23030308}.

\bibitemdeclare{misc}{vandeweteringZXcalculusWorkingQuantum2020}
\bibitem{vandeweteringZXcalculusWorkingQuantum2020}
\bibinfo{author}{John \surnamestart {van de Wetering}\surnameend}
  (\bibinfo{year}{2020}): \emph{\bibinfo{title}{{{ZX-calculus}} for the Working
  Quantum Computer Scientist}}.
\newblock \eprint{2012.13966}.

\bibitemdeclare{misc}{vandeweteringPhaseGadgetCompilation2022}
\bibitem{vandeweteringPhaseGadgetCompilation2022}
\bibinfo{author}{John \surnamestart {van de Wetering}\surnameend} \&
  \bibinfo{author}{Lia \surnamestart Yeh\surnameend} (\bibinfo{year}{2022}):
  \emph{\bibinfo{title}{Phase Gadget Compilation for Diagonal Qutrit Gates}}.
\newblock \eprint{2204.13681}.

\bibitemdeclare{inproceedings}{vilmartNearMinimalAxiomatisationZXCalculus2019}
\bibitem{vilmartNearMinimalAxiomatisationZXCalculus2019}
\bibinfo{author}{Renaud \surnamestart Vilmart\surnameend}
  (\bibinfo{year}{2019}): \emph{\bibinfo{title}{A {{Near-Minimal
  Axiomatisation}} of {{ZX-Calculus}} for {{Pure Qubit Quantum Mechanics}}}}.
\newblock In: {\slshape \bibinfo{booktitle}{2019 34th {{Annual ACM}}/{{IEEE
  Symposium}} on {{Logic}} in {{Computer Science}} ({{LICS}})}}, pp.
  \bibinfo{pages}{1--10}, \doi{10.1109/LICS.2019.8785765}.
\newblock \eprint{1812.09114}.

\bibitemdeclare{article}{wangQutritZXcalculusComplete2018}
\bibitem{wangQutritZXcalculusComplete2018}
\bibinfo{author}{Quanlong \surnamestart Wang\surnameend}
  (\bibinfo{year}{2018}): \emph{\bibinfo{title}{Qutrit {{ZX-calculus}} Is
  {{Complete}} for {{Stabilizer Quantum Mechanics}}}}.
\newblock {\slshape \bibinfo{journal}{Electronic Proceedings in Theoretical
  Computer Science}} \bibinfo{volume}{266}, pp. \bibinfo{pages}{58--70},
  \doi{10.4204/EPTCS.266.3}.

\bibitemdeclare{misc}{wangNonanyonicQuditZWcalculus2021}
\bibitem{wangNonanyonicQuditZWcalculus2021}
\bibinfo{author}{Quanlong \surnamestart Wang\surnameend}
  (\bibinfo{year}{2021}): \emph{\bibinfo{title}{A Non-Anyonic Qudit
  {{ZW-calculus}}}}.
\newblock \eprint{2109.11285}.

\bibitemdeclare{misc}{wangAlgebraicCompleteAxiomatisation2022}
\bibitem{wangAlgebraicCompleteAxiomatisation2022}
\bibinfo{author}{Quanlong \surnamestart Wang\surnameend}
  (\bibinfo{year}{2022}): \emph{\bibinfo{title}{Algebraic Complete
  Axiomatisation of {{ZX-calculus}} with a Normal Form via Elementary Matrix
  Operations}}.
\newblock \eprint{2007.13739}.

\bibitemdeclare{misc}{wangQufiniteZXcalculusUnified2022}
\bibitem{wangQufiniteZXcalculusUnified2022}
\bibinfo{author}{Quanlong \surnamestart Wang\surnameend}
  (\bibinfo{year}{2022}): \emph{\bibinfo{title}{Qufinite {{ZX-calculus}}: A
  Unified Framework of Qudit {{ZX-calculi}}}}.
\newblock \eprint{2104.06429}.

\bibitemdeclare{article}{wangQutritDichromaticCalculus2014}
\bibitem{wangQutritDichromaticCalculus2014}
\bibinfo{author}{Quanlong \surnamestart Wang\surnameend} \&
  \bibinfo{author}{Xiaoning \surnamestart Bian\surnameend}
  (\bibinfo{year}{2014}): \emph{\bibinfo{title}{Qutrit {{Dichromatic Calculus}}
  and {{Its Universality}}}}.
\newblock {\slshape \bibinfo{journal}{Electronic Proceedings in Theoretical
  Computer Science}} \bibinfo{volume}{172}, pp. \bibinfo{pages}{92--101},
  \doi{10.4204/EPTCS.172.7}.

\bibitemdeclare{misc}{wangDifferentiatingIntegratingZX2022}
\bibitem{wangDifferentiatingIntegratingZX2022}
\bibinfo{author}{Quanlong \surnamestart Wang\surnameend},
  \bibinfo{author}{Richie \surnamestart Yeung\surnameend} \&
  \bibinfo{author}{Mark \surnamestart Koch\surnameend} (\bibinfo{year}{2022}):
  \emph{\bibinfo{title}{Differentiating and {{Integrating ZX Diagrams}} with
  {{Applications}} to {{Quantum Machine Learning}}}}.
\newblock \eprint{2201.13250}.

\bibitemdeclare{article}{wangQuditsHighDimensionalQuantum2020}
\bibitem{wangQuditsHighDimensionalQuantum2020}
\bibinfo{author}{Yuchen \surnamestart Wang\surnameend}, \bibinfo{author}{Zixuan
  \surnamestart Hu\surnameend}, \bibinfo{author}{Barry~C. \surnamestart
  Sanders\surnameend} \& \bibinfo{author}{Sabre \surnamestart Kais\surnameend}
  (\bibinfo{year}{2020}): \emph{\bibinfo{title}{Qudits and {{High-Dimensional
  Quantum Computing}}}}.
\newblock {\slshape \bibinfo{journal}{Frontiers in Physics}}
  \bibinfo{volume}{8}, \doi{10.3389/fphy.2020.589504}.

\bibitemdeclare{article}{wuNquditSLOCCEquivalent2020}
\bibitem{wuNquditSLOCCEquivalent2020}
\bibinfo{author}{Xia \surnamestart Wu\surnameend}, \bibinfo{author}{Heng-Yue
  \surnamestart Jia\surnameend}, \bibinfo{author}{Dan-Dan \surnamestart
  Li\surnameend}, \bibinfo{author}{Ying-Hui \surnamestart Yang\surnameend} \&
  \bibinfo{author}{Fei \surnamestart Gao\surnameend} (\bibinfo{year}{2020}):
  \emph{\bibinfo{title}{N-Qudit {{SLOCC}} Equivalent {{W}} States Are
  Determined by Their Bipartite Reduced Density Matrices with Tree Form}}.
\newblock {\slshape \bibinfo{journal}{Quantum Information Processing}}
  \bibinfo{volume}{19}(\bibinfo{number}{12}), p. \bibinfo{pages}{423},
  \doi{10.1007/s11128-020-02918-9}.

\bibitemdeclare{article}{yeCircuitQEDSinglestep2018}
\bibitem{yeCircuitQEDSinglestep2018}
\bibinfo{author}{Biaoliang \surnamestart Ye\surnameend},
  \bibinfo{author}{Zhen-Fei \surnamestart Zheng\surnameend},
  \bibinfo{author}{Yu~\surnamestart Zhang\surnameend} \&
  \bibinfo{author}{Chui-Ping \surnamestart Yang\surnameend}
  (\bibinfo{year}{2018}): \emph{\bibinfo{title}{Circuit QED: single-step
  realization of a multiqubit controlled phase gate with one microwave photonic
  qubit simultaneously controlling $n-1$ microwave photonic qubits}}.
\newblock {\slshape \bibinfo{journal}{Optics Express}}
  \bibinfo{volume}{26}(\bibinfo{number}{23}), p.
  \bibinfo{pages}{30689–30702}, \doi{10.1364/OE.26.030689}.

\bibitemdeclare{misc}{yehScalingStateCircuits2023}
\bibitem{yehScalingStateCircuits2023}
\bibinfo{author}{Lia \surnamestart Yeh\surnameend} (\bibinfo{year}{2023}):
  \emph{\bibinfo{title}{Scaling {{W}} State Circuits in the Qudit {{Clifford}}
  Hierarchy}}.
\newblock \eprint{2304.12504}.

\bibitemdeclare{article}{yurtalanImplementationWalshHadamardGate2020}
\bibitem{yurtalanImplementationWalshHadamardGate2020}
\bibinfo{author}{M.~A. \surnamestart Yurtalan\surnameend},
  \bibinfo{author}{J.~\surnamestart Shi\surnameend},
  \bibinfo{author}{M.~\surnamestart Kononenko\surnameend},
  \bibinfo{author}{A.~\surnamestart Lupascu\surnameend} \&
  \bibinfo{author}{S.~\surnamestart Ashhab\surnameend} (\bibinfo{year}{2020}):
  \emph{\bibinfo{title}{Implementation of a {{Walsh-Hadamard Gate}} in a
  {{Superconducting Qutrit}}}}.
\newblock {\slshape \bibinfo{journal}{Physical Review Letters}}
  \bibinfo{volume}{125}(\bibinfo{number}{18}), p. \bibinfo{pages}{180504},
  \doi{10.1103/PhysRevLett.125.180504}.
\newblock \eprint{2003.04879}.

\end{thebibliography}

\onecolumn
\appendix

\section{Appendix}

First we prove a series of lemmas using the rule set given in \cref{subsec:rule-set}.
Then, based on these lemmas, we show the proof of the main lemmas that are necessary to prove the completeness of the ZXW-calculus for any finite dimension.

\allowdisplaybreaks
\setlength{\jot}{20pt}
\subsection{Lemmas and Proofs}\label{sec:appendix}

\begin{lemma}
  \label{kjgalm}\cite{wangQufiniteZXcalculusUnified2022}
  \[
    \tikzfig{lemmas/kjga}
  \]
  where $\overrightarrow{a}=(a_1,\cdots, a_{d-1})$, $j \in \{ 1,\cdots, d-1\}.$
\end{lemma}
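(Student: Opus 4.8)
The plan is to read this statement as a Pauli-commutation (or ``$K_j$-copy'') law: it records how a pink $K_j$ node is pushed through a green $\overrightarrow{a}$-box, turning the phase vector $\overrightarrow{a}$ into the shifted-and-renormalised vector $k_j(\overrightarrow{a})$ at the cost of a scalar. At the level of the interpretation the pink $K_j$ node acts as the modular shift $\ket{k}\mapsto\ket{k-j}$, so commuting it past the diagonal $\sum_k a_k\ket{k}\bra{k}$ relabels the phases by $a_i\mapsto a_{i-j}$; dividing through by $a_{d-j}$ so that the new $0$-th phase is again $1$ produces exactly $k_j(\overrightarrow{a})_i=a_{i-j}/a_{d-j}$ and emits the scalar $a_{d-j}=a_{-j}$ (consistent with $a_0=1$, since $a_{-j}/a_{d-j}=1$). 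The target is therefore a purely diagrammatic reconstruction of this identity from the axioms, without appealing to the interpretation or to completeness.

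First I would dispose of the single-component case, where $\overrightarrow{a}=(0,\dots,0,a_i,0,\dots,0)$ has a unique nonzero entry. Here the green box is, up to the scalar-labelling notation, one of the basis phase boxes, and pushing a $K_j$ node through it should reduce to the basic copy axioms \eqref{rule:K1} and \eqref{rule:K2} together with the multiplier rule \eqref{rule:Mu}, using the explicit description of the pink $K_j$ spider from \eqref{rule:HZ}. The only genuinely new content at this stage is the modular index shift $i\mapsto i-j$, which follows from unfolding $K_j$ as the $j$-th column of the Fourier matrix and fusing with the Hadamard boxes; the normalising scalar is then collected with \eqref{rule:Ept} and \eqref{rule:S1}.

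For a general phase vector I would build $\overrightarrow{a}$ out of single-component boxes using \eqref{rule:VA}, which assembles a Z box with phase $(0,\dots,0,a_i)$ into the full vector through the $V$/$W$ scaffolding. Pushing the $K_j$ node across this scaffolding is governed by \eqref{rule:ZV} and \eqref{rule:VW}, which describe precisely how the $V$ box interacts with green states and with the $W$ node. Applying the single-component identity to each piece and then re-running \eqref{rule:VA} in reverse recombines the shifted components into one green box carrying $k_j(\overrightarrow{a})$.

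The main obstacle I expect is the recombination bookkeeping: checking that pushing the shift through the $V$/$W$ scaffolding permutes the single-component contributions in exactly the modular pattern that sends $(a_i)_i$ to $(a_{i-j}/a_{d-j})_i$, and that the leftover scalars from every component collapse to the single factor $a_{d-j}$. Because indices are taken mod $d$ with $a_0=1$, the wrap-around term — the component that shifts onto the constrained $0$-th slot — is the delicate case, and it is precisely this term that fixes the denominator; I would isolate and verify it separately, then let the remaining $d-2$ components follow by a uniform computation. If the scaffolding manipulation proves unwieldy, an alternative is to induct on the number of nonzero entries of $\overrightarrow{a}$, peeling off one basis phase box at a time with \eqref{rule:VA} and invoking the single-component identity at each step.
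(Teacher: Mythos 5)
Your proof plan breaks at its very first step, and the failure is structural, not bookkeeping. The lemma is inherently a statement about the \emph{whole} phase vector: $k_j(\overrightarrow{a})$ has $a_{d-j}$ in every denominator, so the identity only makes sense when $a_{d-j}\neq 0$, and the commutation identity simply does not exist in the required shape for the single-component boxes you want to use as a base case. Concretely, take $\overrightarrow{a}=(0,\ldots,0,a_i,0,\ldots,0)$ with $i\neq d-j$ and push the pink $K_j$ node (the shift $\ket{k}\mapsto\ket{k-j}$) through the diagonal $\mathrm{diag}(1,0,\ldots,0,a_i,0,\ldots,0)$: the commuted diagonal is $(a_{k-j})_k$, whose $0$-th entry is $a_{d-j}=0$. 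Since every Z box has $0$-th entry equal to $1$ by definition, no expression of the form $(\text{scalar})\cdot(\text{Z box})\circ(\text{pink }K_j\text{ node})$ can equal this nonzero map: a zero scalar gives the zero map, while a nonzero scalar forces the $0$-th diagonal entry to be nonzero. So the \enquote{single-component identity} you plan to establish first is not just hard to prove, it is false (indeed unstatable), and the same objection kills your fallback induction on the number of nonzero entries. Worse, the pieces produced by \eqref{rule:VA} all have their nonzero entry in the \emph{last} slot, so for every $j\neq 1$ every single piece hits this problem. A second, independent gap is the middle of your plan: commuting the pink $K_j$ node across the $V$/$W$ scaffolding of \eqref{rule:VA} is not supported by \eqref{rule:ZV} or \eqref{rule:VW} --- those rules govern how the $V$ box interacts with Z-box states and with the W node, respectively; neither involves a pink node, and the calculus has no general rule for sliding an X-type node past a $V$ box or a W node (pink--W interaction is only available through the trialgebra \eqref{rule:TA} in a specific wiring).

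The paper's proof has none of this structure: it is one short chain of rewrites carried out directly at the level of the axioms --- built around \eqref{rule:K2}, which is precisely the rule for which the notation $k_j(\overrightarrow{a})$ is introduced in the rule set, combined with basic spider (un)fusion and copy/scalar rules --- and it never decomposes $\overrightarrow{a}$ into components. If you want to repair your argument, the lesson is that the renormalisation by $a_{d-j}$ is a global operation on the vector, so you must keep the full vector intact throughout and obtain the stated equality as a direct consequence of \eqref{rule:K2} (repositioning the pink node with cups/caps and the dualiser as needed), rather than attempting to localise the identity to individual components, where it does not hold.
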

\begin{proof}
  \[
    \tikzfig{lemmas/kjgaprf}
  \]
\end{proof}

\begin{lemma}
  \label{scalargeneralmultlm}\cite{wangQufiniteZXcalculusUnified2022}
  \tikzfig{lemmas/scalargeneralmult}
\end{lemma}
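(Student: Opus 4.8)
The plan is to unfold the compound notations back into the primitive generators and then reduce everything to spider fusion. First I would rewrite each multiplier occurring in the scalar using its defining equation \eqref{rule:Mu}, turning the diagram into a bare network of green Z boxes and pink spiders joined by single wires. Once the multipliers are eliminated the diagram lives entirely in the qudit ZX fragment, so the full toolkit of \eqref{rule:S1}--type rewriting becomes available.

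Next I would collapse this network colour by colour. Any two green boxes sharing a wire fuse by \eqref{rule:S1} into a single Z box whose phase vector is the component-wise product $\overrightarrow{ab}=(a_1b_1,\dotsc,a_{d-1}b_{d-1})$, and the analogous fusion applies to adjacent pink spiders. At each green--pink interface I would invoke the copy rules \eqref{rule:K0}, \eqref{rule:K1}, and \eqref{rule:K2} to push phases across, or the Hopf law (\cref{hopfditlm}) to disconnect the two colours whenever $d$ parallel wires accumulate. Iterating fusion-and-copy leaves a single surviving scalar box, whose numerical value I would then read off using the normalisation $a_0=1$ of every Z box together with the empty-diagram rule \eqref{rule:Ept}, matching it against the claimed right-hand side.

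The hardest part will be the phase bookkeeping induced by the multiplier: because a weight-$m$ multiplier contributes $m$ parallel connections, one must carefully track how the $m$ copies of each phase recombine modulo $d$ after fusion and confirm that the cross-terms between distinct colours cancel. I expect to control this by an induction on the multiplier weight, taking \cref{kjgalm} as the single-connection base case and peeling off one connection at a time in the inductive step, so that the index accounting stays consistent throughout rather than hinging on any one rewrite.
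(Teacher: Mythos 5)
Your proposal misreads what the lemma says, and consequently the machinery you deploy never touches the actual statement. The ``mult'' in the lemma's name refers to \emph{multiplication of scalars}, not to the multiplier notation of \eqref{rule:Mu}: the lemma asserts that two \emph{disconnected} zero-legged (scalar) Z boxes, placed side by side as a tensor product, equal a single scalar Z box. Under the paper's convention a zero-legged box labelled $x$ denotes the number $1+x$ (this is exactly why the box labelled $u_{m,n}=d^{\frac{m+n-2}{2}}-1$ stands for the scalar $d^{\frac{m+n-2}{2}}$), so the label of the resulting box must encode the full product $(1+a)(1+b)=1+(a+b+ab)$, cross terms included; this is what makes \cref{scalarinverseditlm} follow ``directly'' from this lemma together with \cref{zeroemptyditlm}. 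There are no multipliers anywhere in the statement, so your opening step (unfolding multipliers via \eqref{rule:Mu}), your ``hardest part'' (tracking weight-$m$ bundles of parallel wires), and your closing induction on multiplier weight all address a statement that is not being made. Likewise \cref{kjgalm} is not a ``single-connection base case'' for multipliers; it concerns the interaction of the $K_j$ phases with a Z box $\overrightarrow{a}$.

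Even read charitably as a plan for the true statement, the proposal has two concrete gaps. First, the entire difficulty of the lemma is merging two diagrams that share no wire: \eqref{rule:S1} only fuses spiders that are already connected, and nothing in your plan ever creates a connection between the $a$-box and the $b$-box. Worse, even if such a connection existed, componentwise fusion would produce the label $ab$, i.e.\@ the scalar $1+ab$, whereas the correct value is $(1+a)(1+b)$; generating the cross terms $a+b$ is precisely what a valid derivation must accomplish, and nothing in fuse-copy-Hopf bookkeeping does so. The paper's own proof is a short, direct, induction-free rewrite sitting at the very start of the appendix, where only the axioms and \cref{kjgalm} are available to it. Second, your plan invokes results that the paper establishes only much later in its lemma chain: fusion of pink spiders is not an instance of \eqref{rule:S1} but the derived rule \eqref{rule:S4} (\cref{s4lm}), and the Hopf law is \cref{hopfditlm}; the proofs of those later lemmas themselves rely on scalar-box manipulations of exactly the kind this lemma underwrites (see the $u_{m,n}$ bookkeeping in the proof of \cref{s4lm}), so using them here would be circular.
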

\begin{proof}
  \[
    \tikzfig{lemmas/scalargeneralmultprf}
  \]
\end{proof}

\begin{lemma}
  \label{zeroemptyditlm}\cite{wangQufiniteZXcalculusUnified2022}
  \tikzfig{lemmas/zeroemptydit}
\end{lemma}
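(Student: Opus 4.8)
The plan is to derive the stated diagrammatic identity by rewriting the left-hand diagram into the right-hand one using only the qudit ZX-part of the ruleset together with the two results already at hand, \cref{kjgalm} and \cref{scalargeneralmultlm}. Since the equation is quoted from~\cite{wangQufiniteZXcalculusUnified2022}, I would begin by reading off the semantics: compute $\interp{\cdot}$ of each side to confirm they denote the same linear map. This is not part of the formal derivation, but it fixes the direction of the rewrite and, crucially, tells me exactly which scalar must survive at the end so that I know what the chain of moves has to produce.

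First I would bring the green part into a canonical shape with spider fusion \eqref{rule:S1}, merging the adjacent Z boxes so that the zero-labelled component is concentrated on a single green spider with an explicit parameter vector. At this stage the modular indexing convention $a_j = a_{j\bmod d}$ becomes relevant, so I would track carefully which entry of $\overrightarrow{a}$ is routed onto which leg before any colour change is introduced.

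Next I would invoke the zero-handling axioms directly: \eqref{rule:Zer} to convert the zero (green) component into the corresponding pink generator, and \eqref{rule:Ept} to absorb the resulting leg-free pink spider into a scalar. Here \cref{kjgalm} is used to rewrite the $k_j$-action of the green spider into the normalised form the hypothesis expects, and \cref{scalargeneralmultlm} is used to fold the several scalar subdiagrams produced along the way into the single scalar appearing on the right-hand side. If a stray Hadamard box or colour mismatch appears, it can be cleared with \eqref{rule:S2} before the final scalar collapse.

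The step I expect to be the main obstacle is the scalar bookkeeping under the modular arithmetic rather than the diagrammatic moves themselves. The operation $k_j(\overrightarrow{a})$ permutes and rescales the entries of the parameter vector, and after fusing and applying \eqref{rule:Ept} one must check that the surviving coefficients multiply to \emph{exactly} the scalar demanded by the right-hand side, with no leftover factor of $d$ or $\sqrt{d}$. Verifying that cancellation is precisely the role of \cref{scalargeneralmultlm}, so the real work is in lining up its hypotheses with the coefficients thrown off by the fusion and zero-absorption steps.
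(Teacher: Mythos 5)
Your skeleton --- turn the zero-labelled green component pink with \eqref{rule:Zer} and then invoke an emptiness rule --- is indeed the core of the paper's own derivation, which is a short chain: unfuse the legless Z box via \eqref{rule:S1} (read right-to-left) into a zero-labelled state composed with a trivial green effect, apply \eqref{rule:Zer} to replace that state by the pink dot state, and contract the pink state against the green effect (the zero-output instance of \eqref{rule:K0}, alternatively \eqref{rule:Ept}) to leave the empty diagram. However, your plan gets the mechanics wrong in ways that would block a literal execution. The left-hand side is a single legless Z box labelled $0$, so there are no ``adjacent Z boxes'' to merge; the first move must be an \emph{un}fusion, precisely because \eqref{rule:Zer} applies to a one-legged zero state, not to a legless box, so step one of your plan points in the wrong direction. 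Moreover, after \eqref{rule:Zer} you do not have a ``leg-free pink spider'' to absorb: you have a pink state plugged into a green effect, and justifying that this contraction equals the empty diagram is the actual content of the proof, which your plan waves through in a single clause.

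The supporting material you enlist is also misdirected, which suggests a misreading of the statement. \eqref{rule:S2} is the trivial-phase identity rule and cannot ``clear a stray Hadamard box or colour mismatch'' (that is the job of \eqref{rule:H1} or \eqref{rule:HZ}/\eqref{rule:HX}); in any case no Hadamard ever enters the derivation, so no factor of $d$ or $\sqrt d$ can arise and there is no normalisation bookkeeping to do. Likewise \cref{kjgalm} concerns the action of $K_j$ phases on a Z box, and \cref{scalargeneralmultlm} concerns products of scalar boxes; neither appears in the paper's proof, and neither can contribute here, since the statement contains no $K_j$ phase and its right-hand side is the empty diagram, not a scalar box into which leftover factors could be folded. (The intended dependency runs the other way: this lemma and \cref{scalargeneralmultlm} are the two inputs to \cref{scalarinverseditlm}.) The lemma simply says that the legless Z box with zero parameter vector denotes the scalar $1$; the modular-arithmetic and $k_j(\overrightarrow{a})$ bookkeeping you flag as ``the main obstacle'' never occurs.
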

\begin{proof}
  \[
    \tikzfig{lemmas/zeroemptyditprf}
  \]
\end{proof}

\begin{lemma}
  \label{scalarinverseditlm}
  Suppose $a \neq 0$.
  Then \tikzfig{lemmas/scalarinversedit}
\end{lemma}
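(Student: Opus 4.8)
The plan is to reduce the claimed identity to elementary scalar arithmetic. Both sides of the stated equation are scalar diagrams, i.e.\ diagrams with neither inputs nor outputs, whose interpretations are complex numbers; since scalars commute past everything and compose by multiplication, it suffices to track how the labels $a$ and $a^{-1}$ combine. First I would rewrite the left-hand side so that the two scalar boxes carrying $a$ and $a^{-1}$ sit side by side in exactly the configuration to which \cref{scalargeneralmultlm} applies.

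Applying \cref{scalargeneralmultlm} then fuses these two scalar boxes into a single scalar box whose label is the product $a \cdot a^{-1}$. This is precisely the step that invokes the hypothesis $a \neq 0$: it guarantees that $a^{-1} = \frac{1}{a}$ is a well-defined complex number, so that the scalar box carrying the label $a^{-1}$ is legitimate and the product $a \cdot a^{-1} = 1$ makes sense. After fusion the label collapses to $1$, leaving the scalar box that represents the unit.

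Finally I would eliminate this unit scalar to reach the empty diagram, discarding the trivial box with \eqref{rule:Ept} (and, if the fused box is not already in canonical empty form, first normalising it via \eqref{rule:S1} or \cref{zeroemptyditlm}). The main obstacle I expect is not conceptual but a matter of convention: I must verify that \cref{scalargeneralmultlm} multiplies the labels in exactly the way the scalar normal form records them, rather than combining the associated values in some shifted fashion, and that the resulting unit scalar is genuinely the empty diagram and not merely a non-trivial normal form of the number $1$. Once the normalisation of the scalar boxes is pinned down, every remaining step is a direct rewrite.
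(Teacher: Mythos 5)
Your proposal follows essentially the same route as the paper, whose proof is exactly the two-step derivation you describe: \cref{scalargeneralmultlm} to fuse the two scalar boxes into one whose label encodes the product $a \cdot a^{-1}$, and then \cref{zeroemptyditlm} to identify the resulting trivial scalar box with the empty diagram. The convention issue you flag (whether box labels track the scalar value directly or in shifted form) is precisely what \cref{zeroemptyditlm} settles, so your argument closes with no gap.
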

\begin{proof}
  This follows directly from \cref{scalargeneralmultlm,zeroemptyditlm}.
\end{proof}

\begin{lemma}
  \label{hilm}\cite{wangQufiniteZXcalculusUnified2022}
  \tikzfig{lemmas/h1scalar}
\end{lemma}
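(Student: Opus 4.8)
The plan is to verify the claimed scalar identity for the Hadamard box by rewriting the left-hand side into the asserted scalar using the decomposition of the Hadamard box together with the scalar lemmas already established above. Since the statement isolates a pure scalar factor, the whole difficulty lies in tracking how the normalisation $\tfrac{1}{\sqrt d}$ carried by each Hadamard box combines with the root-of-unity interference produced when two Z-spiders are connected through an $H$ box. By soundness the target scalar is pinned down unambiguously by the interpretation $\interp{\cdot}$, so the derivation only has to reproduce that value diagrammatically.

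First I would unfold each Hadamard box using the decomposition rule \eqref{rule:HD}, which expresses the $H$ box as an arrangement of Z-spiders whose phases are the rows $K_r$ of the Fourier matrix, glued by a $W$ node and by the explicit scalar box accounting for the $d^{(m+n-2)/2}$ normalisation. This turns the left-hand side into a diagram built solely from green boxes, $W$ nodes, and bare scalar boxes, to which the qudit ZX-part rules apply directly. I would then fuse adjacent Z-spiders with \eqref{rule:S1} and use the copy rules \eqref{rule:K0}, \eqref{rule:K1}, \eqref{rule:K2} together with \eqref{rule:Ept} to collapse the phase sums.

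The key mechanism is that summing a phase $\omega^{kc}$ over $k$ yields $d$ when $c \equiv 0 \Mod{d}$ and $0$ otherwise; diagrammatically this appears as a basis state being copied through or annihilated by an orthogonal effect. Each such collapse leaves behind a bare scalar box, and I would merge all of these using \cref{scalargeneralmultlm} and discard trivial factors via \cref{zeroemptyditlm}. Wherever a scalar must be cancelled against its reciprocal, \cref{scalarinverseditlm} finishes the cancellation; its hypothesis $a \neq 0$ is always met, since the only scalars arising are powers of $\sqrt d$. The accumulated product is then precisely the scalar claimed.

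The main obstacle is bookkeeping rather than conceptual: the decomposition \eqref{rule:HD} introduces several normalisation boxes, and the interference collapses produce further factors of $d$ and $\tfrac{1}{\sqrt d}$, so the delicate part is to account for every scalar exactly and confirm that the accumulated power of $\sqrt d$ is the intended one and not off by a factor. As a guard against such an error, I would independently evaluate both sides under $\interp{\cdot}$ using $\interp{H} = \tfrac{1}{\sqrt d}\sum_{j,k}\omega^{jk}\ket{j}\bra{k}$ and the Gauss-type identity $\sum_{k=0}^{d-1}\omega^{kc}$, which equals $d$ when $c \equiv 0 \Mod{d}$ and $0$ otherwise; this fixes the target scalar and lets me check the diagrammatic derivation against it.
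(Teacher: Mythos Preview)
Your route through the Hadamard decomposition \eqref{rule:HD} is very different from the paper's, and as stated it has a real gap.

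The paper's proof is a single short rewrite that works directly from the axiom \eqref{rule:H1} together with the definition \eqref{rule:HDagger}, the compact-structure rule \eqref{rule:S3}, and spider fusion \eqref{rule:S1}; it never invokes \eqref{rule:HD}. This is consistent with the lemma's placement in the appendix: it appears immediately after the elementary scalar lemmas (\cref{scalargeneralmultlm,zeroemptyditlm,scalarinverseditlm}) and well before any of the W-node, $V$-box, or multiplier lemmas have been established, so a proof that depends on collapsing an \eqref{rule:HD} expansion would be out of order.

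Unfolding $H$ via \eqref{rule:HD} introduces W~nodes, multipliers, and $V$/$M$ boxes into the diagram. Collapsing that structure requires the ZXW-interaction rules --- \eqref{rule:TA}, \eqref{rule:BZW}, \eqref{rule:VW}, \eqref{rule:ZV}, \eqref{rule:VA}, \eqref{rule:Aso}, \eqref{rule:Sym} --- together with the multiplier lemmas, none of which you invoke and several of which are only proved later in the appendix. The rules you do list, namely \eqref{rule:S1}, \eqref{rule:K0}, \eqref{rule:K1}, \eqref{rule:K2}, \eqref{rule:Ept}, act on green and pink spiders but do nothing to eliminate W~nodes or $V$ boxes, so on their own they cannot collapse the \eqref{rule:HD} expansion back down. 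Your semantic cross-check under $\interp{\cdot}$ fixes the target value but does not supply these missing diagrammatic steps; what you have is an outline of a strategy rather than a derivation.
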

\begin{proof}
  \[
    \tikzfig{lemmas/h1scalarprf}
  \]
\end{proof}

\begin{lemma}
  \label{hheqhdhd}
  \tikzfig{lemmas/hheqhdhd}
\end{lemma}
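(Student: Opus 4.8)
The plan is to exhibit both $H \circ H$ and $H^\dagger \circ H^\dagger$ as the same canonical diagram, namely the dualiser of \eqref{rule:Du}. Semantically this is forced: writing the Hadamard box as $H_{jk} = \tfrac{1}{\sqrt d}\omega^{jk}$, one computes $\sum_k H_{jk}H_{kl} = \tfrac1d\sum_k \omega^{k(j+l)} = [\,j+l\equiv 0 \Mod{d}\,]$, so that $\interp{H\circ H} = \sum_{j} \ket{j}\bra{-j}$; the identical computation with $\omega^{-jk}$ in place of $\omega^{jk}$ gives $\interp{H^\dagger\circ H^\dagger} = \sum_{j} \ket{j}\bra{-j}$ as well. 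Both therefore coincide with $\interp{D}$, and the task is to reproduce this coincidence purely by rewriting.

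First I would unfold $H^\dagger$ through its defining equation \eqref{rule:HDagger}, whose well-definedness is guaranteed by \cref{hilm}, and unfold each Hadamard box through the Hadamard decomposition \eqref{rule:HD}. I expect that this, combined with green-spider fusion \eqref{rule:S1} and the colour-change rule \eqref{rule:H1}, lets me drive both $H\circ H$ and $H^\dagger\circ H^\dagger$ down to the single dualiser diagram: the decomposition turns each $H$ into a Z-box layer carrying the Fourier phases $K_r$ of \eqref{rule:HZ}, and stacking two such layers lets spider fusion and \eqref{rule:H1} relocate and absorb the surviving Hadamard boxes. Because the bookkeeping of the $K_j$ phases records exactly the passage $\omega \mapsto \omega^{-1}$ that distinguishes $H$ from $H^\dagger$, the very same chain of rewrites applies to $H^\dagger\circ H^\dagger$ with the roles of $H$ and $H^\dagger$ interchanged, landing on the identical dualiser.

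The step I expect to be the main obstacle is the rewrite $H\circ H = D$ itself, and the reason is conceptual: the inverse property $H\circ H^\dagger = \mathrm{id}$ supplied by \cref{hilm} is \emph{not} by itself enough, since for a general invertible map $A$ one has $A^2 \neq (A^{-1})^2$. The claim genuinely encodes the order-four behaviour $H^{\circ 4}=\mathrm{id}$ of the discrete Fourier transform, equivalently the fact that $H\circ H$ is an involution; so I must be careful that the rewriting actually synthesises the full symmetric structure $\sum_j \ket{j}\bra{-j}$ rather than merely cancelling one $H$ against one $H^\dagger$. Once the single rewrite $H\circ H = D$ is secured, reading the same derivation with $H$ and $H^\dagger$ swapped closes the proof, as both composites then equal $D$.
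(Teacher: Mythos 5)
Your semantic check is correct: $\interp{H\circ H}=\sum_j\ket{j}\bra{-j}=\interp{H^\dagger\circ H^\dagger}$, which is exactly the interpretation of the dualiser \eqref{rule:Du}. But that computation is free by soundness; the entire content of this lemma is the existence of a concrete rewrite sequence inside the calculus, and that sequence is what your proposal does not supply. Your plan is to unfold \eqref{rule:HDagger} and \eqref{rule:HD} and then hope that fusion \eqref{rule:S1} and colour change \eqref{rule:H1} "drive both composites down to the single dualiser diagram", and you yourself flag the rewrite $H\circ H = D$ as "the main obstacle" --- which you then leave unresolved. So what is offered is a strategy with its central step open, not a proof. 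It is also a much heavier route than needed: expanding both Hadamard boxes by \eqref{rule:HD} and recombining amounts to re-deriving a normal form for $H\circ H$ (compare the length of the proof of \cref{lem:hadamardnf}), whereas the paper disposes of this lemma in a single short chain placed immediately after \cref{hilm}, working from the fact that $H^\dagger$ is not an opaque inverse but is \emph{defined} by \eqref{rule:HDagger} as a composite of the other generators, with \cref{hilm} supplying exactly the cancellation (equivalently, the order-four property you correctly identify as the crux) that collapses the composite.

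The second gap is the closing step: "the very same chain of rewrites applies to $H^\dagger\circ H^\dagger$ with the roles of $H$ and $H^\dagger$ interchanged". The ZXW ruleset is not invariant under exchanging $H$ and $H^\dagger$: \eqref{rule:HD} decomposes $H$ only, \eqref{rule:HZ} places $H^\dagger$ boxes only, and \eqref{rule:HDagger} defines $H^\dagger$ in terms of the remaining generators, so after unfolding there is no $H^\dagger$ box left to swap. "Replace $\omega$ by $\omega^{-1}$ throughout" is a statement about interpretations, not a move licensed by the rules, so the second half of your argument rests on a symmetry the calculus does not have. This particular gap is repairable without redoing any diagrammatic work: once $H\circ H = D$ is actually established, the equational chain $H^\dagger\circ H^\dagger \;=\; H^\dagger\circ H^\dagger\circ D\circ D \;=\; H^\dagger\circ(H^\dagger\circ H)\circ H\circ D \;=\; (H^\dagger\circ H)\circ D \;=\; D$ finishes the proof using only \cref{hilm} and the involution property of the dualiser (\cref{dboxsquarelm}). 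As submitted, however, both halves of the argument --- the rewrite to $D$ and its transfer to $H^\dagger\circ H^\dagger$ --- depend on steps that are not derived.
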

\begin{proof}
  \[
    \tikzfig{lemmas/hheqhdhdpf}
  \]
\end{proof}

\begin{lemma}
  \label{s4lm}\cite{wangQufiniteZXcalculusUnified2022}
  \begin{gather}
    \tikzfig{lemmas/redspider0pfusedit2}
    \tag{S4}\label{rule:S4}
  \end{gather}
\end{lemma}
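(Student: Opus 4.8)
The plan is to derive the pink spider fusion rule \eqref{rule:S4} from the green spider fusion rule \eqref{rule:S1}, by unfolding the pink spiders into green ones and cancelling the Hadamard boxes that are brought together along the connecting wire. First I would apply the defining equation \eqref{rule:HZ} to rewrite each pink spider on the left-hand side as a green Z box whose legs carry $H$ and $H^\dagger$ boxes, together with the normalising scalar box \tikzfig{definitions/umngbox} that supplies the factor $d^{\frac{m+n-2}{2}}$. The key point, and the reason the $H^\dagger$ boxes are built into the definition rather than $H$ boxes, is that this places an $H$ box from one spider directly opposite an $H^\dagger$ box from the other along the wire that joins them.

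Next I would cancel this adjacent pair. Since the qudit Hadamard is not self-adjoint, the relevant identity is $H H^\dagger = \mathrm{id}$ rather than $H H = \mathrm{id}$; this is exactly \eqref{rule:HDagger}, whose well-definedness is established in \cref{hilm}, so the joining wire collapses to a plain identity wire. The two green Z boxes are now connected directly and may be merged with the green fusion rule \eqref{rule:S1}, which adds their phase vectors componentwise. Reassembling the merged green box with its surviving outer $H$ and $H^\dagger$ boxes, once more through \eqref{rule:HZ}, recovers a single pink spider carrying the combined phase, as required.

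What remains is scalar bookkeeping, which I expect to be the main obstacle rather than the fusion itself. The two normalising boxes produced when the pink spiders were unfolded, together with any scalar released when the internal Hadamard pair is removed, must multiply to precisely the single normalising box demanded by the fused pink spider. This reduces to checking that the exponents in $u_{m,n} = d^{\frac{m+n-2}{2}} - 1$ add up correctly once a leg is contracted, a computation discharged with \cref{scalargeneralmultlm,scalarinverseditlm}. Once these scalars are reconciled, and the phase labels are seen to combine correctly modulo $d$, reversing the chain of rewrites yields the stated equality \eqref{rule:S4}.
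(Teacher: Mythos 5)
Your proposal takes essentially the same route as the paper's proof: unfold both pink spiders via \eqref{rule:HZ}, cancel the adjacent $H$/$H^\dagger$ pair on the connecting wire, fuse the resulting Z boxes with \eqref{rule:S1}, and refold via \eqref{rule:HZ}; the paper's scalar annotations $u_{m,n+1}$, $u_{t+1,s}$, $u_{m+t,n+s}$ are exactly the normalising factors you describe, and they reconcile in one application of \cref{scalargeneralmultlm} since $d^{\frac{m+n-1}{2}} \cdot d^{\frac{s+t-1}{2}} = d^{\frac{m+n+s+t-2}{2}}$. Your only slight over-caution is the worry about a scalar released by the Hadamard cancellation: $H H^\dagger = \mathrm{id}$ holds exactly, so no such scalar appears and \cref{scalarinverseditlm} is not needed.
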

\begin{proof}
  \[
    \tikzfig{lemmas/redspider0pfusedit2prf}
  \]
  where $u_{m,n+1}=d^{\frac{m+n-1}{2}}-1$, $u_{t+1,s}=d^{\frac{s+t-1}{2}}-1$, and $u_{m+t,n+s}=d^{\frac{m+n+s+t-2}{2}}-1$.
  The other equalities can be proved similarly using the same rules.
\end{proof}

\begin{lemma}
  \label{redspiderforgrlm}\cite{wangQufiniteZXcalculusUnified2022}
  \begin{gather}
    \tikzfig{lemmas/redspiderforgr}
    \tag{HX}\label{rule:HX}
  \end{gather}
  where $v_{m,n}= d^{\frac{-m-n+2}{2}}-1$, $j \in \{ 0,1,\cdots, d-1\}$.
  We also call this equality~\eqref{rule:HX}.
\end{lemma}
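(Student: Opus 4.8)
The plan is to derive \eqref{rule:HX} directly from the definition of the pink spider in \eqref{rule:HZ}, by interchanging the roles of the $H$ and $H^\dagger$ boxes on every leg and then inverting the normalisation scalar. Recall that \eqref{rule:HZ} writes a pink spider with phase $K_j$ as a green $K_j$ spider whose legs carry Hadamard boxes (one orientation of $H$ versus $H^\dagger$ on outputs versus inputs), together with the scalar green box $u_{m,n}$ representing $d^{\frac{m+n-2}{2}}$. The target \eqref{rule:HX} differs from \eqref{rule:HZ} precisely in which side of each leg carries $H$ as opposed to $H^\dagger$, and in that its scalar $v_{m,n}$ represents the reciprocal $d^{\frac{-m-n+2}{2}}$. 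Hence it suffices to show how to swap an $H$ box for an $H^\dagger$ box on a single wire and how the resulting corrections reorganise into the required green spider and scalar.

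First I would establish the single-wire conversions $H = H^\dagger D$ and $H^\dagger = H D$, where $D$ is the dualiser of \eqref{rule:Du}. These follow from \cref{hheqhdhd}, which identifies $H^2$ with $(H^\dagger)^2$; by \eqref{rule:Du} this common value is the dualiser $D$, which is an involution by \cref{dboxsquarelm}. Combined with the fact that $H$ and $H^\dagger$ are mutually inverse, available from \eqref{rule:HDagger} and \cref{hilm}, one gets $H^\dagger D = H^\dagger H^2 = H$ and symmetrically $H D = H^\dagger$. Applying this rewrite to each of the $m + n$ legs of the green spider in \eqref{rule:HZ} turns every $H$ box into an $H^\dagger$ box and vice versa, at the cost of inserting exactly one dualiser $D$ on the green side of every leg.

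Next I would absorb these dualisers into the green spider. A single dualiser cannot be pushed through a spider on its own, but with a dualiser present on all legs simultaneously the composite is again a green spider, now with its phase vector $K_j$ sent to $K_{-j}$; I would prove this by combining the dualiser rules \eqref{rule:Du} and \eqref{rule:D1} with green spider fusion \eqref{rule:S1}. Finally, I would move the scalar onto the pink side and replace $u_{m,n}$ by its inverse $v_{m,n}$ using \cref{scalarinverseditlm}, which applies because $d^{\frac{m+n-2}{2}} \neq 0$; relabelling $j \mapsto -j$, which is harmless since $j$ is taken modulo $d$, then yields exactly \eqref{rule:HX}. The main obstacle is this penultimate step: establishing \emph{diagrammatically}, rather than by a matrix computation, that a dualiser sitting on every leg collapses into a sign flip of the green phase. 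Because the dualisers must be handled collectively and no individual one passes through the spider, this requires care, as does keeping the scalar bookkeeping consistent with the reciprocal normalisation throughout.
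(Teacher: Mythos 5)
Your outline is mathematically sound, but it is a genuinely different and substantially heavier route than the paper's, and as written it contains one real gap. The paper proves \eqref{rule:HX} early in the appendix, \emph{before} any dualiser lemma is available: its proof is a short two-line computation that unfolds the pink spider via \eqref{rule:HZ} (introducing the scalar $u_{m,n}$), performs the $H \leftrightarrow H^\dagger$ exchange with the Hadamard-box facts established just beforehand --- \cref{hheqhdhd}, \cref{hilm} and the axiom \eqref{rule:H1}, whose whole purpose is to relate the two orientations of Hadamard conjugation at the cost of reversing the phase vector (note $\overleftarrow{K_j} = K_{-j}$, absorbed by relabelling $j$) --- and then cancels $u_{m,n}$ against $v_{m,n}$ via \cref{scalarinverseditlm}. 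In other words, the exchange of $H$ and $H^\dagger$ that you manufacture from the dualiser is essentially handed to you by \eqref{rule:H1}, a rule your proposal never invokes.

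The gap is your claim that the common value $H^2 = (H^\dagger)^2$ \enquote{is the dualiser $D$ by \eqref{rule:Du}}. That is not what \eqref{rule:Du} says: the dualiser is defined there following Coecke--Duncan, as a composite of a cap of one colour with a cup of the other, not as $HH$. Semantically $D = HH$ is true, but in a completeness argument it must be \emph{derived} diagrammatically, and that derivation is itself a lemma-sized piece of work (it is part of what the paper establishes only later, in \cref{dualiserslm}). The same ordering issue affects the rest of your argument: the involutivity of $D$ (\cref{dboxsquarelm}) and the collective sliding of dualisers through a green spider with phase reversal (in the paper, \cref{dboxslidegrnlm}) are proved after \eqref{rule:HX}. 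None of these facts depends on \eqref{rule:HX}, so your route is not circular, but as written it silently presupposes several unproved diagrammatic identities whose proofs are comparable in size to the paper's entire proof of the lemma. To repair your version, you would first have to prove $D = HH$ from \eqref{rule:Du} (e.g.\@ by applying \eqref{rule:HZ} to the coloured cap in the dualiser's definition and yanking), then prove the phase-flip collapse of dualisers on all legs from \eqref{rule:D1} and \eqref{rule:S1}, and only then run your leg-by-leg substitution; the direct route through \eqref{rule:H1} avoids all of this.
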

\begin{proof}
  \begin{align*}
    &\tikzfig{lemmas/redspiderforgrprf-1} \\
    &\tikzfig{lemmas/redspiderforgrprf-2}
  \end{align*}
  where $u_{m,n}= d^{\frac{m+n-2}{2}}-1$, $v_{m,n}= d^{\frac{-m-n+2}{2}}-1$, $j \in \{ 0,1,\cdots, d-1\}$.
\end{proof}

\begin{lemma}
  \label{b3quditlm}
  \begin{gather}
    \tikzfig{lemmas/b3qudit}
    \tag{B3}\label{rule:B3}
  \end{gather}
\end{lemma}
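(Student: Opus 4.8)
The plan is to obtain \eqref{rule:B3} as a Hadamard-conjugated instance of the bialgebra axiom \eqref{rule:B2}, using the colour-change lemma \eqref{rule:HX} just established to move the Hadamard boxes across the spiders. First I would attach matched $H$/$H^\dagger$ box pairs to the relevant legs of one side of \eqref{rule:B3}; these pairs reduce to identities up to scalars by \cref{hheqhdhd} and the colour-change rule \eqref{rule:H1}, so inserting them does not change the underlying map but sets up the conjugation.

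Next I would push each Hadamard box through its adjacent spider using \eqref{rule:HX}, turning green spiders into pink ones (and vice versa) and thereby transforming the connectivity pattern of \eqref{rule:B3} into exactly the pattern appearing in \eqref{rule:B2}. At that point a single application of \eqref{rule:B2} performs the bialgebra rewrite on the phase-free core. I would then run the conjugation in reverse---pushing the remaining Hadamard boxes back out with \eqref{rule:HX} and fusing any adjacent spiders via \eqref{rule:S1}---to recover the target side of \eqref{rule:B3}.

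What remains is purely scalar bookkeeping. Every application of \eqref{rule:HZ} and \eqref{rule:HX} emits dimension-dependent factors $u_{m,n} = d^{(m+n-2)/2}-1$ and $v_{m,n} = d^{(-m-n+2)/2}-1$, and because the qudit Hadamard box is not self-adjoint these do not cancel as automatically as in the qubit case. I would collect them with \cref{scalargeneralmultlm}, absorb the Hadamard-induced scalars via \cref{hilm}, and discard trivial empty diagrams with \cref{zeroemptyditlm}. Checking that the exponents of $d$ accumulated on the two sides agree---the same exponent-matching argument carried out in the proof of \cref{s4lm}---is the step I expect to be the main obstacle, since the bialgebra rewrite changes the number of legs and hence shifts every $u_{m,n}$ index; getting the residual scalar to be exactly $1$ requires this arithmetic to balance precisely.
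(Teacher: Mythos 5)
Your toolkit is the right one---the paper's proof of \eqref{rule:B3} is indeed a short Hadamard-conjugation argument, using \eqref{rule:HZ}/\eqref{rule:HX} to change colours and tracking the dimension-dependent boxes, exactly the scalars $u_{2,1}=d^{1/2}-1$ and $v_{1,0}=d^{1/2}-1$ that it annotates---but your reduction target is wrong, and that is where the proposal breaks. Look at which scalars the paper records: $u_{2,1}$ is the \eqref{rule:HZ} scalar of a \emph{three}-legged pink spider, while $v_{1,0}$ is the \eqref{rule:HX} scalar of a \emph{one}-legged pink spider. So \eqref{rule:B3} is a copy/counit-type bialgebra condition, in which a three-legged pink spider meets one-legged (unit- or counit-like) generators; it is not a colour-swapped instance of the bialgebra \eqref{rule:B2}. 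Conjugation by $H$/$H^\dagger$ boxes preserves the arity of every spider and the overall input/output type of the diagram, and every spider occurring in \eqref{rule:B2} has three legs (the rule is a map of type $2\to 2$), so no amount of pushing Hadamards through \eqref{rule:B3} can yield \enquote{exactly the pattern appearing in \eqref{rule:B2}}; the pivotal step \enquote{a single application of \eqref{rule:B2}} therefore cannot be performed. The correct targets, after expanding the three-legged pink spider with \eqref{rule:HZ} and recognising a Hadamard applied to a one-legged green generator as a pink $K_0$ generator via \eqref{rule:HX} (which is precisely where $v_{1,0}$ enters), are the copy axiom \eqref{rule:K0} together with \eqref{rule:Zer} and ordinary Z-box fusion \eqref{rule:S1}; a copy law is logically independent of the bialgebra law, so \eqref{rule:B2} cannot substitute for them.

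Two smaller points. Cancelling an inserted $H$/$H^\dagger$ pair is exact, by the definition \eqref{rule:HDagger} and \cref{hilm}; \cref{hheqhdhd} states $HH = H^\dagger H^\dagger$ and is not the lemma that justifies that step, and no scalars arise from it. Also, the scalar bookkeeping you flag as the main obstacle is much lighter than the \cref{s4lm}-style exponent matching you anticipate: in the paper's derivation only the two boxes $u_{2,1}$ and $v_{1,0}$ occur, both representing $\sqrt{d}$, and their cancellation is immediate rather than a delicate balancing of shifted indices.
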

\begin{proof}
  \[
    \tikzfig{lemmas/b3quditprf}
  \]
  where $u_{2,1}= d^{\frac{1}{2}}-1$, $v_{1,0}= d^{\frac{1}{2}}-1$.
\end{proof}


\begin{lemma}
  \label{dboxsquarelm}\cite{wangQufiniteZXcalculusUnified2022}
  \tikzfig{lemmas/dboxsquare}
\end{lemma}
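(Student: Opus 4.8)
The plan is to prove the involution property (two stacked $D$ boxes equal a bare wire) by reducing the dualiser to a pair of Hadamard boxes and then exploiting the order-four behaviour of the qudit Fourier transform. As a sanity check at the level of the interpretation, note that $\interp{D} = \sum_i \ket i \bra{-i}$ coincides with $H^2$, since $\frac1d\sum_{j,k,m}\omega^{jk}\omega^{km}\ket j\bra m = \sum_j \ket j\bra{-j}$; this confirms the identity is true and points to the route. Diagrammatically, I would first unfold the definition \eqref{rule:Du} so as to rewrite each yellow $D$ box as two stacked Hadamard boxes, turning $D \circ D$ into a vertical composite of four $H$ boxes.

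The key step is then to break the symmetry between $H$ and $H^\dagger$. I would apply \cref{hheqhdhd}, i.e.\@ the equality $HH = H^\dagger H^\dagger$, to one of the two $HH$ blocks, rewriting $HHHH$ as $HH H^\dagger H^\dagger$. The two boxes meeting in the middle now form an adjacent $H$--$H^\dagger$ pair, which collapses to a plain wire by the defining property of the $H^\dagger$ box \eqref{rule:HDagger} (whose well-definedness is \cref{hilm}). After this cancellation the outer $H$ and $H^\dagger$ boxes become adjacent in turn, and a second application of the same cancellation reduces the whole diagram to the identity wire, as required.

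The main obstacle I anticipate is the very first diagrammatic move: justifying that the yellow $D$ box of \eqref{rule:Du} genuinely rewrites to $H^2$ \emph{within the calculus}, rather than merely agreeing with it under $\interp{\cdot}$, since completeness is exactly what is being established and so cannot be invoked here without circularity. If \eqref{rule:Du} already presents the dualiser as a double Hadamard then this reduction is immediate and the proof is essentially the two-step cancellation above. Otherwise I would first prove an auxiliary expression of $D$ in terms of $H$ and $H^\dagger$, using the colour-change relations \eqref{rule:H1} and \eqref{rule:HZ} together with the cup/cap definitions \eqref{rule:S3}; once $D = H^2$ is secured, the remaining argument is purely mechanical and rests entirely on \cref{hheqhdhd} and the $H$--$H^\dagger$ cancellation.
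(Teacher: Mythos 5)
Your proposal is correct and takes essentially the same route as the paper: the paper's proof also unfolds the dualiser via \eqref{rule:Du} into Hadamard boxes and collapses the resulting stack using \cref{hheqhdhd} (i.e.\@ $HH = H^\dagger H^\dagger$) together with the $H$--$H^\dagger$ cancellation of \eqref{rule:HDagger} (whose well-definedness is \cref{hilm}). The circularity you worry about does not arise, since both \cref{hheqhdhd} and \cref{hilm} are derived from the rule set before \cref{dboxsquarelm}, so no semantic or completeness argument is needed.
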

\begin{proof}
  \[
    \tikzfig{lemmas/dboxsquareprf}
  \]
\end{proof}

\begin{lemma}
  \label{dboxgdotlm}\cite{wangQufiniteZXcalculusUnified2022}
  \[
    \tikzfig{lemmas/dboxgdot}
  \]
\end{lemma}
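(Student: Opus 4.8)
The plan is to reduce the claimed identity to the defining equation of the dualiser together with the copying behaviour of the green spider. By \eqref{rule:Du} the yellow $D$ box is built from a multiplier labelled $\minu 1$ attached to green and pink nodes, so I would first unfold every $D$ box appearing in the statement into this form. After this substitution the equation becomes a purely Z/X/multiplier statement, and the goal reduces to showing that the $\minu 1$ multiplier is copied through the green dot.

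I would carry this out in the following order. First, unfold the $D$ boxes via \eqref{rule:Du}. Second, use the copy rules \eqref{rule:K0} and \eqref{rule:K1} together with spider fusion \eqref{rule:S1} and the multiplier rules \eqref{rule:Mu} to push the $\minu 1$ multiplier through the green node, so that it is duplicated onto each of the remaining legs; at the level of the interpretation this is exactly the statement that the negation $\ket{j}\mapsto\ket{-j}$ commutes with the basis-copying action of the green spider, i.e.\@ $\ket{j}^{\otimes m} \mapsto \ket{-j}^{\otimes m}$. Third, re-fold each duplicated multiplier back into a $D$ box using \eqref{rule:Du} in reverse. Finally, I would invoke the involution property from \cref{dboxsquarelm} to cancel any pair of dualisers produced on the same wire during the rewriting, and verify against the interpretation $\interp{D}=\sum_i\ket{i}\bra{-i}$ that the two sides agree.

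The main obstacle I expect is the passage of the $\minu 1$ multiplier through the green spider: one must confirm that it genuinely copies onto \emph{all} of the other legs rather than merely commuting past a single wire, and that all index arithmetic is carried out consistently modulo $d$ so that no spurious scalar is introduced and the reversal $\overrightarrow{a}\mapsto\overleftarrow{a}$ (should the green dot carry a phase) comes out exactly. Establishing this copy-through from the listed axioms, rather than assuming it, is the crux of the argument; once it is in place the remaining steps are routine applications of \eqref{rule:Du}, \eqref{rule:S1}, and \cref{dboxsquarelm}.
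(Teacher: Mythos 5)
There is a genuine gap here, and it sits exactly where you yourself locate the ``main obstacle.'' First, a factual problem: \eqref{rule:Du} does not define the yellow $D$ box as a $\minu 1$-labelled multiplier attached to green and pink nodes. Following \cite{coeckeInteractingQuantumObservables2011}, it is defined as a colour-mismatched bent wire (a green cup composed with a pink cap); the identification of $D$ with the multiplier labelled $d\minu 1$ is a \emph{consequence} derived later in the paper via \eqref{rule:P1} and \cref{multipliermultlm} (see \cref{dualisermultipliermultlm}), not the definition, so your opening unfolding step is not available as stated. Second, and more seriously, the copy-through step that carries your whole argument cannot be assembled from \eqref{rule:K0}, \eqref{rule:K1}, \eqref{rule:S1} and \eqref{rule:Mu}: those copy rules govern specific basis/Pauli-type states, and passing a pink node through a green spider requires the bialgebra rule \eqref{rule:B2}; the general fact you are invoking is essentially \cref{multiplierpushlm} together with \cref{dboxgcopylm}, which the paper only establishes much later in its development, so you would have to prove it from scratch without circularity. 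Worst of all, the green dot in this lemma carries an arbitrary phase vector $\overrightarrow{a}$, and no copy- or bialgebra-style rule pushes anything through a \emph{phased} spider: the dualiser does not commute with the phase, it re-indexes it, producing precisely the reversal $\overrightarrow{a}\mapsto\overleftarrow{a}$. That reversal is the entire content of the lemma, yet your plan relegates it to a parenthetical consistency check --- in other words, it assumes the very thing to be proved. Finally, your closing step of ``verifying against the interpretation'' is a soundness check, not a derivation; in a completeness development every equality must be produced by the rewrite rules themselves.

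For contrast, the paper's own proof is a short, direct rewrite chain: it unfolds the $D$ box by its definition \eqref{rule:Du} and simplifies using only the basic axioms (spider fusion and the Hadamard/colour rules), with the reversed vector $\overleftarrow{a}$ emerging from how the phase re-indexes around the mismatched cap/cup; the second stated equality is obtained symmetrically. This is why the lemma appears immediately after \cref{dboxsquarelm} in the appendix, before any multiplier machinery exists. If you want to salvage your route, you must first prove the multiplier-push and $D$-copy facts independently of this lemma (using \eqref{rule:B2}, not just the copy rules), and then separately prove how a $\minu 1$ multiplier acts on a Z box with an arbitrary vector $\overrightarrow{a}$ --- at which point you will find that this last step \emph{is} the lemma, argued directly.
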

\begin{proof}
  \[
    \tikzfig{lemmas/dboxgdotprf}
  \]
  The second equality can be proved similarly.
\end{proof}

\begin{lemma}
  \label{dualiserslm}\cite{wangQufiniteZXcalculusUnified2022}
  \tikzfig{lemmas/dualisers}
\end{lemma}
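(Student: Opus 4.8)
The plan is to reduce every equation in the statement to the defining diagram of the dualiser given in \eqref{rule:Du} together with the two properties of the dualiser already in hand: that it squares to the identity (\cref{dboxsquarelm}) and that it conjugates the phase of a green spider by negating its indices (\cref{dboxgdotlm}). I would begin each case by replacing the yellow $D$ box with its definition, exposing the underlying W node, green spider, and Hadamard boxes, so that all subsequent reasoning happens in terms of the primitive generators whose rules are available.

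With the dualiser unfolded, I would first pin down the intended right-hand side by evaluating the interpretation, where the dualiser acts as $\sum_{i=0}^{d\minu 1}\ket{i}\bra{\minu i}$; this tells me exactly how many dualisers should survive on each leg of the target diagram. The derivation itself would then proceed by fusing adjacent green spiders with \eqref{rule:S1}, cancelling matched $H$ and $H^\dagger$ boxes through \eqref{rule:H1} and \eqref{rule:HDagger}, and, crucially, sliding a dualiser from one wire of a cap or cup onto the other by means of \cref{dboxgdotlm}. Whenever two dualisers end up stacked on a single wire, \cref{dboxsquarelm} collapses them to a plain wire, which is what drives the equations to their stated form.

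The step I expect to be the main obstacle is the bookkeeping of the index negation introduced by the $\bra{\minu i}$: because the green and red caps and cups genuinely differ in the qudit setting, I must track carefully onto which leg each negation lands when a dualiser is slid across a cup or a cap, and ensure the parity of dualisers is correct after the diagram is reflected top-to-bottom. The delicate point is to show that sliding a dualiser through a single cup leaves exactly one dualiser behind rather than none or two; once this sliding identity is established by the combined use of \cref{dboxsquarelm,dboxgdotlm}, the remaining equalities follow either by the same moves or by a symmetric cap/cup-reflected argument, so I would prove the sliding identity in detail and treat the others as analogous.
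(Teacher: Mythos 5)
Your toolkit is the right one in outline --- unfolding \eqref{rule:Du}, fusing with \eqref{rule:S1}, cancelling Hadamard pairs via \eqref{rule:H1} and \eqref{rule:HDagger}, and invoking \cref{dboxsquarelm,dboxgdotlm}, all of which are legitimately available at this point in the development --- and the paper's own proof is indeed a direct diagrammatic computation in this spirit. But your proposal has a genuine gap at exactly the point you yourself flag as delicate: the claim that \enquote{sliding a dualiser through a single cup leaves exactly one dualiser behind} is asserted to follow from \cref{dboxsquarelm,dboxgdotlm}, and is never actually derived. Note that \cref{dboxgdotlm} concerns the interaction of the $D$ box with green boxes and their phase vectors, not with caps and cups; to get from there to a cup-sliding statement you would at minimum need to identify caps and cups as phaseless green spiders via \eqref{rule:S3} and fuse with \eqref{rule:S1}, which you never spell out. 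More tellingly, this sliding behaviour is precisely the content of \cref{dboxslidegrnlm} and \cref{dboxbell}, which the paper establishes \emph{after} \cref{dualiserslm} (with \cref{dboxbell} derived from \cref{dboxslidegrnlm,dboxsquarelm}). So the step your whole argument hinges on is not an available ingredient here but a downstream consequence; deferring to it makes the proposal circular in spirit, or at best leaves the main technical work undone.

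A second, methodological problem: you propose to \enquote{pin down the intended right-hand side by evaluating the interpretation} $\sum_{i=0}^{d-1}\ket{i}\bra{\minu i}$. In the context of a completeness proof, the standard interpretation can only ever serve as a private heuristic; no step of the derivation may appeal to it, and the target diagram is in any case already fixed by the statement of the lemma. The paper's proof makes no such appeal: it rewrites one side into the other purely by the rules and by the earlier lemmas (in particular \cref{hheqhdhd,dboxgdotlm,dboxsquarelm}), in three displayed steps. To repair your proposal you would need to replace the semantic step by nothing at all, and replace the asserted sliding identity by an explicit derivation from \eqref{rule:S3}, \eqref{rule:S1}, and \cref{dboxgdotlm} --- at which point you would essentially have reconstructed the paper's computation.
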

\begin{proof}
  \begin{align*}
    &\tikzfig{lemmas/dualisersprf-1} \\
    &\tikzfig{lemmas/dualisersprf-2} \\
    &\tikzfig{lemmas/dualisersprf-3}
  \end{align*}
\end{proof}

\begin{lemma}
  \label{hadslidegnlm}\cite{wangQufiniteZXcalculusUnified2022}
  \begin{align*}
    &\scalebox{.9}{
      \tikzfig{lemmas/hadslidegn-1}
    } \\
    &\scalebox{.9}{
      \tikzfig{lemmas/hadslidegn-2}
    }
  \end{align*}
\end{lemma}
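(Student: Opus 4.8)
The plan is to prove both Hadamard-slide identities by first rewriting the green node into its pink-spider form, then sliding the Hadamard box through using the colour-change machinery, and finally cleaning up the residual Hadamard and dualiser boxes. Concretely, I would begin by converting the green spider in each equation into a pink spider via the colour-change rules \eqref{rule:HZ} and \eqref{rule:HX}; this exposes the $H$ and $H^\dagger$ boxes that conjugate the spider and lets me treat the whole diagram as an explicit composition of boxes on each leg rather than as an opaque green node. Alternatively one could expand the offending Hadamard directly with the decomposition \eqref{rule:HD}, but the colour-change route keeps the leg structure cleaner.

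Once everything is expressed in terms of $H$, $H^\dagger$, and pink-spider boxes, I would collect the Hadamard boxes on the legs through which the slide is to happen. The key ingredients here are rule \eqref{rule:H1}, which governs how an $H$ together with an $H^\dagger$ recolours a spider, and \cref{hheqhdhd}, which supplies the asymmetric identity $HH = H^\dagger H^\dagger$ that plays the role of the qubit relation $H^2 = I$. Because the qudit Hadamard is not self-inverse, pushing it past the node does not simply annihilate; instead a residual index negation survives, which I expect to absorb into a dualiser box using \cref{dboxgdotlm} and \cref{dualiserslm}, and then to discard in pairs via $D^2 = I$ from \cref{dboxsquarelm}. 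A final application of spider fusion \eqref{rule:S1} should then yield the right-hand side. The second identity \emph{(hadslidegn-2)} I would obtain either by transposing the first or by running the same argument with the roles of $H$ and $H^\dagger$ exchanged.

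The main obstacle is the bookkeeping forced by the failure of $H^2 = I$ in dimension $d > 2$. In the qubit case a Hadamard slide is essentially automatic, since a Hadamard on each leg merely swaps the spider colour and squares away; in the qudit case one must track, modulo $d$, exactly where each $H$ versus $H^\dagger$ box lands and precisely when a dualiser is generated, and then check that the net conjugation reproduces the intended relabelling of the pink spider's phase rather than a reflected or negated one. Confirming that this index arithmetic matches on both sides — and that the stray dualisers genuinely cancel in pairs instead of persisting — is the delicate part; the remaining rewrite steps are otherwise routine applications of the colour-change and dualiser lemmas already established above.
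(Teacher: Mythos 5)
Your proposal follows essentially the same route as the paper: its proof is likewise a short rewrite chain assembled from the colour-change relations \eqref{rule:HZ}/\eqref{rule:HX}, the identity of \cref{hheqhdhd}, and the dualiser facts of \cref{dboxgdotlm,dualiserslm,dboxsquarelm}, finished off by spider fusion, and it dispatches the remaining variants exactly as you suggest, noting that the other equalities are proved similarly with the roles of $H$ and $H^\dagger$ exchanged. The mod-$d$ bookkeeping you flag as the delicate point does work out --- the residual dualisers produced by the colour-change round trip land on a common leg and cancel by \cref{dboxsquarelm} --- so your plan goes through as stated.
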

\begin{proof}
  \[
    \tikzfig{lemmas/hadslidegnprf}
  \]
  The other equalities can be proved similarly.
\end{proof}

\begin{lemma}
  \label{dboxslidegrnlm}\cite{wangQufiniteZXcalculusUnified2022}
  \begin{align*}
    &\tikzfig{lemmas/dboxslidegrn-1}\\
    &\tikzfig{lemmas/dboxslidegrn-2}
  \end{align*}
\end{lemma}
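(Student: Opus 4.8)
The plan is to prove both slide equations by reducing them to the single-dot interaction already established in \cref{dboxgdotlm}, together with spider fusion \eqref{rule:S1} and the fact that the dualiser is an involution (\cref{dboxsquarelm}).

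First I would treat the green case. Using \eqref{rule:S1} in reverse, I split the green spider carrying phase $\overrightarrow{a}$ into a phase-free green node (the copy structure) post-composed with a single-input--single-output Z box holding $\overrightarrow{a}$. Then \cref{dboxgdotlm} lets me slide the $D$ box through the phase-free copy node, depositing a $D$ box onto each of the remaining legs. It remains to commute the $D$ box past the phase: since $\interp{D} = \sum_i \ket i \bra{-i}$ sends $\ket j \mapsto \ket{-j}$, the phase vector $\overrightarrow a = (a_1,\dots,a_{d-1})$ is carried to $(a_{d-1},\dots,a_1) = \overleftarrow{a}$ (with $a_0 = 1$ fixed). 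Re-fusing with \eqref{rule:S1} and using $D^2 = \mathrm{id}$ from \cref{dboxsquarelm} to clear the $D$ box off the chosen leg then yields the stated identity, now with reversed phase $\overleftarrow a$ and a $D$ box on every other leg.

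Second, for the red case I would pass to the green picture. Using \eqref{rule:HX} (equivalently \eqref{rule:HZ}), the pink spider is expressed as a green spider conjugated by Hadamard boxes, so sliding the $D$ box through the pink spider reduces to (i) commuting $D$ past the outer Hadamard boxes and (ii) applying the green slide from the previous paragraph. For step (i) I rely on the $D$--Hadamard commutation packaged in \cref{dualiserslm}; here one must be careful because in the qudit setting $H$ and $H^\dagger$ are genuinely distinct, so the dualiser does not simply pass through unchanged but interacts with the Fourier structure. After applying the green slide and re-inserting the Hadamard boxes, \eqref{rule:HX} converts the result back into a pink spider, giving the second equation.

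The hard part will be bookkeeping the Hadamard asymmetry in the red case: because two qudit Hadamard boxes do not cancel, I must track precisely whether each commutation produces an $H$ or an $H^\dagger$ box and how the dualiser's own transpose behaviour (\cref{dualiserslm}) threads through the Fourier conjugation, so that the $K_j$ phase labels land correctly and the diagram closes back up as a single pink spider. By contrast, the green case should be essentially routine once \cref{dboxgdotlm} and spider fusion are in hand.
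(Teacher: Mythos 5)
Your reduction of the green case contains a genuine gap, and it sits exactly at the heart of the lemma. After unfusing the phase with \eqref{rule:S1} and sliding the dualiser through the phase-free part, you justify the remaining step --- commuting the $D$ box past the $1$-to-$1$ Z box so that $\overrightarrow{a}$ becomes $\overleftarrow{a}$ --- by computing with the interpretation $\interp{D} = \sum_i \ket{i}\bra{-i}$. That is a semantic argument, not a derivation from the rules, and here such a move is inadmissible: this lemma is an ingredient of the completeness proof, whose entire purpose is to show that semantic equalities can be reproduced by rewriting, so every intermediate equality must itself be a rule derivation. Worse, the step you assert semantically is precisely the $1$-to-$1$ instance of the lemma being proved, so your argument reduces the general case to a special case of itself and then stops. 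To close the gap you would need to derive the phase reversal diagrammatically, e.g.\@ from the characterisations of the dualiser in \cref{dualiserslm} (or its definition \eqref{rule:Du}) together with the Hadamard-slide equalities of \cref{hadslidegnlm} and \eqref{rule:H1}; the paper itself disposes of the statement by a single direct rewrite chain built on exactly those preceding lemmas, remarking that the remaining equalities follow similarly.

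The red case fares no better as written: you correctly identify that conjugating by Hadamards via \eqref{rule:HX} and commuting $D$ through them is delicate because $H \neq H^\dagger$ for qudits, but you then leave precisely that bookkeeping --- whether each commutation yields $H$ or $H^\dagger$, and how the $K_j$ labels transform --- as \enquote{the hard part} to be done later. Since that bookkeeping \emph{is} the proof, what you have for the pink spider is a plan rather than an argument. Finally, be careful with your appeal to \cref{dboxgdotlm}: that lemma concerns the interaction of the $D$ box with a green dot, and if it does not already provide the many-legged slide through an arbitrary phase-free spider, then that slide is again an instance of the very statement under proof and must be derived, not assumed.
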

\begin{proof}
  \[
    \tikzfig{lemmas/dboxslidegrnprf}
  \]
  The other equalities can be proved similarly.
\end{proof}

\begin{lemma}
  \label{dboxbell}
  \begin{align*}
    &\tikzfig{lemmas/dboxbell-1} \\
    &\tikzfig{lemmas/dboxbell-2}
  \end{align*}
\end{lemma}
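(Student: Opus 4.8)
The plan is to prove both equalities by unfolding the Bell-state diagrams into green spiders and then pushing the yellow $D$ box through them, cleaning up afterwards with the involutivity of the dualiser. First I would rewrite each cap and cup using its definition \eqref{rule:S3}, so that the Bell state becomes a (bent) green spider of the appropriate arity; wherever the pink Bell state appears, I would additionally apply \eqref{rule:HZ} (equivalently \eqref{rule:HX}) so that the whole diagram is expressed in terms of green spiders, dualisers, and Hadamard boxes only. This is the right starting point because the two Bell states differ by exactly one dualiser at the level of interpretation, $\sum_i\ket{ii}$ versus $\sum_i\ket{i,-i}$, and this discrepancy is precisely what the $D$ box is designed to absorb.

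The core step is to move the dualiser from one leg of the green spider to the other. Since a green spider copies the index-negation that the dualiser implements, sliding it across is exactly the content of \cref{dboxslidegrnlm}, together with the elementary dualiser/green-dot interaction of \cref{dboxgdotlm}: applying these lets the $D$ box commute past the spider, converting a dualiser on one wire into dualisers on the remaining wires. Whenever this produces a pair of dualisers on the same wire I would cancel them using \cref{dboxsquarelm}, i.e. $D^2 = I$, and symmetrise with \cref{dualiserslm} where needed. The second (cap) equality then follows from the first (cup) equality by taking transposes — the dualiser is its own transpose — or, equivalently, by repeating the same three steps verbatim on the transposed diagram.

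The main obstacle is the bookkeeping. Because the calculus is not flexsymmetric, I must track exactly which leg each dualiser lands on as it slides through the spider, rather than treating the diagram as an open graph; a misplaced dualiser would yield the wrong Bell state. I must also check that no stray normalisation scalars are picked up from the pink-spider definition \eqref{rule:HZ}, which is why it is cleaner to work with the normalised pink spider throughout. Provided the slide lemmas are applied to the correct legs, the final cancellation by $D^2 = I$ closes the argument.
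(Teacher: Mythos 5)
Your proof is correct and takes essentially the same route as the paper: the paper's entire proof is the single remark that the lemma follows from \cref{dboxslidegrnlm,dboxsquarelm}, i.e.\@ sliding the dualiser through the green node underlying the cap/cup and cancelling the resulting pair of dualisers via $D^2 = I$, which is exactly the core of your argument. The extra unfolding you describe (via \eqref{rule:S3}, \eqref{rule:HZ}, \cref{dboxgdotlm}, and \cref{dualiserslm}) is harmless elaboration of the same idea.
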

This follows from \cref{dboxslidegrnlm,dboxsquarelm}.

\begin{lemma}
  \label{swaprgcapcupditlm}\cite{wangQufiniteZXcalculusUnified2022}
  \[
    \tikzfig{lemmas/swaprgcapcupdit}
  \]
\end{lemma}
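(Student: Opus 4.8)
The plan is to verify the identity at the level of interpretations and then lift that semantic equality to a diagrammatic derivation built from the dualiser machinery established in \cref{dboxsquarelm,dboxslidegrnlm,dboxbell,dualiserslm}. First I would expand every cap and cup appearing on both sides via their definition \eqref{rule:S3}, writing the green cup as the symmetric Bell vector $\sum_{j=0}^{d-1}\ket{jj}$ and, through the pink-spider definition \eqref{rule:HZ} and the dualiser \eqref{rule:Du}, writing the red cup as its Hadamard-conjugate carrying an effective $\ket{i}\mapsto\ket{-i}$ relabelling. Comparing these standard interpretations against the swap map $\sum_{i,j}\ket{ji}\bra{ij}$ reduces the claimed equality to a finite check on basis vectors, which pins down exactly where a dualiser must sit on the right-hand side.

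Having confirmed the semantics, I would reconstruct the equality diagrammatically. The green Bell vector is manifestly invariant under the swap, so the swap commutes past a green cap or cup with no residue; the substantive case is the red one. Here I would slide a dualiser through the red node using \cref{dboxslidegrnlm}, and use \cref{dboxbell} together with \cref{dualiserslm} to rewrite the red cap/cup as a green one decorated by a dualiser $D$. Once the swap has crossed the colour-changing $H$/$H^\dagger$ boxes, the surviving dualisers are merged via the involution law \cref{dboxsquarelm}, after which spider fusion \eqref{rule:S1} and \eqref{rule:S4} tidies the diagram into the target form.

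The hard part will be the bookkeeping of the dualiser. Because the green and red caps and cups do \emph{not} coincide in the qudit setting, the swap does not commute with the red structure outright but conjugates it by a dualiser, introducing the $\ket{i}\mapsto\ket{-i}$ relabelling on one leg. The delicate step is checking that this relabelling is cancelled exactly --- via $D^2=\mathrm{id}$ from \cref{dboxsquarelm} --- by the corresponding relabelling on the opposite leg, so that the net transformation is precisely the dualiser pattern asserted by the lemma rather than a stray phase or an uncancelled permutation of the basis.
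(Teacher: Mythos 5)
Your second and third paragraphs contain a workable derivation, and it is close in spirit to the paper's own proof: the paper disposes of this lemma with a single short rewrite chain (proving only the first equality and noting the others are analogous), which likewise converts the pink cap/cup into a green one --- using the machinery of \cref{dualiserslm,dboxslidegrnlm,dboxbell}, all established immediately before this lemma, or directly the definition \eqref{rule:HZ} --- and then exploits the symmetry of the Z box together with naturality of the swap. Two cautions, however. First, your opening step, \enquote{verify the identity at the level of interpretations and then lift that semantic equality to a diagrammatic derivation}, cannot carry any logical weight in this appendix: the ability to lift semantic equality to syntactic derivability \emph{is} the completeness theorem these lemmas are building toward, so invoking it here would be circular. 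Semantic evaluation may guide your intuition about where dualisers belong, but the proof consists solely of the rewrite chain, which in your case fortunately stands on its own. Second, no dualiser survives in the statement being proved: since the red cup is interpreted as $\sum_{i}\ket{i}\otimes\ket{-i}$, it is itself swap-invariant, and the lemma asserts plain invariance for both colours. The dualisers you introduce as bookkeeping must therefore cancel completely --- either by sliding $D$ across the green cup (\cref{dboxbell}) or via $D^{2}=\mathrm{id}$ (\cref{dboxsquarelm}) --- so your phrase \enquote{the dualiser pattern asserted by the lemma} signals a misreading of the target: if your derivation terminates with a stray $D$ on one leg, that is an error to be fixed, not the intended right-hand side.
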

\begin{proof}
  We only prove the first equality, the others can be proved similarly.
  \[
    \tikzfig{lemmas/swaprgcapcupditprf}
  \]
\end{proof}

\begin{lemma}
  \label{grconnectkslm}\cite{wangQufiniteZXcalculusUnified2022}
  \[
    \tikzfig{lemmas/grconnectks}
  \]
\end{lemma}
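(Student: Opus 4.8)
Although the statement is given purely as a diagram, the label \texttt{grconnectks} together with its placement among the foundational $K$-phase lemmas strongly suggests that it records how a green (Z) spider behaves when connected to one or more pink spiders carrying the Pauli-like $K_j$ phases. Since the identity is quoted from \cite{wangQufiniteZXcalculusUnified2022}, I would not reach for the interpretation functor; instead the plan is a purely diagrammatic reduction of the left-hand side to the right-hand side using the rules already fixed in \cref{subsec:rule-set}.

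First I would commute the pink $K_j$ legs across the green node using the copy rules \eqref{rule:K0}, \eqref{rule:K1}, and \eqref{rule:K2}, which are exactly the axioms describing how these special phases pass through a Z spider; each such move re-indexes the green spider's phase vector according to the $k_j(\overrightarrow{a})$ substitution recorded in the ruleset. Once every leg has been pushed through, I would collapse the resulting configuration into a single spider by repeated application of spider fusion \eqref{rule:S1}, switching colour with \eqref{rule:S2} wherever a Hadamard mediates the connection. The final step is scalar bookkeeping: because I work throughout with the normalised pink spiders, each rewrite introduces a factor of $d^{\frac{m+n-2}{2}}$ coming from \eqref{rule:HZ}, and these must be gathered and cancelled using \cref{scalargeneralmultlm,zeroemptyditlm,scalarinverseditlm} so that the overall normalisation coincides with that on the right.

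I expect the main obstacle to be the phase-vector arithmetic rather than the topology of the rewrite. In the qubit case the ``$\pi$'' phase is a single scalar and copying is essentially free, but here each $K_j$ is a $(d-1)$-component vector and the $k_j(\overrightarrow{a})$ rule both shifts and rescales the green parameters in a way that depends on $j$. Checking that these accumulated shifts compose correctly modulo $d$ across all the connections, and that the net effect matches the phase asserted on the right-hand side, is the delicate part of the argument and the place where the qudit generalisation genuinely departs from the qubit intuition.
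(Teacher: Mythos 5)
There is a genuine gap here: what you have written is not a proof of \cref{grconnectkslm} but a plan for proving a statement you have only guessed at. You openly infer the content of the lemma from its file label rather than from the equality itself, so every subsequent step is conditional on that guess being right. Even granting the guess, the plan is never executed: no concrete sequence of rewrites is given, and the step you yourself single out as the crux --- checking that the $k_j(\overrightarrow{a})$ re-indexings and the accumulated $K_j$ shifts compose correctly modulo $d$ and reproduce the phase on the right-hand side --- is explicitly deferred as ``the delicate part of the argument.'' A proof must carry exactly that computation through; an outline that postpones it establishes nothing. By contrast, the paper's own proof is a single short chain of diagrammatic rewrites, i.e.\@ precisely the concrete derivation your proposal stops short of producing.

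There are also concrete misattributions of rules that would derail the plan even if the statement were fixed correctly. Colour change in this calculus is not performed by \eqref{rule:S2}, which is the plain-spider/identity axiom; converting between green and pink spiders goes through \eqref{rule:H1}, \eqref{rule:HZ}, and \eqref{rule:HX}, and the scalar $u_{m,n}$ (i.e.\@ the green box representing $d^{\frac{m+n-2}{2}}$) is part of the \emph{definition} \eqref{rule:HZ} of the pink spider, not a factor ``introduced by each rewrite'' that must then be cancelled via \cref{scalargeneralmultlm,zeroemptyditlm,scalarinverseditlm}. Likewise, the copy rules \eqref{rule:K0}, \eqref{rule:K1}, \eqref{rule:K2} apply to specific configurations of $K_j$ states and spiders, and whether they are even applicable here depends on the actual shape of the left-hand side, which your proposal never pins down. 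Before this strategy could be assessed, you would need to state the target equality precisely and then exhibit the rewrite chain step by step, with the phase and scalar arithmetic done rather than described.
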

\begin{proof}
  \[
    \tikzfig{lemmas/grconnectksprf}
  \]
\end{proof}

\begin{lemma}
  \label{controlnotslideditlm}\cite{wangQufiniteZXcalculusUnified2022}
  \begin{align*}
    &\tikzfig{lemmas/controlnotslidedit-1} \\
    &\tikzfig{lemmas/controlnotslidedit-2}
  \end{align*}
\end{lemma}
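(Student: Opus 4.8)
The plan is to unfold the controlled-NOT into its underlying ZXW generators and then commute the relevant phase or basis state across it using the copy and bialgebra rules, before refolding the result into a controlled-NOT on the opposite side. A qudit controlled-NOT is a green control node joined to a red target node through a multiplier wire, so I would first put both sides of each equality into this explicit green--red form, exposing the connecting multiplier and any $K_j$ phases. \cref{grconnectkslm} already describes how a green and a red spider interact across such a connection in the presence of the $K$-phases, and I would use it to fix the canonical shape to rewrite towards.

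From this explicit form the heart of the argument is a single push-through computation. I would feed the phase (or basis state) into the control node and apply the copy rules \eqref{rule:K1} and \eqref{rule:K2} together with the bialgebra rule \eqref{rule:B2}, so that it is duplicated along the multiplier onto the target wire; spider fusion \eqref{rule:S1} then reabsorbs the duplicated fragments into single spiders, producing a controlled-NOT with the phase relocated to the other leg. The first equality should follow directly from this rewriting, and the second by the colour- and orientation-symmetric version of the same steps, so I would carry out one case in full and obtain the other one ``similarly'', matching the style of the neighbouring slide lemmas such as \cref{hadslidegnlm,dboxslidegrnlm}.

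The delicate part, with no qubit analogue, is the qudit-specific bookkeeping. Because the green and red caps and cups no longer coincide and the Hadamard box is not self-inverse, transposing a controlled-NOT introduces a dualiser and sends a phase index $j$ to $-j \bmod d$; I would have to track these corrections and discharge them using \cref{dboxslidegrnlm} and \cref{dboxsquarelm} (the dualiser being an involution), together with \cref{swaprgcapcupditlm} to realign the red and green compact structures. Ensuring that the multiplier labels compose correctly modulo $d$ across the bialgebra step, and that the dualisers introduced there cancel, is where the real effort goes; once that is settled, the fusion and copy steps are routine.
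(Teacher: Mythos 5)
Your proposal centres on the wrong mechanism. \Cref{controlnotslideditlm} belongs to the same family as \cref{hadslidegnlm,dboxslidegrnlm,slidecupditlm}: it is a \emph{slide} (transposition) statement, carrying the whole green--pink connected pair around a cup or cap, and it is needed precisely when outputs of a normal form get bent in the partial-trace and tensor-product arguments. Nothing in the statement is being commuted \emph{through} the controlled-NOT; the controlled-NOT itself is moved to the other side of the compact structure. Consequently the step you describe as the heart of your argument --- feeding a phase or basis state into the control node, duplicating it along the multiplier with \eqref{rule:K1}, \eqref{rule:K2} and \eqref{rule:B2}, and refusing with \eqref{rule:S1} --- is aimed at a different statement (commutation of a phase through a CNOT) and cannot produce the required equality: the copy rules transport states through spiders, they do not transport a spider to the other leg of a cup.

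What you defer to the end as ``delicate bookkeeping'' is in fact the entire proof. Since the cups and caps of the calculus are green \eqref{rule:S3}, the green node of the pair slides by plain spider fusion \eqref{rule:S1}, while transposing the pink node through a green cup is exactly where the green/red cup discrepancy, the dualiser, and the $j \mapsto -j$ phase flip appear; this is resolved by a direct diagrammatic computation using the Hadamard description of the pink spider \eqref{rule:HZ}/\eqref{rule:HX} together with \cref{hadslidegnlm,dboxslidegrnlm,swaprgcapcupditlm,dboxsquarelm}. That computation is what the paper's proof consists of --- carried out explicitly for one equality, with the remaining ones stated to follow analogously --- and it is what your outline would need to contain as its main body, rather than as a correction pass appended to a phase-pushing step that has nothing to push.
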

\begin{proof}
  We only prove the following equality, the others can be proved similarly.
  \[
    \tikzfig{lemmas/controlnotslideditprf}
  \]
\end{proof}

\begin{lemma}
  \label{xdualiserlm}
  \begin{align*}
    &\tikzfig{lemmas/xdualiser-1}\\
    &\tikzfig{lemmas/xdualiser-2}
  \end{align*}
\end{lemma}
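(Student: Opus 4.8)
The plan is to prove \cref{xdualiserlm} by reducing the interaction of the dualiser with the pink spider to its already-established interaction with the green spider, using the pink-spider definition \eqref{rule:HZ}. At the level of the standard interpretation the content is transparent: the dualiser \eqref{rule:Du} is the index-negation map $\ket{k}\mapsto\ket{-k}$ (it is self-adjoint and an involution by \cref{dboxsquarelm}), while a pink spider of phase $K_j$ imposes the constraint $i_1+\cdots+i_m+j\equiv j_1+\cdots+j_n\Mod{d}$ on its legs. Negating the basis index on every leg replaces this by $-i_1-\cdots-i_m+j\equiv -j_1-\cdots-j_n$, equivalently $i_1+\cdots+i_m-j\equiv j_1+\cdots+j_n\Mod{d}$, which is exactly the constraint carried by a pink spider of phase $K_{-j}=K_{d-j}$. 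Hence a dualiser moved across a pink spider flips its phase $K_j\mapsto K_{-j}$, and the diagrammatic proof simply has to realise this reindexing.

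First I would unfold the pink spiders occurring in the statement into a green spider of phase $K_j$ dressed with Hadamard boxes on every leg, together with the accompanying scalar $u_{m,n}$, using \eqref{rule:HZ}. This turns the claim into an equality between green-spider diagrams with Hadamard decorations and dualisers on the legs. Then I would push each dualiser along its leg towards the central green spider: the key local move is that a dualiser converts a Hadamard box into its adjoint, since $D$ sends the Fourier phase $\omega^{jk}$ to $\omega^{-jk}$. I would carry this out diagrammatically by combining \cref{hadslidegnlm} and \cref{dboxslidegrnlm}, iterating the elementary dualiser--node slide one connection at a time.

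Once all dualisers sit directly on the legs of the green spider, \cref{dboxgdotlm} lets each dualiser act on the green spider by negating the basis index, which is precisely what turns the phase vector $K_j$ into $K_{-j}$; residual dualiser pairs are removed using the involution \cref{dboxsquarelm}. Finally I would refold the resulting Hadamard-decorated green-$K_{d-j}$ spider back into a standard pink spider via \eqref{rule:HZ} (or equivalently \cref{redspiderforgrlm}), reconciling the $u_{m,n}$ scalar prefactors on the two sides with \cref{hilm} and the scalar lemmas so they cancel. The second line of the statement is then obtained either by applying the same chain of rewrites to the transposed, colour-swapped configuration, or by composing the first equality with appropriate caps and cups.

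The main obstacle I anticipate is the bookkeeping of the $H$ versus $H^\dagger$ distinction. Unlike the qubit case the Hadamard box is not self-adjoint, so pushing a dualiser through it genuinely interchanges $H$ and $H^\dagger$; one must check that these interchanges recombine into exactly the $H^\dagger$-decorations demanded by the \emph{standard} pink spider of \eqref{rule:HZ}, rather than accidentally producing the flexsymmetric variant, before the final refolding step is legitimate. Keeping the arity-dependent scalars aligned throughout this $H/H^\dagger$ shuffling is where the bulk of the routine work will lie.
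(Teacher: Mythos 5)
Your proposal is correct and follows essentially the same route as the paper: the paper's proof likewise unfolds the pink spider into its Hadamard-decorated green-spider form via \eqref{rule:HZ}/\eqref{rule:HX}, moves the dualiser through the Hadamard boxes and the green spider using the previously established lemmas (\cref{dboxsquarelm,dboxgdotlm,hadslidegnlm,dboxslidegrnlm}) so that the phase flips $K_j \mapsto K_{-j}$, and then refolds, with the remaining equalities ``proved similarly'' just as you propose for the second line.
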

\begin{proof}
  \begin{align*}
    &\tikzfig{lemmas/xdualiserpf-1}\\
    &\tikzfig{lemmas/xdualiserpf-2}
  \end{align*}
  The other equalities can be proved similarly.
\end{proof}

\begin{lemma}
  \label{slidecupditlm}\cite{wangQufiniteZXcalculusUnified2022}
  \begin{align*}
    &\tikzfig{lemmas/slidecupdit-1}\\
    &\tikzfig{lemmas/slidecupdit-2}
  \end{align*}
  where $\overleftarrow{a}=(a_{d-1}, \cdots, a_1)$.
\end{lemma}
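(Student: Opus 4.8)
The plan is to derive the slide by unfolding the cup in terms of its defining Bell state and tracking how the phase vector of the Z box transforms under the index-negation that a dualiser induces. Recall from the interpretation that a Z box carries the diagonal data $\sum_j a_j \ket{j}^{\otimes m}\bra{j}^{\otimes n}$ with $a_0 = 1$, so bending a leg through a plain green cup \eqref{rule:S3} (which identifies matching indices) would leave the phases untouched; the reversal encoded in $\overleftarrow{a}=(a_{d-1},\dots,a_1)$ must therefore originate from a dualiser, whose interpretation is $\ket{i}\mapsto\ket{-i}$. This observation tells me exactly where the index arithmetic has to come from, and it also gives a cheap soundness check: since $-0\equiv 0 \Mod{d}$ the fixed component $a_0=1$ is preserved, while $a_j\mapsto a_{-j}=a_{d-j}$ permutes the rest into precisely $\overleftarrow{a}$.

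First I would rewrite the cup appearing on the left-hand side using the compact-structure definition \eqref{rule:S3} together with the dualiser \eqref{rule:Du}, so that the bend is exhibited as the composition of the Z box with the map $\ket{i}\mapsto\ket{-i}$. Second, I would push this dualiser through the Z box by means of \cref{dboxslidegrnlm} (together with \cref{dboxgdotlm} to carry the phase label across), which reindexes the phases by $j\mapsto -j\equiv d-j \Mod{d}$ and hence turns the label $\overrightarrow{a}$ into $\overleftarrow{a}$ as computed above.

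Third, I would clean up the residue: the dualiser left on the far leg is removed by pairing it with a second dualiser and invoking $D^2=\mathrm{id}$ from \cref{dboxsquarelm}, while \cref{dboxbell} and \cref{swaprgcapcupditlm} restore the cap or cup in the colour demanded by the statement. The second equation of the lemma follows from the first by the same mechanism, now using the noncoincidence of green and red caps and cups and applying \cref{swaprgcapcupditlm} once more to convert between them.

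I expect the main obstacle to be the bookkeeping of the dualiser legs: ensuring that after the push no stray dualiser or spurious scalar survives, and that the negation-reversal lands exactly on $\overleftarrow{a}$ rather than on some shifted permutation of the phases. The colour-swapped second identity is where this is most delicate, since it additionally relies on the green–red cap/cup distinction and therefore requires threading the dualiser through a colour change rather than through a single spider of fixed colour.
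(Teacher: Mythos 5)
Your proposal is correct and takes essentially the same route as the paper: the lemma is indeed proved by exhibiting the pink cap/cup as a green one with a dualiser on a leg, sliding the dualiser through the Z box (which is exactly where the reversal $\overrightarrow{a}\mapsto\overleftarrow{a}$, i.e.\@ $a_j\mapsto a_{d-j}$, arises), and cleaning up via the dualiser and cap/cup colour lemmas, with the remaining equalities handled analogously. The ingredients you cite (\cref{dboxslidegrnlm,dboxgdotlm,dboxsquarelm,dboxbell,swaprgcapcupditlm}) are precisely the toolkit established immediately before this lemma in the appendix, so your argument introduces no circularity and no gap.
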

\begin{proof}
  \begin{align*}
    &\tikzfig{lemmas/slidecupditprf-1}\\
    &\tikzfig{lemmas/slidecupditprf-2}
  \end{align*}
  The other equalities can be proved similarly.
\end{proof}

\begin{lemma}
  \label{copyvarsditlm}\cite{wangQufiniteZXcalculusUnified2022}
  \[
    \tikzfig{lemmas/copyvarsdit}
  \]
\end{lemma}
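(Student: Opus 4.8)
The plan is to prove this copy identity by expanding the generators on one side into their elementary pieces and then pushing the variable-carrying Z box through them leg by leg. First I would rewrite the multi-legged W node using its inductive definition \eqref{rule:WN}, reducing the general statement, by induction on the number of legs, to the single elementary case in which one Z box meets a one-to-two W node. This mirrors how the preceding manipulation lemmas (\cref{slidecupditlm} and its relatives) were reduced to their two-legged cores before being reassembled.

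For the elementary case the crux is to determine how the variable vector $\overrightarrow{a}$ distributes across the two output legs of the W node. I would split the analysis according to the value that the W node routes to $\ket{0}$ versus the nonzero excitations $\ket{i}$, which are governed respectively by \eqref{rule:Bs0} and \eqref{rule:Bsj}, and combine this with the phase-copy rule \eqref{rule:Pcy}. Together these show that each weight $a_j$ is routed onto exactly one leg, reproducing the right-hand side. The inductive step then reattaches one leg at a time, absorbing intermediate Z boxes with spider fusion \eqref{rule:S1} and rebracketing the W nodes with associativity \eqref{rule:Aso}, so that the diagram matches the target shape at each stage. Any stray scalars produced by these cancellations can be merged or eliminated using \cref{scalargeneralmultlm,scalarinverseditlm,zeroemptyditlm}.

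The main obstacle I anticipate is carrying out the index bookkeeping and the scalar bookkeeping simultaneously. Because the entries of $\overrightarrow{a}$ are indexed modulo $d$ and the W node places only a single nonzero excitation on one of its legs, I must verify termwise that every $a_j$ lands on the correct leg, rather than appealing to one clean global rewrite. Confirming this matching — and checking that no spurious power of $d$ survives from the cancellations — is where the explicit interpretations of \eqref{rule:WN} and the Z box become indispensable, and it is the step most likely to require care.
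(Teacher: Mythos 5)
Your reading of the statement and your overall skeleton — peel the multi-legged W node apart with \eqref{rule:WN}, settle a single elementary copy step, then reassemble with \eqref{rule:S1} and \eqref{rule:Aso} — are reasonable, and the paper itself disposes of the lemma with a brief direct chain of rewrites plus the remark that the remaining orientations follow similarly. The genuine problem is \emph{how} you propose to settle the elementary case. You say you would \enquote{split the analysis according to the value that the W node routes}, verify \enquote{termwise that every $a_j$ lands on the correct leg}, and that \enquote{the explicit interpretations of \eqref{rule:WN} and the Z box become indispensable}. That is semantic reasoning: checking the two sides entrywise against the standard interpretation $\interp{\cdot}$ establishes only that the equation is \emph{sound}, not that it is \emph{derivable} from the rule set, and derivability is the entire content of these appendix lemmas. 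The completeness theorem rests on them, so appealing (even implicitly) to \enquote{equal interpretations, hence rewritable} at this stage would be circular.

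Relatedly, \eqref{rule:Bs0} and \eqref{rule:Bsj} cannot do the work you assign to them: they are rewrite rules whose left-hand sides contain an explicit basis state plugged into the triangle/W structure, whereas the diagram in this lemma contains an arbitrary Z box $\overrightarrow{a}$ and no basis state at all; nothing in the calculus licenses a case split over the \enquote{value flowing along} an open wire. A legitimate diagrammatic substitute for your case analysis would be either (i) invoking the phase-copy rule \eqref{rule:Pcy} directly, which is precisely the ZW-style statement that the W node copies diagonal Z data and makes your base case essentially an instance of an axiom, after which your induction with \eqref{rule:WN}, \eqref{rule:S1} and \eqref{rule:Aso} goes through, or (ii) decomposing the Z box into single-entry boxes via \eqref{rule:VA} or \cref{zbox-single-decompositionlm} and pushing those through one at a time. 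As written, however, the step on which everything hinges is not a derivation inside the calculus, so the proposal has a genuine gap.
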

\begin{proof}
  \begin{align*}
    &\tikzfig{lemmas/copyvarsditprf-1}\\
    &\tikzfig{lemmas/copyvarsditprf-2}
  \end{align*}
  The other equalities can be proved similarly.
\end{proof}

\begin{lemma}
  \label{spider0tordotslm}\cite{wangQufiniteZXcalculusUnified2022}
  \[
    \tikzfig{lemmas/spider0tordots}
  \]
\end{lemma}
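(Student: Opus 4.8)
The plan is to pin down the exact shape of the claimed identity by first running it through the interpretation functor \interp{\cdot}. Since the lemma relates a green (Z) spider carrying the trivial phase vector to a configuration of red (pink) basis-state dots, I would evaluate the left-hand side as \sum_{j} \ket{j}^{\otimes m}\bra{j}^{\otimes n} and read off, entry by entry, exactly which tensor of single-legged pink spiders (red dots) the right-hand side must be, together with any scalar prefactor. This soundness computation serves two purposes: it confirms the statement and, crucially, it tells me which normalising factor of the form d^{(m+n-2)/2} — recorded diagrammatically by a green scalar box \tikzfig{definitions/umngbox} from the definition \eqref{rule:HZ} — has to appear on the red side so that the two interpretations agree.

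For the diagrammatic derivation itself, I would work from the copy-type rules already available. The core move is the qudit basis-state copy: a red dot attached to a green spider duplicates onto every leg, which I would realise using \eqref{rule:K0} together with the phase-copy rule \eqref{rule:Pcy}. When several legs are present I would set up a short induction on the number of legs, peeling off one leg at a time with spider fusion \eqref{rule:S1} and copying the dot onto it, thereby reducing the $n$-leg case to the $(n-1)$-leg case. Any colour change that arises along the way I would discharge with the definition \eqref{rule:HZ} and the colour-change lemma \eqref{rule:HX} (\cref{redspiderforgrlm}), clearing the residual Hadamard and Hadamard-dagger pairs by \eqref{rule:H1} and \cref{hheqhdhd}. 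Throughout I would keep the leg order fixed, since the pink spiders are not flexsymmetric and the bookkeeping of which dot sits on which wire is not automatic.

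The main obstacle I expect is not the combinatorial copying but the scalar management. Because the pink spiders are only the \emph{normalised} X-spiders, each invocation of \eqref{rule:HZ} or each fusion and copy step changes the number of legs and hence perturbs the $d^{(m+n-2)/2}$ prefactors. I would track these factors explicitly, multiplying and cancelling green scalar boxes via \cref{scalargeneralmultlm,scalarinverseditlm} and absorbing trivial ones with \cref{zeroemptyditlm}, so that the accumulated scalar on the red side matches precisely the value predicted by the initial soundness calculation. Once the scalars are reconciled the diagrammatic equality is complete, and since every rule used is an equality the derivation is automatically reversible, which is all that is needed for the lemma.
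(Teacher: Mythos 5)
There is a genuine gap here, and it begins with the reading of the statement. The left-hand side of this lemma is not the phase-free green spider: it is the Z box labelled $0$, i.e.\ the box whose parameter vector is $\overrightarrow{0}=(0,\dotsc,0)$, and its interpretation is $\ket{0}^{\otimes m}\bra{0}^{\otimes n}$. You evaluate the left-hand side as $\sum_{j}\ket{j}^{\otimes m}\bra{j}^{\otimes n}$, which is the interpretation of the spider with phase vector $\overrightarrow{1}$; under that reading the claimed identity is simply false, since for $m+n\geq 2$ that GHZ-type map is not a tensor product of single-legged pink dots for any choice of scalar, so your soundness computation would refute rather than confirm the statement you set out to prove. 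Run on the correct left-hand side, the computation gives $\ket{0}^{\otimes m}\bra{0}^{\otimes n}$, and because the single-legged pink $K_0$ dots are already the \emph{normalised} spiders (with interpretations exactly $\ket{0}$ and $\bra{0}$), the equality is scalar-exact: no factor $d^{\frac{m+n-2}{2}}$ appears anywhere. The elaborate scalar management you anticipate, and the colour-change detour via \eqref{rule:HZ}, \eqref{rule:HX}, and \eqref{rule:H1} that would generate such factors, are symptoms of the misreading rather than features of the proof.

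The second, related gap is that your derivation never invokes the axiom \eqref{rule:Zer}, which is precisely the single-leg case of this lemma (a zero-labelled Z box state equals a pink dot) and is the only rule that lets a pink dot emerge from a $0$ label at all. The copy rules you rely on act on diagrams that already contain a pink dot attached to a green spider; the left-hand side here contains no pink dot, and neither \eqref{rule:K0} nor \eqref{rule:Pcy} (which is a ZW rule for copying Z-box phases through a W node, not a basis-state copy) can manufacture one. The argument the paper's rewrite chain embodies is: unfuse with \eqref{rule:S1} so that the zero box becomes a phase-free spider with a single-legged zero box plugged into an extra leg, convert that state into a pink dot with \eqref{rule:Zer}, and then copy the dot onto all remaining legs with the copy rule — your leg-by-leg induction is fine for this last stage, with transposed legs handled by \cref{swaprgcapcupditlm} or \cref{xtransposelm}. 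Without \eqref{rule:Zer} your derivation cannot begin.
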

\begin{proof}
  \[
    \tikzfig{lemmas/spider0tordotsprf}
  \]
\end{proof}

\begin{lemma}
  \label{hopfditlm}\cite{wangQufiniteZXcalculusUnified2022}
  \begin{gather}
    \tikzfig{lemmas/hopfdit}
    \tag{Hopf}\label{rule:Hopf}
  \end{gather}
\end{lemma}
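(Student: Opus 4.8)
The plan is to reduce the $d$-wire green--red connection to a fully disconnected diagram, mirroring the classic qubit Hopf argument while accounting for the qudit-specific collapse; the general statement (spiders of arbitrary arity joined by $d$ wires) then follows from the core case by first fusing away the extra legs with \eqref{rule:S1} and \eqref{rule:S2}. First I would fix the semantic target: composing the green spider $\ket{k}\mapsto\ket{k}^{\otimes d}$ with the phase-$K_0$ red spider $\ket{j_1\cdots j_d}\mapsto\ket{j_1+\cdots+j_d \bmod d}$ sends every $\ket{k}$ to $\ket{dk \bmod d}=\ket{0}$, so the composite is $\ket{0}\bigl(\sum_k\bra{k}\bigr)$ --- a product of a state and an effect, i.e.\@ the two spiders disconnect. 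This confirms that the right-hand side of \eqref{rule:Hopf} is the expected unit-after-counit diagram and tells me exactly which scalars must survive.

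For the diagrammatic derivation I would work from the bialgebra rule \eqref{rule:B2} together with the copy rules \eqref{rule:K0} and \eqref{rule:K1}. The idea is to use a copy rule to push one spider through the other, so that the single $d$-fold bundle is replaced by $d$ genuinely parallel wires, each carrying a copied node; at this point \cref{grconnectkslm} and \cref{copyvarsditlm} control how the $K_j$ phases and the dualiser thread across each wire. I would then fuse the resulting nodes with \eqref{rule:S1} and \eqref{rule:S2}, at which stage the $d$ copies combine into a single multiplier of weight $d$. Invoking the modulo-$d$ clause of the multiplier rule \eqref{rule:Mu} --- equivalently, that the accumulated index $dk \equiv 0 \pmod d$ --- collapses this multiplier to the trivial one, and \cref{spider0tordotslm} turns the leftover phase-$0$ spiders into the bare state and effect that constitute the right-hand side.

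The main obstacle I anticipate is the qudit antipode. In the qubit case the antipode of the strongly complementary green/red pair is the identity, so Hopf is essentially \enquote{two wires vanish}; here the antipode is negation, realised by the dualiser \eqref{rule:Du}, and it appears whenever a node is copied through the opposite colour. Keeping track of these dualisers through the bialgebra and copy steps --- and verifying via \cref{dualiserslm} that they cancel in pairs rather than leaving a residual $\ket{-i}$ relabelling --- is the delicate bookkeeping. The second subtle point is ensuring the $d$-fold summation lands \emph{exactly} on $\ket{0}$ with the correct all-ones effect and no stray normalising scalar, since an off-by-one in the phase would leave the spiders joined by a single residual wire rather than fully disconnected; this is precisely where the modulo-$d$ collapse of \eqref{rule:Mu} must be applied with care.
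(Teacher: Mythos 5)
Your semantic analysis of the statement is correct, and your general toolkit (the bialgebra rule \eqref{rule:B2}, the copy rules \eqref{rule:K0} and \eqref{rule:K1}, spider fusion) is the right family of moves. However, the pivotal step of your derivation is circular. You propose to gather the $d$ parallel wires into a single multiplier of weight $d$ and then collapse it to the trivial, disconnected multiplier by \enquote{invoking the modulo-$d$ clause of \eqref{rule:Mu}}. But in this paper a multiplier of weight $m$ is \emph{defined} as $m$ wires between a green and a pink node, and the modulo-$d$ clause of \eqref{rule:Mu} is not an axiom: it appears among the \enquote{additional notations or lemmas}, and the text introducing it justifies it by saying that \enquote{the Hopf law generalises to $d$ connections (see \cref{hopfditlm}), so $m$ can be labeled modulo $d$}. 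In other words, \enquote{weight-$d$ multiplier equals weight-$0$ multiplier} is a restatement of the very equation \eqref{rule:Hopf} you are asked to prove, not a tool you may use to prove it. Once that step is struck out, your outline never actually disconnects the two spiders: the moves before it only reorganise the $d$-fold connection, and the cleanup after it (via \cref{spider0tordotslm}) presupposes that the disconnection has already happened. The paper's own proof performs the disconnection directly from the axioms (bialgebra, copy, fusion, \eqref{rule:P1}, and the $K_j$-phase lemmas), which is exactly the content your proposal delegates to \eqref{rule:Mu}.

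Two smaller points. First, your concern about antipode/dualiser bookkeeping is misplaced for this version of the Hopf law: the statement proven here is the $d$-wire disconnection, where the relevant sum is $dk \equiv 0 \pmod{d}$ and vanishes on its own, so no dualiser appears on either side of \eqref{rule:Hopf} and no cancellation via \cref{dualiserslm} is required (the dualiser-as-antipode enters only in the one-wire convolution identity, which is a different statement). Second, to reduce spiders of arbitrary arity to the core one-input/one-output case you need red-spider fusion, which is \eqref{rule:S4}, not \eqref{rule:S2}.
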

\begin{proof}
  \begin{align*}
    &\tikzfig{lemmas/hopfditprf-1}\\
    &\tikzfig{lemmas/hopfditprf-2}
  \end{align*}
\end{proof}

\begin{lemma}
  \label{redgreenmchangelm}\cite{wangQufiniteZXcalculusUnified2022}
  \[
    \tikzfig{lemmas/redgreenmchange}
  \]
  where $1 \leq k \leq d-1$.
\end{lemma}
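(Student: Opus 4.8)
The plan is to establish the equality by the standard strategy for multiplier lemmas in this calculus: expand the multiplier into its defining generators, transport it across the red--green bond using the interaction axioms, and reassemble. Since both sides denote linear maps, I would first compute the two interpretations to pin down the target relabelling. The multiplier labelled $k$ implements $\ket{j}\mapsto\ket{kj \bmod d}$ on its wire, so on the green (Z-type) spider it permutes or relabels the phase components, while on the pink (X-type) spider it reweights the summation constraint $i_1+\cdots+i_m+j \equiv j_1+\cdots+j_n \pmod d$. Matching these two effects tells me exactly what the right-hand side must be and guides each rewrite.

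Diagrammatically, first I would unfold the multiplier using its definition in terms of the $V$ and $M$ boxes (\eqref{rule:Mu}, \eqref{rule:VB}, \eqref{rule:MB}), which express it through $W$ nodes together with green and pink spiders. Next I would apply the $V$-box interaction rules --- \eqref{rule:ZV} to commute the $V$ box past green box states, \eqref{rule:VW} to pass it through the $W$ node, and \eqref{rule:VA} to recombine single-phase green boxes --- in order to move the multiplier from the pink side to the green side of the bond. Along the way the colour-change rules \eqref{rule:HZ} and \eqref{rule:HX} convert the action between the two spider types, and the $K$-rules \eqref{rule:K0}, \eqref{rule:K1}, \eqref{rule:K2} absorb the resulting $K_j$-phases into a phase relabelling of the green spider via the $k_j(\overrightarrow{a})$ transformation. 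Finally I would invoke the Hopf law \eqref{rule:Hopf} together with the bialgebra rules \eqref{rule:B2}, \eqref{rule:BZW} to cancel superfluous connections, reducing the multiplier label modulo $d$ where needed (\eqref{rule:Mu}).

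The hard part will be the bookkeeping of how the relabelling $\ket{j}\mapsto\ket{kj}$ threads through the colour change: the green spider's phase vector gets permuted while the pink spider's linear constraint gets rescaled, and these two effects must be reconciled into a single clean statement valid for every $k\in\{1,\dots,d-1\}$. The genuinely qudit-specific subtlety is that $k$ need not be invertible modulo $d$ when $\gcd(k,d)>1$, so the multiplier is not a permutation of the computational basis; any argument that implicitly uses $k^{-1}$ would break, and I expect the non-invertible regime to be handled either by the collapse that the pink spider's constraint forces on the offending terms or by reasoning directly with the $\Z_d$-module structure rather than inverting $k$. Checking that the chosen sequence of rewrites behaves uniformly across both the invertible and non-invertible cases is where I would concentrate the effort, cross-validating each step against the interpretation computed at the outset.
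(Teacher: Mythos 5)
There is a genuine gap here, on two levels. First, your text is a plan rather than a proof: it enumerates essentially every rule in the calculus that could conceivably be relevant --- \eqref{rule:Mu}, \eqref{rule:VB}, \eqref{rule:MB}, \eqref{rule:ZV}, \eqref{rule:VW}, \eqref{rule:VA}, \eqref{rule:HZ}, \eqref{rule:HX}, \eqref{rule:K0}, \eqref{rule:K1}, \eqref{rule:K2}, \eqref{rule:Hopf}, \eqref{rule:B2}, \eqref{rule:BZW} --- but never exhibits a single concrete rewrite step, so there is nothing one could check or compile into a derivation. Second, your one concrete structural claim is backwards: you propose to \enquote{unfold the multiplier using its definition in terms of the $V$ and $M$ boxes}, but in this paper the dependency runs the other way. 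The multiplier is the primitive notation, defined by \eqref{rule:Mu} as a green node joined to a pink node by $k$ parallel wires; the $V$ and $M$ boxes are mere abbreviations built \emph{from} multipliers (and $W$ nodes) via \eqref{rule:VB} and \eqref{rule:MB}. A proof that starts by expanding the multiplier into $V$ and $M$ boxes therefore has no rule to invoke.

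Once the definition is read in the correct direction, the lemma is nearly immediate, and this is what the paper does: unfold the multiplier by \eqref{rule:Mu} into its green and pink constituents, fuse the green part into the adjacent green spider by \eqref{rule:S1}, and fuse the pink part into the adjacent pink spider by \eqref{rule:S4} (with the attendant scalar bookkeeping), leaving $k$ parallel wires between the two spiders; the second equality, for the transposed orientation, is symmetric. None of the heavy machinery you invoke (colour change, $K$-phase absorption, Hopf, bialgebra) is needed. Your worry about $k$ not being invertible modulo $d$ when $\gcd(k,d)>1$ is a sensible semantic observation, but it is moot for the actual derivation: no step inverts $k$; the hypothesis $1 \leq k \leq d-1$ is only there so that $k$ counts a genuine number of wires, with larger weights handled by the modulo-$d$ reduction already packaged in \eqref{rule:Mu}.
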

\begin{proof}
  \[
    \tikzfig{lemmas/redgreenmchangeprf}
  \]
  The second equality can be proved similarly.
\end{proof}

\begin{lemma}
  \label{multiplierpushlm}
  Suppose $x \in \{0, \dotsc, d - 1 \}$.
  Then
  \[
    \tikzfig{lemmas/multipliers/multiplier-push} \qquad
    \qquad
    \tikzfig{lemmas/multipliers/multiplier-push-dual}
  \]
\end{lemma}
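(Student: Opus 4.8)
The plan is to prove the identity by reducing both multipliers to their primitive green-pink form and then relocating the pink part across the node, using only spider fusion and the bialgebra/copy rules. Concretely, I would first rewrite the multiplier labelled by $x$ on the left-hand side into $x$ parallel wires joining a green and a pink spider, as dictated by its defining notation \eqref{rule:Mu}. This exposes an ordinary green spider sitting adjacent to the node through which the multiplier is being pushed, so that green spider can be absorbed into (or split off from) that node by green spider fusion \eqref{rule:S1}. The effect is to reduce the claim to a purely local statement about how a single pink spider, attached by $x$ connections, commutes past the surrounding Z/W structure.

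Next I would carry the pink spider across the node. The tools I expect to do the heavy lifting are the qudit bialgebra rule \eqref{rule:B2} together with the copy rules \eqref{rule:K0}, \eqref{rule:K1} and \eqref{rule:K2}, which govern precisely how a pink spider commutes through a green spider and how the basis data is duplicated, and the immediately preceding \cref{redgreenmchangelm}, which already records how the number $x$ of green-pink connections transforms under interchange of the two colours. Applying these rewrites (and, where copying of the parameter wires is needed, \cref{copyvarsditlm}) moves the pink spider onto the target wire(s); re-folding the resulting green-pink pattern back into multiplier notation via \eqref{rule:Mu} then delivers the right-hand side. Throughout, the multiplier labels are read modulo $d$, which is justified by the modular clause of \eqref{rule:Mu} and by the generalised Hopf law \eqref{rule:Hopf}.

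For the dual identity I would avoid repeating the argument from scratch: it is the transpose of the first, so I would obtain it either by bending the first equation into its mirror image using the $W$-symmetry rule \eqref{rule:Sym} together with the compact structure \eqref{rule:S3}, or by re-running the same unfold--relocate--refold scheme with the transposed multiplier notation from \eqref{rule:Mu}. The step I expect to be the main obstacle is verifying that the relocation is valid uniformly for every $x \in \{0, \dotsc, d-1\}$, including the non-invertible residues where $\gcd(x,d) > 1$; here one must check that the copy/bialgebra bookkeeping introduces no spurious multiplicities and that any scalar produced is trivial. I would isolate the degenerate case $x = 0$ and dispatch it separately with the zero and empty rules \eqref{rule:Zer} and \eqref{rule:Ept} (equivalently \cref{zeroemptyditlm}), confirming that the remaining scalar factors collapse by \cref{scalargeneralmultlm} and \cref{scalarinverseditlm}.
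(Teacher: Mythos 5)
Your plan is essentially the paper's own proof: the paper's single diagrammatic derivation unfolds the multiplier via \eqref{rule:Mu}, commutes the pink spider across the green node using the bialgebra \eqref{rule:B2} together with green-spider fusion \eqref{rule:S1}, and refolds via \eqref{rule:Mu}, with the dual equality handled exactly as in your fallback option (``proved similarly'' by re-running the scheme on the transposed notation). The only caveats are inessential: \eqref{rule:Sym} concerns the W node and is not the right tool for the transposed statement, and no special treatment of non-invertible residues $\gcd(x,d)>1$ is needed, since the bialgebra argument is uniform in $x$.
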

\begin{proof}
  \[
    \tikzfig{lemmas/multipliers/multiplier-push-proof}
  \]
  The second equality can be proved similarly.
\end{proof}

\begin{lemma}
  \label{multipliermultlm}
  Suppose $x, y \in \{0, \dotsc, d - 1 \}$.
  Then
  \[
    \tikzfig{lemmas/multipliers/multiplier-mult}
  \]
\end{lemma}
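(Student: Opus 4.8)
The plan is to show that the sequential composite of a multiplier labelled $x$ with one labelled $y$ equals a multiplier labelled $xy \bmod d$, by unfolding both multipliers into their defining green--pink pairs and then collapsing the diagram with the bialgebra and spider-fusion rules. First I would expand each multiplier using its definition \eqref{rule:Mu}, so that the left-hand side becomes a green node joined by $x$ parallel wires to a pink node, whose single output feeds into a second green node joined by $y$ parallel wires to a second pink node. All of the work happens in the middle of this diagram, where a pink node carrying $x$ legs above meets a green node carrying $y$ legs below along a single connecting wire.

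Next I would apply the qudit bialgebra rule \eqref{rule:B2} to this central pink--green pair. This replaces the bottleneck wire by a complete bipartite graph $K_{x,y}$: namely $x$ fresh green nodes (one per upper leg) and $y$ fresh pink nodes (one per lower leg), each green connected once to each pink, giving $xy$ connecting wires in total. The outer green node then fuses, via \eqref{rule:S1}, with the $x$ fresh green nodes, and the outer pink node fuses, via \eqref{rule:S4}, with the $y$ fresh pink nodes. After both fusions a single green and a single pink node remain, now joined by exactly $xy$ parallel wires; the reorganisation recorded in \cref{multiplierpushlm} is what lets me slide the outer nodes inward so that these fusions apply cleanly.

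Finally I would invoke the modular part of \eqref{rule:Mu} -- which, since $d$ parallel green--pink wires disconnect by the generalised Hopf law \eqref{rule:Hopf} (\cref{hopfditlm}), identifies $m$ connections with $m \bmod d$ connections -- to rewrite the $xy$ wires as $xy \bmod d$ wires. This exhibits the composite as the multiplier labelled $xy \bmod d$, as required. Throughout I would track any normalising scalars produced by the qudit bialgebra and discharge them with \cref{scalargeneralmultlm,zeroemptyditlm}, so that the spurious scalars cancel.

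The step I expect to be the main obstacle is the degenerate behaviour when $x = 0$ or $y = 0$: there the bipartite graph $K_{x,y}$ is empty and the multiplier collapses to a \enquote{discard-then-prepare} shape rather than a genuine green--pink edge, so these cases must be handled separately using \eqref{rule:Zer}, \eqref{rule:Ept}, and \eqref{rule:K0} instead of the edge-counting argument above. The secondary difficulty is purely bookkeeping, namely verifying that the edge multiplicities really total $xy$ before the mod-$d$ reduction and that every scalar introduced by the qudit bialgebra is accounted for.
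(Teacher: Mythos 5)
Your proposal is correct and takes essentially the same approach as the paper's proof: unfold both multipliers via \eqref{rule:Mu}, apply the green--pink bialgebra interaction (\eqref{rule:B2}, in the spirit of \cref{multiplierpushlm}) to the middle pink--green pair, fuse the resulting nodes with \eqref{rule:S1} and \eqref{rule:S4} to obtain $xy$ parallel wires, and reduce modulo $d$ using the modular part of \eqref{rule:Mu}. The scalar bookkeeping and the degenerate cases $x=0$ or $y=0$ that you flag are absorbed by the same rewrites (the empty bipartite graph instance is just the copy behaviour), so they do not require a genuinely separate argument.
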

\begin{proof}
  \[
    \tikzfig{lemmas/multipliers/multipliermultprf}
  \]
\end{proof}

\begin{lemma}
  \label{multiplieraddlm}
  \[
    \tikzfig{lemmas/multipliers/multiplier-add}
  \]
\end{lemma}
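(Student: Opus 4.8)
The statement expresses the \emph{distributivity} of multiplier labels over the additive structure: copying a value with a green node, scaling the two copies by $x$ and $y$ via multipliers, and recombining them with an addition (pink) node yields the single multiplier labelled $x+y$. Semantically this is just $xj + yj \equiv (x+y)j \Mod{d}$, so the task is to realise this identity purely diagrammatically from the rule set, without appealing to the interpretation.

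The plan is to unfold every multiplier using its definition \eqref{rule:Mu}, so that each multiplier-$m$ becomes a green node connected to a pink node by $m$ parallel wires (the green node \emph{copies} the input into $m$ legs, the pink node \emph{adds} them back). After this unfolding the diagram splits cleanly into a \enquote{copy side} and an \enquote{add side}. On the copy side I would apply green spider fusion \eqref{rule:S1} to merge the initial branching green node together with the green nodes of the two multipliers into a single green node with $x+y$ output legs. On the add side I would dually apply pink spider fusion \eqref{rule:S4} to merge the two multipliers' pink nodes with the final recombining pink node into one pink node with $x+y$ input legs. What remains is exactly one green node joined to one pink node by $x+y$ wires, which by \eqref{rule:Mu} is the multiplier labelled $x+y$; the modular bookkeeping for the case $x+y \geq d$ is absorbed by the $\Mod{d}$ clause of \eqref{rule:Mu}, exactly as in the multiplicative companion \cref{multipliermultlm}.

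The main obstacle I anticipate is the colour/basis interplay rather than the fusions themselves. The copying node and the adder must sit in matching bases for the fusions above to apply directly; if the recombining node in the figure is a W node (the native ZW adder) instead of a pink spider, I would first invoke the trialgebra rule \eqref{rule:TA} to trade the W node for the X spider, after which the fusion argument goes through unchanged. A secondary point of care is to keep the wire counts and leg orderings straight through \eqref{rule:HZ}, since the pink spider here is only the \emph{normalised} X spider: any variational scalars produced along the way by \eqref{rule:S4} must be tracked and shown to cancel, using \cref{scalarinverseditlm} if needed. Once these are handled, the identity reduces to the two spider fusions plus the multiplier mod rule \eqref{rule:Mu}.
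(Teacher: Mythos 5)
Your proof is correct and takes essentially the same route as the paper, which disposes of this lemma in one line (\enquote{this follows directly from the definition of a multiplier}) -- that is, exactly your argument of unfolding each multiplier via \eqref{rule:Mu}, fusing the green nodes by \eqref{rule:S1} and the pink nodes by \eqref{rule:S4}, and refolding the resulting $x+y$ parallel wires as the multiplier labelled $x+y$ modulo $d$. Your two anticipated obstacles do not in fact arise: the recombining node is the pink spider (multipliers are by definition green-to-pink), so no trialgebra step is needed, and the pink spider fusion \eqref{rule:S4} is stated scalar-free, so there is no scalar bookkeeping to track.
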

This follows directly from the definition of a multiplier.

\begin{lemma}
  \label{dualisermultipliermultlm}
  Suppose $x \in \{0, \dotsc, d - 1 \}$.
  Then
  \[
    \tikzfig{lemmas/multipliers/dualiser-multiplier-mult}
  \]
\end{lemma}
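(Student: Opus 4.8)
The plan is to reduce this interaction to the multiplier multiplication lemma (\cref{multipliermultlm}) by first recognising that the dualiser acts as a multiplier labelled $-1 \equiv d-1 \pmod d$. Semantically this is immediate: the dualiser sends $\ket{j} \mapsto \ket{-j}$, and since $(d-1)\,j \equiv -j \pmod d$ this is exactly multiplication by $d-1$. Composing it with the multiplier labelled $x$ (which realises $\ket{j}\mapsto\ket{xj}$) gives $\ket{j}\mapsto\ket{-xj}$, so the right-hand side of the claimed equality should be a single multiplier labelled $d-x$.

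Diagrammatically I would proceed in two steps. First I would rewrite the dualiser into a multiplier labelled $d-1$. The dualiser is defined through \eqref{rule:Du} in terms of $H$ and $H^\dagger$ boxes and green nodes, whereas a multiplier is defined through the number of green--pink wires in \eqref{rule:Mu}; to bridge the two presentations I would slide the Hadamard boxes using \cref{hadslidegnlm} together with the dualiser-slide lemmas \cref{dboxgdotlm,dboxslidegrnlm}, and relabel via the red--green multiplier change \cref{redgreenmchangelm}, so that the $H/H^\dagger$ boxes cancel and only a multiplier wire labelled $d-1$ remains. Second, with the dualiser now in the form of a multiplier labelled $d-1$ sitting adjacent to the multiplier labelled $x$, I would apply \cref{multipliermultlm} to fuse them into a single multiplier labelled $(d-1)x \equiv d-x \pmod d$. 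If the figure instead places the dualiser on the opposite side of the multiplier, I would first transport it across using the multiplier-push lemma \cref{multiplierpushlm} and cancel any surplus dualiser via its involutivity \cref{dboxsquarelm} before fusing.

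The main obstacle I expect is precisely this first step: the dualiser and the multiplier are specified in two incompatible presentations ($H$-boxes versus green--pink wire count), so establishing that the dualiser contributes a label $-1$ is where the real bookkeeping lives. Everything afterwards is a one-line appeal to \cref{multipliermultlm}, exactly mirroring how \cref{scalarinverseditlm} was obtained by chaining \cref{scalargeneralmultlm,zeroemptyditlm}.
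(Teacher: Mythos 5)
Your proposal is correct in substance, and its closing step is exactly the paper's: fuse the adjacent multipliers with \cref{multipliermultlm}, using $(d-1)x \equiv d-x \pmod{d}$. The difference is in how the dualiser gets recognised as the multiplier labelled $d-1$. The paper's entire proof is a two-citation one-liner: the dualiser--multiplier identification is supplied directly by rule~\eqref{rule:P1}, after which \cref{multipliermultlm} finishes the job. You, by contrast, propose to reconstruct that identification from the definition~\eqref{rule:Du} by sliding $H$ and $H^\dagger$ boxes around via \cref{hadslidegnlm,dboxgdotlm,dboxslidegrnlm,redgreenmchangelm} --- precisely the part you flag as \enquote{where the real bookkeeping lives}, and the only part of your argument that is sketched rather than carried out. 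That obstacle is self-imposed: the ruleset already hands you the fact you need, so your main difficulty dissolves once \eqref{rule:P1} is invoked. If your sliding chain can indeed be completed, it buys a little extra --- it would exhibit the dualiser-as-$(d-1)$-multiplier fact as derivable rather than taken from a rule, in the spirit of ruleset minimality --- but as a proof of this particular lemma it is strictly more work and remains unverified, whereas the paper's route is immediate. (A minor point: your contingency of transporting the dualiser across the multiplier with \cref{multiplierpushlm} is both dubious --- that lemma pushes multipliers through spiders, not dualisers through multipliers --- and unnecessary, since once the dualiser is converted into a multiplier the two orders of composition are handled identically by \cref{multipliermultlm}.)
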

This follows directly from rule~\eqref{rule:P1} and \cref{multipliermultlm}.

\begin{lemma}
  \label{dboxgcopylm}\cite{wangQufiniteZXcalculusUnified2022}
  \[
    \tikzfig{lemmas/dboxgcopy}
  \]
\end{lemma}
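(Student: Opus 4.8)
The plan is to read this equality as a \emph{copy} rule for the dualiser through a green spider: a $D$ box sitting on the single leg on one side of a $Z$ box can be pushed across the node and redistributed as $D$ boxes on the legs of the opposite side, at the cost of reversing the spider's phase vector. My strategy is to reduce the statement to the dualiser interaction lemmas that are already available, rather than to compute directly in the standard interpretation.

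First I would use spider fusion \eqref{rule:S1} to split the $Z$ box into a bare copy node (the phase-free green spider) pre- and post-composed with the phased part, so that it suffices to prove two sub-claims: (i) that the $D$ box is copied by the bare copy node, and (ii) that dragging the $D$ box past the phase of the spider reverses its phase vector from $\overrightarrow{a}$ to $\overleftarrow{a}$. For (i), I would invoke \cref{dboxslidegrnlm}, which lets the $D$ box commute past a green node, applying it once per leg and using \cref{dboxsquarelm} (the involutivity $D\circ D = \mathrm{id}$) to cancel the orientation flips that accumulate when the same $D$ box is slid through successive branches. For (ii), I would expand the $D$ box using its definition \eqref{rule:Du} and appeal to \cref{dboxgdotlm} to commute it with the green phase, which sends each phase $a_j$ to $a_{-j}$ and hence the vector $\overrightarrow{a}$ to $\overleftarrow{a}$.

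The main obstacle I anticipate is bookkeeping the directionality of the dualiser. Unlike the qubit case where the dualiser degenerates to the identity, here $D$ is a genuine involution acting as $\ket{j}\mapsto\ket{-j}$, so pushing it across the symmetric green node is not transparent: each crossing reverses the index, and I must ensure that the parities of these reversals combine correctly, so that exactly the phase vector (and not the copied legs) ends up reversed. Keeping \cref{dboxsquarelm} as the tool to absorb paired reversals, and feeding the single leftover reversal into the phase via \cref{dboxgdotlm}, should resolve this; the remainder is routine reassembly by \eqref{rule:S1}.
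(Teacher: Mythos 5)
Your strategy is semantically sound, and the key trick in it --- turning a \emph{slide} of the dualiser into a \emph{copy} by inserting $D\circ D=\mathrm{id}$ --- is legitimate; indeed it is exactly how the paper obtains the related fact \cref{dboxbell} from \cref{dboxslidegrnlm,dboxsquarelm}. However, your route is genuinely different from the paper's. The paper dispatches this lemma in one line: rule~\eqref{rule:P1} ties the dualiser to the multiplier machinery, after which the statement is an instance of \cref{multiplierpushlm} (the general lemma for pushing a multiplier through a green spider) combined with \cref{grconnectkslm}; all of the phase and orientation bookkeeping that you handle by hand is already packaged inside those previously proven lemmas. What your approach buys is self-containedness within the dualiser fragment: it is the standard Hopf-algebra-style argument that an antipode-like involution is copied by comultiplication, decomposed via \eqref{rule:S1} into a phase-free copying step and a phase-conjugation step, and it never invokes multipliers. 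What the paper's approach buys is brevity and less case analysis: once the dualiser is recognised as (essentially) a fixed multiplier, a single application of the push lemma does everything, including the reversal $\overrightarrow{a}\mapsto\overleftarrow{a}$. One point you should verify before calling your version complete: your phase step rests on reading \cref{dboxgdotlm} as commuting $D$ past an \emph{arbitrary} phased $1$-to-$1$ Z box while sending $a_j$ to $a_{-j}$. If that lemma in fact only concerns phase-free (or specially phased) green nodes, you would be silently assuming the $1$-to-$1$ instance of the very statement being proved, and you would instead need the phased form of \cref{dboxslidegrnlm} (conjugation by $D$ reverses the phase vector) to close the gap. There is no circularity issue in principle, since both \cref{dboxgdotlm} and \cref{dboxslidegrnlm} are established independently and earlier in the paper, but the exact scope of their statements is what your proof stands or falls on.
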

This follows from rule~\eqref{rule:P1} and \cref{multiplierpushlm,grconnectkslm}.

\begin{lemma}
  \label{xtransposelm}
  \[
    \tikzfig{lemmas/xtranspose}
  \]
\end{lemma}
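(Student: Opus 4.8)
The plan is to realise the transpose through the compact structure of the calculus and then to massage the resulting diagram back into a pink spider. Concretely, I would first unfold the definition of the pink spider via \eqref{rule:HZ}, exposing a green spider carrying the phase vector $K_j$ together with a Hadamard box on each leg and the scalar $u_{m,n}$; since $u_{m,n}=u_{n,m}$, this scalar is manifestly invariant under exchanging the number of inputs and outputs, so it can be set aside from the start. I would then use the caps and cups of \eqref{rule:S3} to bend every input of the diagram to the top and every output to the bottom, so that the transpose becomes the purely diagrammatic operation of sliding Bell states past the generator.

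Next I would propagate these caps and cups inward. The Hadamard boxes are self-transpose, so a Bell state slides through a Hadamard box and only changes its colour; the precise bookkeeping of $H$ versus $H^\dagger$ is controlled by \cref{hheqhdhd} together with the colour-change rule \eqref{rule:H1} and \cref{redspiderforgrlm}. Once the caps and cups reach the central green spider, I would apply \cref{slidecupditlm}, which reverses the phase vector to $\overleftarrow{K_j}$. The key arithmetic observation is that $\overleftarrow{K_j}\equiv K_{-j}\pmod{2\pi}$, because the $s$-th entry $(d-s)j\tfrac{2\pi}{d}$ of $\overleftarrow{K_j}$ is congruent to $s(-j)\tfrac{2\pi}{d}$; this is exactly the phase that the transpose is expected to produce. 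The remaining mismatch between the green caps and cups used for bending and the red caps and cups natural to the pink spider would be absorbed into dualisers using \cref{dboxbell}.

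Finally I would collect the dualisers generated along each leg. Using \cref{dboxsquarelm} (that $D$ is an involution), together with \cref{dboxgdotlm,dboxgcopylm} to commute dualisers past the green spider and \cref{xdualiserlm} to move them across the pink generator, the dualisers should pairwise cancel, leaving a bare green $K_{-j}$ spider with the correct Hadamard boxes on its legs; refolding via \eqref{rule:HZ} then yields the pink spider of phase $K_{-j}$ with its inputs and outputs interchanged, as required. I expect the main obstacle to be precisely this dualiser bookkeeping: because the pink spider is not flexsymmetric, bending a leg does not commute freely past the Hadamard boxes, so one must track carefully which legs acquire a dualiser and verify that every one of them cancels, leaving no stray $D$ box or residual scalar in the final diagram.
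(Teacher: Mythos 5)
Your overall scaffolding---unfold via \eqref{rule:HZ}, bend with \eqref{rule:S3}, slide through the Hadamard boxes, fix up dualisers, refold---is the natural route, and your arithmetic observation $\overleftarrow{K_j} \equiv K_{-j} \pmod{2\pi}$ is exactly the right core of the lemma. However, the two steps in your argument that actually produce the phase change are both invalid, and they compensate for each other rather than compose. First, \cref{slidecupditlm} cannot be applied to the caps and cups of \eqref{rule:S3}: the reversal $\overrightarrow{a}\mapsto\overleftarrow{a}$ is a feature of the \emph{pink} compact structure (equivalently, of green cups decorated with a dualiser), not of the green one. The green cup is $\sum_i\ket{ii}$, and a Z box is invariant under transposition along it, since $\sum_j a_j \ket{j}\bra{j}$ is a symmetric matrix; only the pink cup $\sum_i \ket{i}\ket{\minu i}$ conjugates by $D$ and hence reverses the vector. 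So after bending with green cups and sliding past the Hadamards, the central spider still carries $K_j$, not $\overleftarrow{K_j}$.

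Second, the dualisers do not pairwise cancel, and expecting them to is the symptom of the broken bookkeeping. Since $H^T = H$ and $(H^\dagger)^T = H^\dagger$, bending moves every output-side $H$ onto an input and every input-side $H^\dagger$ onto an output, which is precisely the \emph{wrong} decoration for refolding via \eqref{rule:HZ}. Converting back, via $H = D H^\dagger$ and $H^\dagger = H D$ (available through \eqref{rule:HDagger}, \cref{hheqhdhd}, and \cref{dualiserslm}), produces exactly one dualiser on each of the $m+n$ legs, adjacent to the green spider; since $D$ is not the identity for $d>2$, these cannot vanish among themselves. The missing final move is to push \emph{all} of them into the central green spider (\cref{dboxslidegrnlm}, \cref{dboxgdotlm}), and it is precisely this absorption, $D^{\otimes n}\circ Z(K_j)\circ D^{\otimes m} = Z(\overleftarrow{K_j}) = Z(K_{-j})$, that creates the phase $K_{-j}$. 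In your version the reversal is instead charged to \cref{slidecupditlm}, which then forces the dualisers to disappear for free; the two errors cancel to give the correct final diagram, but neither step is derivable as stated, so the proof does not go through without reworking the dualiser accounting along these lines.
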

\begin{proof}
  \[
    \tikzfig{lemmas/xtransposepf}
  \]
  The second equality can be proved similarly.
\end{proof}

\begin{lemma}
  \label{wdecomplm}
  \[
    \tikzfig{lemmas/wdecomp}
  \]
\end{lemma}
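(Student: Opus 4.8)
The plan is to first read off what each side of \cref{wdecomplm} computes under the standard interpretation, so that the diagrammatic route is dictated by the underlying linear map rather than guessed. Evaluating the W node via its definition, the binary W node sends $\ket{0} \mapsto \ket{0\cdots 0}$ and $\ket{i} \mapsto \ket{i0\cdots 0} + \cdots + \ket{0\cdots 0i}$ for $i \geq 1$; I would confirm that the decomposition on the other side reproduces exactly this map, which both checks soundness and tells me which basis components each sub-diagram is responsible for. If the statement concerns a general multi-legged W node rather than the binary one, I would first reduce to the binary case using the recursive definition \eqref{rule:WN} together with the associativity rule \eqref{rule:Aso}, so that only a fixed-arity identity remains to be proved.

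With the target map understood, I would carry out the derivation by unfolding the generators appearing on one side into their definitions. Concretely, I would expand any triangles using \eqref{rule:YT} and, wherever a W node meets a Z- or X-spider, rewrite using the trialgebra rule \eqref{rule:TA}, which is the key bridge allowing W structure to be traded for X-spider structure. The phase-copy rule \eqref{rule:Pcy} together with \cref{multiplierpushlm,multipliermultlm} would then slide and merge the multiplier connections into the correct weights, while the Hopf law \cref{hopfditlm} disconnects the spurious green--pink pairs that arise after fusion. The transpose behaviour of the pink spider established in \cref{xtransposelm}, together with the colour-change equality \eqref{rule:HX}, would let me orient every generator consistently so that the two sides match leg-by-leg, after which the remaining equalities are routine applications of spider fusion \eqref{rule:S1} and \eqref{rule:S4}.

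The step I expect to be the main obstacle is the bookkeeping forced by the fact that the calculus is not flexsymmetric: wire order genuinely matters for the W node and the pink spider, so I cannot freely reroute wires as in the qubit case and must track each permutation explicitly, pushing crossings through with the swap and \cref{swaprgcapcupditlm} only where it is sound to do so. A secondary difficulty is scalar accounting: every conversion through \eqref{rule:HZ} or \eqref{rule:HX} introduces a power of $\sqrt{d}$ (the $u_{m,n}$ and $v_{m,n}$ green boxes), and these factors must be collected and shown to combine to the trivial scalar, which I would handle with \cref{scalargeneralmultlm,scalarinverseditlm}. Once the wire orderings and the scalars are both under control, the identity should close by a short chain of the fusion and colour-change rules already assembled above.
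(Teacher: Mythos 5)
Your proposal has a genuine gap: it never actually proves anything. You do not pin down the equation being established---you explicitly hedge between the binary W node and the general multi-legged case---and you never exhibit a single concrete rewrite step. What you offer instead is an inventory of rules that could plausibly be involved (\eqref{rule:TA}, \eqref{rule:YT}, \eqref{rule:Pcy}, \eqref{rule:HX}, \cref{hopfditlm}, \cref{multiplierpushlm}, \cref{xtransposelm}, spider fusion) together with two anticipated difficulties. In a graphical calculus the proof \emph{is} the chain of intermediate diagrams with a named rule justifying each transition; that chain is entirely absent here, so there is nothing that can be checked or compared leg-by-leg as you intend. By contrast, the paper's own proof of \cref{wdecomplm} is a short, direct derivation occupying only a few displayed lines, without the elaborate trialgebra, colour-change, scalar and wire-order machinery you forecast as the main content of the argument.

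A second, subtler problem is your opening move: computing $\interp{\cdot}$ of both sides so that the linear map can \enquote{dictate} the diagrammatic route. As private guidance this is harmless, but it cannot carry any weight in the derivation itself. This lemma is one of the ingredients feeding into the completeness theorem, i.e.\@ into the claim that $\interp{D_1}=\interp{D_2}$ implies $D_1=D_2$ by rewriting; appealing to equality of interpretations inside its proof would therefore be circular. Every step must follow from the rules of \cref{subsec:rule-set} or from lemmas established earlier in the appendix, and your proposal never commits to such a sequence---it remains a plan for finding a proof rather than a proof.
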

\begin{proof}
  \begin{align*}
    \tikzfig{lemmas/wdecompprf-0} \quad
    &\tikzfig{lemmas/wdecompprf-1} \\
    &\tikzfig{lemmas/wdecompprf-2}
  \end{align*}
\end{proof}

\begin{lemma}
  \label{wunitlm}
  \[
    \tikzfig{lemmas/wunit}
  \]
\end{lemma}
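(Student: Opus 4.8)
The statement is the \emph{unit law} for the W node: plugging the computational basis state $\ket 0$ (the green Z box with zero phase vector and a single output leg) into one leg of the $1\to 2$ W node collapses it to a bare identity wire. Semantically this is immediate from the interpretation in \cref{subsec:generators_n_interpretation}: composing the effect $\bra 0$ onto one output of $\ket{00}\bra{0}+\sum_{i=1}^{d-1}(\ket{0i}+\ket{i0})\bra{i}$ annihilates every term $(\ket{0i}+\ket{i0})\bra i$ in the wrong slot for $i\neq 0$ and leaves exactly $\ket 0\bra 0 + \sum_{i=1}^{d-1}\ket i\bra i = I_d$. So the content of the lemma is to reproduce this cancellation purely by rewriting.

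The plan is to reduce the problem to the basis-state interaction rule \eqref{rule:Bs0}. First I would use the output symmetry of the W node \eqref{rule:Sym} together with the compact structure \eqref{rule:S3} to move the $\ket 0$ onto the distinguished leg on which \eqref{rule:Bs0} acts, so that the $\ket 0$ meets the W node in exactly the configuration that rule describes. The more robust route, however, is to invoke the decomposition of the W node from \cref{wdecomplm}, which rewrites the $1\to 2$ W node into green spiders and yellow triangles. The crucial fact is that a yellow triangle fixes $\ket 0$: from its interpretation $I_d + \sum_{i=1}^{d-1}\ket 0\bra i$ one reads off that it sends $\ket 0$ to $\ket 0$, and diagrammatically this is obtained by unfolding \eqref{rule:YT} and copying $\ket 0$ through the W node using \eqref{rule:Bs0}. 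Once the $\ket 0$ has passed through the triangle unchanged, the remaining green $\ket 0$ states copy and fuse via \eqref{rule:S1}, and any residual green--pink connection is severed by the Hopf law \eqref{rule:Hopf}, leaving a single wire.

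The key steps, in order, are: (i) normalise the position of the $\ket 0$ onto the leg targeted by \eqref{rule:Bs0} using \eqref{rule:Sym} and \eqref{rule:S3}; (ii) expand the W node by \cref{wdecomplm}; (iii) simplify the triangle acting on $\ket 0$ to a bare wire on that branch; (iv) fuse and copy the green $\ket 0$ states with \eqref{rule:S1} and remove the leftover green--pink pair by \eqref{rule:Hopf}; and (v) discharge the accumulated scalar. I expect the main obstacle to be precisely this last bookkeeping: the pink spider and the triangle are defined only up to the dimension-dependent scalars $u_{m,n}=d^{\frac{m+n-2}{2}}-1$ appearing in \eqref{rule:HZ} and \eqref{rule:YT}, so after the structural cancellations one is left with a scalar subdiagram that must be shown equal to $1$ using \cref{scalargeneralmultlm,scalarinverseditlm,hilm}. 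Making every $\sqrt d$ factor cancel, rather than the topological rewriting itself, is where the care is required.
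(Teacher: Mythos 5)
Your semantic reading of the lemma is correct, and your high-level plan (decompose the W node via \cref{wdecomplm}, simplify against the basis state, fuse green spiders with \eqref{rule:S1}, then clean up scalars with \cref{scalargeneralmultlm,scalarinverseditlm,hilm}) is of the same general kind as the paper's short derivation, which sits immediately after \cref{wdecomplm}. But two of the concrete steps you rely on would fail as stated. The clearest is the Hopf step: you claim that ``any residual green--pink connection is severed by the Hopf law \eqref{rule:Hopf}.'' In this qudit calculus that is not available: as the paper stresses when introducing multipliers \eqref{rule:Mu}, a green and a pink spider disconnect only when they are joined by exactly $d$ parallel wires (\cref{hopfditlm}). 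A single leftover green--pink wire cannot be removed by Hopf; eliminating one wire is the job of the copy-type rules (\eqref{rule:Zer}, \eqref{rule:K0}, \eqref{rule:B2}, \eqref{rule:Ept}), and without saying which of these applies, and to which intermediate diagram, your derivation does not close.

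The triangle step is also invoked in the wrong orientation. Writing $T$ for the yellow triangle, the identity $T\ket{0}=\ket{0}$ that you read off from $I_d+\sum_{i=1}^{d-1}\ket{0}\bra{i}$ holds only when $\ket{0}$ enters the triangle's \emph{input}. In the unit law, however, the basis state occurs as the effect $\bra{0}$ on an output leg of the W node (equivalently, as a state fed into the transposed W node), and on that side $\bra{0}\circ T=\bra{0}+\sum_{i=1}^{d-1}\bra{i}$ is the all-ones green effect, while $\bra{0}\circ T^{-1}$ is the green effect labelled $\overrightarrow{\minu 1}$ --- neither is $\bra{0}$, and neither makes the triangle vanish. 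A workable derivation must either track which side of the (possibly transposed) triangles in \cref{wdecomplm} the $\bra{0}$ meets, or deliberately absorb $\bra{0}$ through the triangle into a green effect and fuse it by \eqref{rule:S1}; your step (iii), ``the triangle acting on $\ket{0}$ simplifies to a bare wire,'' skips exactly this point. (A minor further inaccuracy: the scalar $u_{m,n}$ appears only in \eqref{rule:HZ}; the triangle definition \eqref{rule:YT} carries no such scalar.)
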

\begin{proof}
  \[
    \tikzfig{lemmas/wunitprf}
  \]
\end{proof}

\begin{lemma}
  \label{trialgbzwlm}
  Suppose $m\geq 2$.
  Then
  \[
    \tikzfig{lemmas/trialgbzw}
  \]
\end{lemma}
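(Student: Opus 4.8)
The plan is to prove the equality by induction on $m$; the hypothesis $m \geq 2$ signals that we are lifting a two-legged interaction between the W node and the Z/X algebras to arbitrarily many legs. For the base case $m = 2$, I would reduce the statement to the trialgebra rule~\eqref{rule:TA} combined with the W-bialgebra rule~\eqref{rule:BZW}: after expanding the W node via its definition~\eqref{rule:WN} and the decomposition in \cref{wdecomplm}, I would apply~\eqref{rule:TA} at the junction between the W node and the X spider, then absorb the leftover scalars and phases using the spider-fusion rules~\eqref{rule:S1} and~\eqref{rule:S4}.

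For the inductive step, assuming the claim for $m$, I would handle the $(m+1)$-legged W node by using associativity~\eqref{rule:Aso} to split off a single leg, rewriting the $(m+1)$-legged node as a two-legged W node composed with an $m$-legged one; the symmetry rule~\eqref{rule:Sym} then lets me route the split-off leg into the position where~\eqref{rule:TA} (or the W-W rule~\eqref{rule:WW}) applies. I would push the relevant spider through this exposed junction and invoke the inductive hypothesis on the remaining $m$-legged sub-diagram, using the unit lemma \cref{wunitlm} to discharge any trivial single-legged W nodes that arise. The phase-copy rule~\eqref{rule:Pcy} together with the multiplier lemmas (\cref{multiplierpushlm,multipliermultlm,multiplieraddlm}) serve to merge and relabel the connections that accumulate along the way.

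The hard part will be the bookkeeping in the inductive step: each application of~\eqref{rule:TA} spawns fresh Z-spider phases and multiplier connections, and one must verify that peeling the legs off in this order reassembles exactly into the claimed right-hand side rather than a permuted or rescaled variant. Keeping the phase vectors aligned---so that the $d$-ary pattern of multiplier connections agrees on both sides---is the delicate point, and I expect the multiplier manipulation lemmas and the associativity/symmetry rules to carry most of the weight there. Once the leg-by-leg reduction is shown to commute with the phase combination, the induction closes and the full equality follows.
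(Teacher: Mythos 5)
Your proposal follows essentially the same route as the paper: induction on $m$, with the base case $m=2$ obtained from the trialgebra rule~\eqref{rule:TA} together with the W-bialgebra rule~\eqref{rule:BZW}, and an inductive step that peels one leg off the W node (via its recursive structure) so the induction hypothesis applies to the remaining $m$-legged part. The additional machinery you invoke (\cref{wdecomplm}, \eqref{rule:S4}, \eqref{rule:WW}, the multiplier lemmas) is more than the paper needs, but the core argument is the same.
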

\begin{proof}
  We prove this by induction.
  For $m=2$, it just follows from the rules~\eqref{rule:TA} and~\eqref{rule:BZW}.
  Assume that it holds for $m = k$.
  Then
  \[
    \tikzfig{lemmas/trialgbzwprf}
  \]
\end{proof}

\begin{lemma}
  \label{trialgebraallconnectionlm}
  \[
    \tikzfig{lemmas/trialgebraallconnection}
  \]
\end{lemma}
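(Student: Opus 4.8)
The plan is to prove this complete-connection trialgebra identity by induction, using the basic trialgebra rule \eqref{rule:TA} as the base case and the just-established \cref{trialgbzwlm} as the inductive engine. The statement concerns a complete bipartite web of connections between a row of W nodes and a row of pink spiders mediated by Z boxes, so the natural approach is to treat one row as the induction variable while \cref{trialgbzwlm} discharges the interaction along the other row in a single step.

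I would set up the induction on the number of W nodes. The base case of a single W node follows directly from \cref{trialgbzwlm}, which already handles one W node interacting with an arbitrary number of pink spiders through the trialgebra. For the inductive step, I would isolate one additional W node together with all of its connections; fusing the relevant Z boxes with \eqref{rule:S1} and the pink spiders with \eqref{rule:S4} brings this node into the exact shape required by \cref{trialgbzwlm}. Applying that lemma pushes the node across and leaves a diagram matching the inductive hypothesis on the remaining W nodes. Throughout, the weights on the connecting wires must be tracked using \eqref{rule:Mu}, \cref{multipliermultlm}, and \cref{multiplieraddlm}, since the multipliers accumulate as connections are regrouped.

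The hard part will be the bookkeeping forced by the non-flexsymmetric nature of the W node and the pink spider. Because these generators cannot be viewed as open graphs, I cannot slide legs past one another freely, so every rearrangement of the complete bipartite pattern must be justified by an explicit rule --- principally \eqref{rule:WW} and \eqref{rule:BZW}, supplemented by \cref{wdecomplm} and \cref{wunitlm}. Keeping the wire order and the attached multiplier weights consistent across the induction, rather than appealing to topological deformation, is where the genuine care lies; the underlying algebraic content is already captured by \cref{trialgbzwlm}.
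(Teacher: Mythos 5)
Your proposal takes essentially the same route as the paper: the paper's entire proof of this lemma is that it follows from rule~\eqref{rule:BZW} together with \cref{trialgbzwlm}, which is exactly the structure you describe (one row discharged by \cref{trialgbzwlm}, the other handled by the Z--W bialgebra). The extra machinery you anticipate needing --- multiplier bookkeeping via \eqref{rule:Mu}, \cref{multipliermultlm}, \cref{multiplieraddlm}, plus \eqref{rule:WW}, \cref{wdecomplm}, and \cref{wunitlm} --- is not required, since the paper derives the all-connection identity from \eqref{rule:BZW} and \cref{trialgbzwlm} alone.
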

This follows from the rule~\eqref{rule:BZW} and \cref{trialgbzwlm}.

\begin{lemma}
  \label{gerneraltrialgebralm}
  \[
    \tikzfig{lemmas/gerneraltrialgebra}
  \]
\end{lemma}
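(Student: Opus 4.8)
The plan is to prove this by induction, leaning on the two special cases already in hand. \Cref{trialgbzwlm} establishes the statement for a chain of $m \geq 2$ W nodes interacting with the X spider, and \cref{trialgebraallconnectionlm} upgrades this (via the bialgebra rule \eqref{rule:BZW}) to the fully-connected configuration on that side. The general statement here allows \emph{both} the number of W nodes and the number of X spiders to vary, subject to the constraint emphasised in the discussion of \eqref{rule:TA}: every Z spider mediating a W-node/X-spider connection is also tied to one common W node. I would therefore take \cref{trialgebraallconnectionlm} as the base case (a single X spider, all connections present) and run an outer induction on the number of X spiders.

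For the inductive step I would isolate one X spider together with the wires joining it to the W nodes, and first apply the W-bialgebra rule \eqref{rule:BZW} to split each participating W node, separating the bundle of connections destined for the isolated X spider from the remainder of the grid. The separated portion is then precisely an instance of the all-connections configuration, so \cref{trialgebraallconnectionlm} (in the smallest case, the base rule \eqref{rule:TA}) converts the local W-node/X-spider interaction. The residual portion contains one fewer X spider and still satisfies the common-W-node hypothesis, so the inductive hypothesis applies to it directly. Finally I would recombine the two converted pieces using \eqref{rule:BZW} once more, together with the W-W algebra rule \eqref{rule:WW}, to re-fuse the shared W node that threads the whole diagram.

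The main obstacle I anticipate is the bookkeeping around the common W node: I must check that the split produced by \eqref{rule:BZW} cleanly partitions the diagram into an all-connections block and a residual block that \emph{each} inherit the invariant that every mediating Z spider remains attached to a common W node. Verifying that the bialgebra rule reproduces this shared connection faithfully on both halves — rather than duplicating it spuriously or severing it — is the delicate point, and it is where the role of \eqref{rule:WW} in the recombination becomes essential. Once the split is shown to respect this invariant, the two halves fall squarely under \cref{trialgebraallconnectionlm} and the inductive hypothesis, and the remaining rewiring is routine application of \eqref{rule:BZW} and the W-node rules.
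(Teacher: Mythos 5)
Your overall target is the right one --- the paper also reduces this lemma to \cref{trialgebraallconnectionlm} --- but the mechanism you propose breaks down at exactly the point you yourself flag as delicate. First, the rule that separates the legs of a W node into two bundles (the connections going to the isolated X spider versus those going to the rest of the grid) is not the bialgebra \eqref{rule:BZW}: that rule commutes Z structure past W structure and duplicates nodes; it does not partition the legs of a single W node. Leg-partitioning is \eqref{rule:WN}, the defining equation of the multi-legged W node, under which \enquote{splitting} is simply unfolding a definition. (Your instinct likely comes from the derivation of \cref{trialgebraallconnectionlm} itself from \cref{trialgbzwlm}, where \eqref{rule:BZW} genuinely is the working rule --- but that is one level down from what is needed here.) Second, and more seriously, your recombination step --- re-fusing the shared W node across the two separately-converted halves using \eqref{rule:BZW} together with \eqref{rule:WW} --- is unjustified. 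After conversion the two halves are in the swapped form, and \eqref{rule:WW} only holds when inputs and outputs are restricted by W-node projections; you give no argument that these projections are available where you would need them. This is where your induction would stall.

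The paper's proof shows that the difficulty is avoidable altogether: it splits the common W spider using \eqref{rule:WN} so that the whole diagram becomes, in one step, a configuration to which \cref{trialgebraallconnectionlm} applies globally. There is no induction on the number of X spiders, no separate conversion of pieces, and hence no recombination and no invariant to track. If you replace your \eqref{rule:BZW}-splitting with \eqref{rule:WN}-splitting, you will find that a single application of \cref{trialgebraallconnectionlm} already handles all the X spiders at once, and your inductive superstructure collapses to the paper's one-line derivation.
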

This equality can be derived directly from \cref{trialgebraallconnectionlm} by splitting the W spider using the rule~\eqref{rule:WN}.

\begin{lemma}
  \label{2wadditionlm}
  \[
    \tikzfig{lemmas/2waddition}
  \]
\end{lemma}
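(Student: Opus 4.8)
The plan is to establish this W-addition identity by reducing it to the elementary addition rule \eqref{rule:AD}, using the preceding structural lemmas about the W node to massage both sides into a shape where that rule applies directly. First I would expand the multi-legged W nodes appearing on the left-hand side via \cref{wdecomplm}, which decomposes a general W node into a tree of elementary one-to-two W nodes; this reduces the statement to one purely about binary W nodes, for which the basic algebraic rules \eqref{rule:Sym}, \eqref{rule:Aso}, and \eqref{rule:AD} are stated. Any single-legged W nodes produced in this expansion are eliminated using \cref{wunitlm}.

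Next I would normalise the bracketing of the resulting binary W nodes by repeatedly applying associativity \eqref{rule:Aso} and symmetry \eqref{rule:Sym}, so that the two additive contributions become adjacent and aligned on a common W node. At this point the core step is to invoke the addition rule \eqref{rule:AD}, which performs the actual summation of the two branches. If Z boxes sit on top of the W structure, their interaction is controlled by the phase-copy rule \eqref{rule:Pcy}, and where the addition needs to be pushed past a W node or its transpose I would appeal to \eqref{rule:WW} together with \cref{trialgebraallconnectionlm}. Recombining the elementary W nodes back into a single multi-legged W node (again via \cref{wdecomplm}, read in reverse) then yields the claimed right-hand side.

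The main obstacle will be the combinatorial bookkeeping of basis components across the $d-1$ nonzero dimensions. Concretely, when the two W structures merge, one must verify that each summand of the target expression arises from exactly one admissible combination of branches, with no missing terms and no spurious cross terms; the W node interpretation $\ket{0\cdots0}\bra{0} + \sum_{i=1}^{d-1}(\ket{i0\cdots0} + \cdots + \ket{0\cdots0i})\bra{i}$ means that at most one leg carries a nonzero label, so the bracketing chosen in the second step must be such that \eqref{rule:AD} sees precisely these single-excitation patterns. I expect an induction on the number of legs (as in the proof of \cref{trialgbzwlm}) to handle this uniformly: the base case follows from \eqref{rule:AD} and \eqref{rule:Aso} directly, and the inductive step peels off one leg using \cref{wdecomplm} and reduces to the previous case, with the phase bookkeeping at each stage discharged by \eqref{rule:Pcy}.
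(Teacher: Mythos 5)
Your plan never confronts what makes \cref{2wadditionlm} non-trivial. If the only legs of the two Z boxes entered the addition W node, the statement would already be \eqref{rule:AD} up to rebracketing by \eqref{rule:Aso} and \eqref{rule:Sym}, and there would be nothing left to prove. The content of the lemma is that the two boxes being summed are \emph{also} connected to structure outside that W node -- they share further legs (through the X spider, respectively a second W node), which is precisely the configuration needed later when combining the boxes of a normal form in \cref{lem:nfpartialtrace}. Those extra shared legs are exactly what blocks \eqref{rule:AD}: that rule adds two Z boxes whose \emph{sole} outputs feed the W transpose, so after your decomposition via \cref{wdecomplm} and rebracketing via \eqref{rule:Aso} and \eqref{rule:Sym} the rule still does not fire; and none of \cref{wunitlm}, \eqref{rule:Pcy} (which copies phases through W nodes), or \eqref{rule:WW} (the bialgebra between the W node and its transpose, which you invoke only to ``push addition past a W node'') is capable of removing or relocating the shared connection.

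The missing ingredient is the rule that trades the Z-copy/X structure attached to the summands for W structure: the bialgebra \eqref{rule:BZW}, or equivalently in this configuration the trialgebra \eqref{rule:TA} and its generalisations \cref{gerneraltrialgebralm,trialgebraallconnectionlm}. The paper's own proof is a short, direct two-line rewrite that hinges on exactly this step: unfuse the boxes with \eqref{rule:S1}, use the bialgebra/trialgebra interaction to slide the shared structure past the W node so that the two summands become bare states feeding a single W transpose, conclude with \eqref{rule:AD}, and refuse. In particular, no induction on the number of legs is required. Note also that your fallback -- verifying via the interpretation of the W node that no cross terms survive -- is a semantic argument and cannot serve as a step of the derivation: in a completeness proof every rewrite must be a syntactic application of the rules, with soundness already guaranteed.
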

\begin{proof}
  \begin{align*}
    \tikzfig{lemmas/2wadditionprf-0} \quad
    &\tikzfig{lemmas/2wadditionprf-1} \\
    &\tikzfig{lemmas/2wadditionprf-2} \\
  \end{align*}
\end{proof}

\begin{lemma}
  \label{K1Wgbox1lm}
  \[
    \tikzfig{lemmas/K1Wgbox1}
  \]
\end{lemma}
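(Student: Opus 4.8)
The plan is to prove this identity by unfolding the generators appearing on the left-hand side into their defining components and then applying the two governing interaction rules—the trialgebra \eqref{rule:TA} and the $K_1$--W--state rule \eqref{rule:KZ}—before re-folding into the claimed shape. Since the statement concerns the interplay of a $K_1$-phase pink spider, a W node, and a green box, the natural first move is to replace the pink spider by its definition via \eqref{rule:HZ}, or, where the spider meets a state, to use the basis-state characterisation that a $K_1$ spider enforces the modular constraint $\sum_i e_i b_i \equiv 0 \Mod{d}$ described alongside \eqref{rule:KZ}.

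First I would expand the W node using \eqref{rule:WN} so that its ``exactly one nonzero leg'' behaviour is exposed explicitly, and simultaneously rewrite the green box so that its phase sits on a single leg via the labelling convention for Z boxes. This puts the diagram into the configuration handled by \eqref{rule:KZ}: the W projection forces one leg into $\ket{-1}$ and the rest into $\ket{0}$, while the red projection pins down \emph{which} leg that must be. Applying \eqref{rule:KZ} (or, when several W nodes are present, first flattening them with \cref{gerneraltrialgebralm} and \cref{2wadditionlm}) collapses the combinatorial freedom and leaves a single distinguished wire carrying the green-box phase.

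Next I would push the phase into its final position. The phase-copy rule \eqref{rule:Pcy}, together with \cref{multiplierpushlm} and spider fusion \eqref{rule:S1}, lets me transport the phase through the W node and consolidate any scalar factors introduced by the pink spider's normalisation (tracked through the $u_{m,n}$ bookkeeping as in \cref{s4lm}). Where a $V$ box appears as an intermediate, the rules \eqref{rule:VW} and \eqref{rule:ZV} govern how it commutes past the W node and the green state, so I would invoke these to clear it; a final fusion step should then yield exactly the right-hand side.

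The hard part will be the bookkeeping rather than the conceptual core: keeping the modular indices on the multipliers consistent with the $K_1$ phase while repeatedly invoking \eqref{rule:KZ} and \eqref{rule:Pcy}, and ensuring that the non-flexsymmetry of both the W node and the pink spider does not silently reorder wires. Because neither generator can be treated as an open graph, each application of \eqref{rule:TA} or \eqref{rule:KZ} must respect the input/output orientation, and I expect the most error-prone step to be verifying that the single surviving $\ket{-1}$ leg lands on precisely the wire demanded by the right-hand side.
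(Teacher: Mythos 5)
There is a genuine gap: what you have written is a plan, not a proof. At no point do you exhibit a concrete chain of rewrites --- every step is hedged (\enquote{I would expand}, \enquote{should then yield}) --- and you never pin down what the left- and right-hand diagrams of \cref{K1Wgbox1lm} actually are, so nothing in your text establishes the specific equality being claimed. The paper's own proof is a single short, explicit rewrite chain; by contrast you enumerate essentially every interaction rule in the calculus (\eqref{rule:HZ}, \eqref{rule:WN}, \eqref{rule:KZ}, \eqref{rule:TA}, \eqref{rule:Pcy}, \eqref{rule:VW}, \eqref{rule:ZV}, together with \cref{gerneraltrialgebralm}, \cref{2wadditionlm}, and \cref{s4lm}) without committing to which of them is applied where, in what order, or to which subdiagram. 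A derivation that is never written down cannot be checked, and listing candidate rules does not substitute for one.

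Moreover, a central ingredient of your strategy is not admissible in this setting. You propose to use \enquote{the basis-state characterisation that a $K_1$ spider enforces the modular constraint $\sum_i e_i b_i \equiv 0 \Mod{d}$} and to let the W projection \enquote{force one leg into $\ket{-1}$}. That is reasoning about the interpretation $\interp{\cdot}$, i.e.\@ semantic reasoning. In a lemma feeding the completeness proof, every step must be a syntactic application of the rules of \cref{subsec:rule-set} or of previously derived lemmas; arguing from the action on basis states is precisely what completeness is meant to render unnecessary, and invoking it inside a derivation is circular. The discussion of basis states surrounding \eqref{rule:KZ} in the paper serves only to explain why that axiom is sound, not as a rewrite step one may perform. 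To repair your attempt you would need to (i) state both sides of the lemma explicitly, and (ii) replace the semantic argument with an explicit, ordered sequence of rule applications, as the paper's proof does.
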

\begin{proof}
  \[
    \tikzfig{lemmas/K1Wgbox1prf}
  \]
\end{proof}

\begin{lemma}
  \label{wtransposelm}
  \[
    \tikzfig{lemmas/wtranspose}
  \]
\end{lemma}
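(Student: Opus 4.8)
The plan is to prove the stated equality by induction on the number of legs $n$ of the W node, mirroring the recursive way the general W node and its transpose are defined in~\eqref{rule:WN}. The essential difficulty throughout is that the W node is not flexsymmetric, so I cannot treat the diagram as an open graph and slide boxes along wires freely; every rewrite must respect the fixed orientation of the legs, and this is the main source of technical care.

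For the base case I would treat the single-legged W node, which is the identity by the convention stated after~\eqref{rule:WN}, making the claim immediate, together with the defining two- and three-legged W nodes, where the statement should follow directly from the symmetry rule~\eqref{rule:Sym} and the definition of the dualiser~\eqref{rule:Du}. For the inductive step I would use~\eqref{rule:WN} to split off one leg, writing the $(n+1)$-legged W node (or its transpose) as a composition of an $n$-legged W node with a three-legged W node. I would then apply~\eqref{rule:Sym} to bring the legs into the required order, push the dualisers across the branching using \cref{dboxslidegrnlm,dualiserslm,multiplierpushlm}, invoke the inductive hypothesis on the smaller W node, and regroup with associativity~\eqref{rule:Aso}. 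The recently established \cref{wdecomplm,wunitlm} would supply the auxiliary W-manipulations needed to reassemble the diagram into the desired form.

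The hard part will be controlling the interaction between the dualisers and the branching structure of the W node. Since the dualiser is an involution reversing indices modulo $d$ (see \cref{dboxsquarelm}) rather than the identity as in the qubit case, and since the W node cannot be deformed freely, one must track precisely which leg carries which dualiser and check that sliding a dualiser past a W branch produces exactly the index-reversal pattern demanded by the transpose. Once this bookkeeping is discharged by \cref{dboxslidegrnlm} together with the symmetry rule, the remaining manipulations reduce to routine applications of the W-node rules.
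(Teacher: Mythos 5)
There is a genuine conceptual gap in your plan: the difficulty you identify as the ``hard part'' --- tracking dualisers through the branching structure of the W node --- does not exist in this lemma. The transpose here is taken with respect to the green cups and caps of~\eqref{rule:S3}, i.e.\ transposition in the computational basis, and under that operation the W node goes to exactly the flipped W node: the transpose of $\ket{0\cdots0}\bra{0}+\sum_{i=1}^{d-1}(\ket{i0\cdots0}+\cdots+\ket{0\cdots0i})\bra{i}$ is the same expression with bras and kets exchanged, with no index reversal modulo $d$ anywhere. The dualiser only enters when green compact structure meets \emph{red} compact structure --- that is the content of \cref{xdualiserlm,xtransposelm} for the pink spider, precisely because the red caps and cups differ from the green ones --- but the W node never meets a red cup in this statement. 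Consequently the steps in which you ``push the dualisers across the branching'' via \cref{dboxslidegrnlm,dualiserslm,multiplierpushlm} have nothing to act on, and the ``index-reversal pattern demanded by the transpose'' that you propose to verify is not demanded at all. Your base case has the same flaw: for the basic W node the claim is not a consequence of~\eqref{rule:Sym} and~\eqref{rule:Du}, neither of which is relevant.

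For comparison, the paper disposes of this lemma in one line: the general W node and its transpose are \emph{defined} by~\eqref{rule:WN} in terms of the basic generator together with the compact structure of~\eqref{rule:S3}, so bending the legs of one yields the other after straightening wires with the snake equations, i.e.\ \eqref{rule:S3} plus spider fusion~\eqref{rule:S1}. No induction, no dualiser, no symmetry rule, and no appeal to \cref{wdecomplm,wunitlm} is needed. Your inductive skeleton (split off a leg with~\eqref{rule:WN}, apply the hypothesis, regroup with~\eqref{rule:Aso}) is not wrong in outline, but once the dualiser steps are deleted it collapses into exactly this definition-unfolding, since~\eqref{rule:WN} is itself the recursive structure you would be inducting over; as written, however, the proposal's technical core is aimed at a difficulty belonging to a different lemma and could not be carried out.
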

This follows directly form rules~\eqref{rule:WN},~\eqref{rule:S3}, and~\eqref{rule:S1}.

\begin{lemma}
  For $\overrightarrow{a} = (a_1, a_2, \, \ldots \, , a_{d - 1})$, we have
  \label{zbox-single-decompositionlm}
  \[
    \tikzfig{lemmas/zbox-single-decomposition}
  \]
\end{lemma}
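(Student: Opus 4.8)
The plan is to build the full-phase Z box out of its single-phase constituents, with the Z-box/V-box decomposition rule \eqref{rule:VA} doing the main work: that rule is tailor-made to merge $d-1$ Z boxes carrying single non-zero phases $(0,\dots,0,a_i)$ into one Z box carrying the whole vector $\overrightarrow a=(a_1,\dots,a_{d-1})$. Reading \eqref{rule:VA} from right to left therefore already supplies the skeleton of the decomposition, and the remaining work is bookkeeping: placing each phase $a_i$ onto the correct basis state $\ket i$ and checking that the $\ket 0$-component stays equal to $1$.

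First I would set up, for each $i\in\{1,\dots,d-1\}$, a single-phase Z box labelled $a_i$ together with a multiplier. Since the labelled box $a_i$ has interpretation $\ket 0\bra 0 + a_i\ket{d-1}\bra{d-1}$ (equivalently $\ket 0 + a_i\ket{d-1}$ as a state), with its phase on the top basis state $\ket{d-1}=\ket{-1}$, I would attach the multiplier $d-i$; by \cref{redgreenmchangelm} together with \cref{multiplierpushlm,multipliermultlm} this moves the non-zero phase from $\ket{d-1}$ to $\ket i$, yielding the contribution $\ket 0\bra 0 + a_i\ket i\bra i$. The key point is that this relabelling is valid for \emph{every} $i$ in an arbitrary (possibly composite) dimension $d$, since it only invokes the multiplier $d-i$, whose action $\ket j\mapsto\ket{(d-i)j}$ sends $\ket{-1}$ to $\ket i$ without any invertibility assumption on $i$.

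Next I would fuse these $d-1$ routed single-phase boxes into a single Z box by applying \eqref{rule:VA}, using the spider-fusion rule \eqref{rule:S1} to merge the green legs; the result is one Z box whose phase vector has $a_i$ in slot $i$ for each $i$, that is, exactly $\overrightarrow a$. The fixed-point action $\ket 0\bra 0$ of the V box sitting inside \eqref{rule:VA} guarantees that the $\ket 0$-coefficient of the output is $1$, matching the convention $a_0=1$, so no spurious scalar appears on $\ket 0$; a one-line comparison with the standard interpretation $\sum_{j=0}^{d-1}a_j\ket j\bra j$ then confirms the equality. If the target form is stated purely in terms of multipliers and single-phase boxes rather than V boxes, I would finish by rewriting the V boxes produced by \eqref{rule:VA} via \eqref{rule:ZV} and \eqref{rule:VW} into the desired presentation.

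The main obstacle is precisely this multiplier bookkeeping in arbitrary dimension: one must verify that the assembly of multipliers and single-phase boxes routes each $a_i$ to its intended position without collisions and without perturbing the other components, and in composite $d$ this cannot lean on inverting multiplier labels. That is exactly what \cref{redgreenmchangelm,multipliermultlm} are designed to control, and once the routing identities are in hand the fusion step \eqref{rule:VA} is immediate.
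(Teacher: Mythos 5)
Your toolbox and overall direction do line up with the paper's proof: rule \eqref{rule:VA} is indeed the engine for turning the $d-1$ labelled single-phase boxes into one Z box, the multiplier bookkeeping is handled by \cref{redgreenmchangelm,multiplierpushlm,multipliermultlm}, and the routing computation itself is right --- the multiplier $d-i$ fixes $\ket{0}$ and sends $\ket{d-1}=\ket{-1}$ to $\ket{i}$, with no invertibility assumption on $d-i$ needed for its action on a state.

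There is, however, a genuine gap in how you describe the merge. First, a smaller slip: attaching the multiplier to the labelled box yields the map $\ket{0}\bra{0}+a_i\ket{i}\bra{d-1}$, not the diagonal $\ket{0}\bra{0}+a_i\ket{i}\bra{i}$ you wrote; only the state version $\ket{0}+a_i\ket{i}$ is correct, and obtaining the diagonal form would require a multiplier inverse --- precisely what is unavailable in composite dimension. Second, and more seriously, \eqref{rule:S1} cannot be what \enquote{merges the green legs}: Z-box fusion multiplies phase vectors componentwise (this is the $\overrightarrow{ab}$ in the rule set), and a labelled box carries coefficient $0$, not $1$, in its unlabelled slots. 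So fusing your $d-1$ routed branches with \eqref{rule:S1} would annihilate every phase and leave the vector $\overrightarrow{0}$, i.e.\@ the state $\ket{0}$, rather than $\overrightarrow{a}$. The combination of the branches is necessarily \emph{additive}, and that is exactly the role of the W-transpose/V-box structure built into \eqref{rule:VA}; it is also what reduces the $d-1$ parallel branch wires to a single wire --- as you describe them, your branches are a disconnected tensor product of the wrong type, and no rewrite rule can collapse that into a single Z box. Once this combining node is made explicit and matched against your multipliers via \eqref{rule:VW} and \eqref{rule:ZV} (which you only mention as an optional afterthought), the argument closes and coincides with the paper's derivation.
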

\begin{proof}
  \[
    \tikzfig{lemmas/zbox-single-decompositionprf}
  \]
\end{proof}

\begin{lemma}
  \label{zbox-double-decompositionlm}
  For $\overrightarrow{a} = (a_1, a_2, \, \ldots \, , a_{d - 1})$ and $\overrightarrow{b} = (b_1, b_2, \, \ldots \, , b_{d - 1})$, we have
  \[
    \tikzfig{lemmas/zbox-double-decomposition}
  \]
\end{lemma}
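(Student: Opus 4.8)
The plan is to bootstrap from the single-decomposition lemma (\cref{zbox-single-decompositionlm}) established immediately above, which already expresses a Z box carrying one phase vector $\overrightarrow{a}$ in terms of the $V$ box, $W$ nodes, and the single-component green boxes $a_1, \dotsc, a_{d-1}$. The present statement is its two-vector refinement, so the natural route is to treat $\overrightarrow{a}$ and $\overrightarrow{b}$ one at a time and then merge the resulting scaffoldings, rather than attempting a single monolithic rewrite.

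First I would apply \cref{zbox-single-decompositionlm} to the $\overrightarrow{a}$-carrying Z box, rewriting it into its $V$-box decomposition. At this stage the $\overrightarrow{b}$ data still sits in a separate green box; I would bring it into position using the spider-fusion rule \eqref{rule:S1}, recalling that fusing two green boxes multiplies their phase vectors componentwise, so whatever combination of the $a_i$ and $b_j$ the right-hand side demands is produced automatically. The remaining work is to reorganise the $V$-box and $W$-node scaffolding so that the $a_i$ and $b_j$ boxes land on their intended legs; here I would lean on the three $V$-box rules, using \eqref{rule:VA} to split and recombine single-component boxes and \eqref{rule:ZV} and \eqref{rule:VW} to commute a $V$ box past the Z-box states and past the $W$ node respectively.

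Along the way I expect the multiplier-pushing and multiplier-multiplication lemmas (\cref{multiplierpushlm,multipliermultlm}), together with the Hopf law \eqref{rule:Hopf}, to let me disconnect and recombine the green--pink connections cleanly, while \cref{scalargeneralmultlm,scalarinverseditlm} absorb any stray scalar factors generated by the normalising ($u_{m,n}$) boxes. Collecting terms should then reproduce the claimed right-hand side, and as a sanity check one can evaluate both sides under $\interp{\cdot}$ to confirm the intended entries.

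The hard part will be the combinatorial bookkeeping rather than any single deep rewrite: I must ensure that each scalar $a_i$ (respectively $b_j$) is routed to exactly the multiplier connection it is meant to decorate, with no spurious cross-terms and no leftover normalising scalars. Keeping the modular index conventions consistent throughout the rearrangement --- both $a_j = a_{j\bmod d}$ and the multiplier labels taken modulo $d$ via \eqref{rule:Mu} --- is the most error-prone point, and it is the step I would verify most carefully against the standard interpretation.
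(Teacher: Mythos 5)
Your opening move is right --- \cref{zbox-single-decompositionlm} is the lemma to build on --- but the step where you import $\overrightarrow{b}$ via \eqref{rule:S1} is a genuine gap, and the sentence claiming that fusion produces \enquote{whatever combination of the $a_i$ and $b_j$ the right-hand side demands} automatically asserts exactly what the lemma obliges you to prove. Spider fusion multiplies phase vectors componentwise, and this is fatal in either order. If you decompose the $\overrightarrow{a}$ box first, as you propose, then the only green boxes still carrying the $a$-data are the single-component boxes $a_i$, whose phase vector is $(0,\dotsc,0,a_i)$; fusing the intact $\overrightarrow{b}$ box onto any of them yields $(0\cdot b_1,\dotsc,0\cdot b_{d-2},\,a_ib_{d-1})=(0,\dotsc,0,a_ib_{d-1})$, so the components $b_1,\dotsc,b_{d-2}$ are annihilated by the zeros rather than routed anywhere. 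If instead you fuse before decomposing, the two boxes collapse to the single vector $\overrightarrow{ab}=(a_1b_1,\dotsc,a_{d-1}b_{d-1})$, and decomposing that can only ever produce boxes labelled by the diagonal products $a_ib_i$ --- not the two independent families of boxes $a_i$ and $b_j$ that a statement quantified over two arbitrary vectors must exhibit. Either way, \eqref{rule:S1} cannot do the work you assign to it.

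What is missing is the symmetric treatment: the $\overrightarrow{b}$ box must itself be decomposed by a second application of the same machinery, rather than fused into the $\overrightarrow{a}$ scaffold. This is what the paper's rewrite chain does: it applies the single-vector decomposition (the V-box rules \eqref{rule:ZV}, \eqref{rule:VA}, \eqref{rule:VW} underlying \cref{zbox-single-decompositionlm}) to each of the two Z boxes, and then merges the two resulting V/W scaffolds into one common structure using the W-node rules \eqref{rule:Sym}, \eqref{rule:Aso}, and \eqref{rule:BZW}, so that every $a_i$ and every $b_j$ survives as its own labelled box on its own leg. Your remarks about multipliers, the Hopf law, and scalar bookkeeping concern steps downstream of this merge and are reasonable as far as they go, but as written the proposal is a strategy sketch whose one decisive step is both unproven and, in the form stated, false.
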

\begin{proof}
  \[
    \tikzfig{lemmas/zbox-double-decompositionprf}
  \]
\end{proof}

\nftounf*
\begin{proof}
  First, we show that we can change the order of any two neighbouring Z boxes, $a_k$ and $a_{k + 1}$:
  \begin{align*}
    &\tikzfig{lemmas/nf-swap-1}\\
    &\tikzfig{lemmas/nf-swap-2}
  \end{align*}
  Using such neighbouring swaps, we can sort the Z boxes based on the lexicographical ordering of their multiplier connection.
  This can be done using an applicable algorithm such as bubble sort, after which our normal form is in its unique form.
\end{proof}

\subsection{Generators}
\zboxnf*
\begin{proof}
  \begin{align*}
    \tikzfig{qudit-nf-generators/zbox-0} \quad
    &\tikzfig{qudit-nf-generators/zbox-1} \\
    &\tikzfig{qudit-nf-generators/zbox-2}
  \end{align*}
\end{proof}

\wnodenf*
\begin{proof}
  \begin{align*}
    \tikzfig{qudit-nf-generators/w-0} \quad
    &\tikzfig{qudit-nf-generators/w-1} \\
    &\tikzfig{qudit-nf-generators/w-2} \\
    &\tikzfig{qudit-nf-generators/w-3} \\
    &\tikzfig{qudit-nf-generators/w-4}
  \end{align*}
\end{proof}

\begin{lemma}
  \label{vector-combine-with-multiplierslm}
  \[
    \tikzfig{lemmas/vector-combine-with-multipliersshort}
  \]
\end{lemma}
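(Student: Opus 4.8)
The plan is to reduce this statement to the VA rule \eqref{rule:VA}, which is the basic mechanism for assembling a full phase vector out of single-phase Z boxes, combined with the multiplier-manipulation lemmas established above and the two decomposition lemmas just proved. First I would apply \cref{zbox-single-decompositionlm} (and \cref{zbox-double-decompositionlm} in the two-vector case) to split the vector-phase Z box carrying $\overrightarrow{a} = (a_1,\dotsc,a_{d-1})$ into a collection of Z boxes, each carrying a single non-zero phase of the form $(0,\dotsc,0,a_i)$. This turns the problem of combining a whole vector with multipliers into the more tractable problem of tracking how one isolated phase component interacts with a single multiplier connection.

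Next I would commute the multipliers past these single-phase Z boxes. Here \cref{multiplierpushlm} lets me slide a multiplier through a green node, while \cref{multipliermultlm} and \cref{multiplieraddlm} let me fuse the resulting labels — multiplying the labels that compose in series and adding the connection counts that run in parallel, always reading labels modulo $d$ via \eqref{rule:Mu}. The effect is to relabel each single-phase component so that its phase sits on the computational basis index demanded by the multiplier structure on the target side of the equation.

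Finally I would reassemble the relabelled single-phase Z boxes into one Z box by applying \eqref{rule:VA} in the reverse direction and reading off the combined phase vector, absorbing any leftover scalars through spider fusion \eqref{rule:S1}. If the number of single-phase pieces makes a direct application unwieldy, I would instead run an induction on the number of components, combining one phase at a time and invoking \eqref{rule:VA} at each step.

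The hard part will be the index bookkeeping: I must check that after pushing a multiplier of label $i$ through its single-phase box, the phase $a_i$ lands on precisely the basis state $\ket{i}$ (taken modulo $d$) that the multiplier connection prescribes, so that the \eqref{rule:VA} recombination yields exactly $\overrightarrow{a}$ rather than a permutation or rescaling of its entries. Reconciling the modular arithmetic of the multiplier labels with the routing imposed by the multiplier connections is where the genuine content of the lemma lies; once that correspondence is pinned down, the remaining manipulations are routine applications of the lemmas cited above.
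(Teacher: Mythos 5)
Your toolkit is in the right neighbourhood --- the decomposition lemmas (\cref{zbox-single-decompositionlm,zbox-double-decompositionlm}) and the multiplier lemmas (\cref{multiplierpushlm,multipliermultlm,multiplieraddlm}) are exactly the machinery available at this point in the appendix, and opening by splitting vector-phase Z boxes into single-phase ones is a sensible first move. The fatal problem is your final step. You propose to finish by fusing the relabelled single-phase boxes back \enquote{into one Z box} via \eqref{rule:VA} and \enquote{reading off the combined phase vector}. But this lemma, like the other lemmas of this subsection (\cref{lem:zboxnf,lem:wnodenf,lem:hadamardnf}), asserts equality with a \emph{normal form}: a family of single-labelled Z boxes wired through multipliers to pink spiders and summed through a W tree. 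A single Z box is not that diagram, and in general cannot even denote the same vector: a lone Z box with $m$ outputs only represents states of the shape $\sum_j a_j\ket{j}^{\otimes m}$, supported on diagonal basis states, whereas the diagrams this lemma must handle --- it is the workhorse of the Hadamard normal-form proof, whose target $\sum_{j,k}\omega^{jk}\ket{jk}$ has full off-diagonal support --- are not of this shape. So recombination via \eqref{rule:VA} points in exactly the wrong direction; the proof must keep the single-phase boxes separate and assemble the normal form's multiplier/pink-spider/W-node pattern around them. Even on the most charitable single-output reading of the statement, ending at one Z box is not ending at the asserted right-hand side: you would still owe the rewrite into the normal-form pattern, which you never perform or cite.

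Second, you explicitly defer the index bookkeeping --- which multiplier weight ends up attached to which single-phase box --- calling it \enquote{the hard part \ldots\ where the genuine content of the lemma lies}. In a diagrammatic rewrite proof that bookkeeping \emph{is} the proof: the statement is precisely that after the rewrites each phase sits on a box whose multiplier connections realise the required $d$-ary pattern, and this is what the paper's chain of diagram equalities establishes step by step. A proposal that identifies its own essential step and leaves it unexecuted is a plan for a proof, not a proof; as it stands, nothing beyond the (largely correct) choice of auxiliary lemmas has been established.
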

\begin{proof}
  \begin{align*}
    \tikzfig{lemmas/vector-combine-with-multipliers-0} \quad
    &\tikzfig{lemmas/vector-combine-with-multipliers-1} \\
    &\tikzfig{lemmas/vector-combine-with-multipliers-2} \\
    &\tikzfig{lemmas/vector-combine-with-multipliers-3}
  \end{align*}
\end{proof}

\hadamardnf*
\begin{proof}
  \begin{align*}
    \tikzfig{qudit-nf-generators/hadamard-0} \quad
    &\tikzfig{qudit-nf-generators/hadamard-1} \\
    &\tikzfig{qudit-nf-generators/hadamard-2} \\
    &\tikzfig{qudit-nf-generators/hadamard-3} \\
    &\tikzfig{qudit-nf-generators/hadamard-4} \\
    &\tikzfig{qudit-nf-generators/hadamard-5} \\
    &\tikzfig{qudit-nf-generators/hadamard-6}
  \end{align*}
\end{proof}

\subsection{Partial trace}
\begin{lemma}
  \label{2pinkfusionlm}
  \[
    \tikzfig{qudit-nf-trace/lem1}
  \]
\end{lemma}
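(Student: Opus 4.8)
The plan is to reduce the fusion of the two pink spiders to the fusion of green spiders, which is already available as \eqref{rule:S1}. First I would unfold each pink spider into its defining form via \eqref{rule:HZ}, rewriting it as a green $Z$-box whose legs carry $H$-boxes on one side and $H^\dagger$-boxes on the complementary side, together with the scalar correction $u_{m,n}$ represented by the green box \tikzfig{definitions/umngbox}. The crucial point is that the placement of $H$- versus $H^\dagger$-boxes in \eqref{rule:HZ} is chosen precisely so that, when the two spiders are joined along their shared wire, an $H$-box coming from one spider meets an $H^\dagger$-box coming from the other on that internal edge. (This is exactly the feature that the remark following \eqref{rule:MB} alludes to: the flexsymmetric all-$H$ variant would instead place two boxes of the same orientation here, and since $H^2 \neq I$ for $d > 2$ this would yield a far messier rule.)

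Next I would cancel that adjacent $H^\dagger$–$H$ pair on the internal wire using the inverse-Hadamard rule \eqref{rule:HDagger}, so that the two green $Z$-boxes become directly connected by a bare wire; any residual scalar generated in the process is absorbed using \cref{hilm} together with \eqref{rule:H1}. Applying green spider fusion \eqref{rule:S1} then merges the two green boxes into a single one whose phase vector is the componentwise combination of the two inputs. For the admissible $K_j$-phases this combination is exactly $K_j + K_{j'} = K_{j+j'}$ modulo $d$, so the fused phase is again one of the valid pink-spider phases, and refolding through \eqref{rule:HZ} returns a single pink spider carrying the combined legs and the summed phase. As a consistency check, the phase-free instance of this statement coincides with the red-spider fusion rule \eqref{rule:S4} established in \cref{s4lm}.

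I expect the main obstacle to be the scalar bookkeeping rather than the diagrammatic skeleton. Each application of \eqref{rule:HZ} (unfolding the two source spiders and refolding the target) introduces a normalisation of the form $d^{(m+n-2)/2}$ with differing leg counts, and the fusion step itself changes those counts; all of these factors must be collected and shown to agree with the scalar attached to the target pink spider. This is where I would lean on \cref{scalargeneralmultlm,scalarinverseditlm} to merge and invert the accumulated scalars. The secondary subtlety, inherited from the fact that the calculus is not flexsymmetric, is orientation: one must verify leg-by-leg that every internal wire genuinely pairs an $H$-box with an $H^\dagger$-box (so that \eqref{rule:HDagger} applies) and that no free leg has been silently reoriented during the bending. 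Once the orientations are checked and the scalar identities reconciled, the stated equality follows.
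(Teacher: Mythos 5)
Your proof skeleton — unfold via \eqref{rule:HZ}, cancel an $H$/$H^\dagger$ pair with \eqref{rule:HDagger}, fuse the Z boxes with \eqref{rule:S1}, refold — is precisely how the paper proves \emph{wire}-fusion of pink spiders, i.e.\@ rule \eqref{rule:S4} in \cref{s4lm}, where one spider's output is plugged into the other's input. But \cref{2pinkfusionlm} is not that statement: it is the fusion needed as the very first step of the partial-trace proof (\cref{lem:nfpartialtrace}), where the two pink spiders of the normal form are joined \emph{output-to-output} through the green cup of \eqref{rule:S3}. After unfolding with \eqref{rule:HZ}, the two boxes meeting on that internal connection are both $H$ boxes, not an $H$/$H^\dagger$ pair: sliding an $H$ through the green cup leaves it an $H$ (the Hadamard matrix is symmetric, cf.\@ \cref{hadslidegnlm}), so \eqref{rule:HDagger} never becomes applicable. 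What one actually obtains on that connection is $H \circ H$, which for $d>2$ is not the identity but the dualiser (cf.\@ \cref{hheqhdhd}, \eqref{rule:Du}, and \cref{dboxsquarelm}, which records that the dualiser is a nontrivial involution).

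The consequence is that your conclusion is the wrong statement for this configuration: the fused object is not \enquote{a single pink spider carrying the combined legs and the summed phase}, but a pink spider in which the legs inherited from one of the two spiders carry dualisers — equivalently, have their multiplier weights negated. This sign is the entire content of the lemma, not bookkeeping: in the partial-trace proof that uses it, the fused $s \wedge t$ node is connected to the $i$-th Z box with weight $e_{s,i} - e_{t,i} \Mod{d}$ (the \enquote{$i - i \equiv 0$} step), which is exactly why class-1 boxes, whose $s$- and $t$-weights are equal, disconnect via \eqref{rule:Mu}/\eqref{rule:Hopf}. With your additive version the weights would combine to $e_{s,i} + e_{t,i}$, class-1 boxes would generally stay connected, and the downstream argument would fail for $d>2$. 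Your consistency check against \eqref{rule:S4} conflates the two situations: wire-fusion and cup-fusion genuinely differ here, precisely because the calculus is not flexsymmetric — the issue you labelled a \enquote{secondary subtlety} about orientation is in fact the crux. The repair is to handle the transpose explicitly (e.g.\@ via \cref{xtransposelm}: bending a pink spider's legs introduces dualisers on them, which can then be absorbed into the multipliers), after which your unfold–fuse–refold skeleton and scalar bookkeeping do go through.
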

\begin{proof}
  \[
    \tikzfig{qudit-nf-trace/lem1pf}
  \]
\end{proof}

\nfpartialtrace*
\begin{proof}
  First of all,
  \begin{align*}
    &\tikzfig{qudit-nf-trace/pf1-1} \\
    &\tikzfig{qudit-nf-trace/pf1-2}
  \end{align*}
  Now, the green boxes $\{a_i \}_{0\leq i \leq d^m-1}$ can be divided into two classes based on their number of connections to the $s \wedge t$ node:
  \begin{enumerate}
    \item $e_{s,i} \equiv e_{t,i} \Mod{d}$.
    \item $e_{s,i} \not\equiv e_{t,i} \Mod{d}$.
  \end{enumerate}
  For the boxes falling into class 1, we group them so that all boxes $a_{k_0}, \cdots ,a_{k_{d \minu 1}}$ in a group $k$ satisfy that the number of connections between each box and each pink node is the same, that is $e_{x,k_0} = \cdots = e_{x,k_{d \minu 1}} \eqqcolon e_{x,k}$ for all pink node index $x$.
  Note that we have exactly $d$ boxes in a group, since the multiplier from the $i$-th box to the $s \wedge t$ node has weight $e_{s \wedge t,k_i} \equiv i - i \Mod{d}$, and $i$ can be set to all elements between $0$ and $d - 1$.
  Now, for all group $k$, we unfuse their connections to the pink nodes and the W on the top as follows:
  We unfuse all such groups:
  \[
    \tikzfig{qudit-nf-trace/pf2}
  \]
  then we can combine the boxes in a group the following way:
  \begin{align*}
    &\tikzfig{qudit-nf-trace/pf3-1} \\
    &\tikzfig{qudit-nf-trace/pf3-2}
  \end{align*}
  where
  \[
    \Sigma_k \coloneqq \sum_{j = 0}^{d - 1} a_{k_j}.
  \]

  After combining each box falling into class 1, we can eliminate the remaining boxes that fall into the 2-nd class.
  Let us denote the indices of class 2 boxes with $\ell_0, \cdots, \ell_p$ for some adequate $p$.
  Then, all of these boxes can be eliminated as follows:
  \begin{align*}
    &\tikzfig{qudit-nf-trace/pf4-1}\\
    &\tikzfig{qudit-nf-trace/pf4-2}
  \end{align*}
\end{proof}
\begin{remark}
  In case we take the trace a normal form with two outputs, we get the normal form of a scalar as follows:
  \[
    \tikzfig{qudit-nf-trace/scalar}
  \]
\end{remark}

\subsection{Tensor product}
\nftensorproduct*
\begin{proof}
  \begin{align*}
    &\tikzfig{qudit-tensor/tensorpf1} \\
    &\tikzfig{qudit-tensor/tensorpf2} \\
    &\tikzfig{qudit-tensor/tensorpf3} \\
    &\tikzfig{qudit-tensor/tensorpf4} \\
    &\tikzfig{qudit-tensor/tensorpf5} \\
    &\tikzfig{qudit-tensor/tensorpf6} \\
    &\tikzfig{qudit-tensor/tensorpf7} \\
    &\tikzfig{qudit-tensor/tensorpf8} \\
    &\tikzfig{qudit-tensor/tensorpf9} \\
    &\tikzfig{qudit-tensor/tensorpf10} \\
    &\tikzfig{qudit-tensor/tensorpf11} \\
    &\tikzfig{qudit-tensor/tensorpf12}
  \end{align*}
  \vspace*{-1cm}
\end{proof}

\end{document}